\documentclass[12pt,notitlepage]{amsart}%
\usepackage{amssymb}
\usepackage{amsfonts}
\usepackage{graphicx}
\usepackage{amscd}
\usepackage{amsmath}
\usepackage{amsxtra}
\usepackage{booktabs}
\usepackage{rotating}
\usepackage{footmisc}%
\usepackage{hyperref}
\usepackage{float}
\usepackage{xcolor}
\usepackage{soul}
\sethlcolor{yellow}

\usepackage{algpseudocode}
\usepackage{colortbl}
\usepackage{longtable}
\usepackage{adjustbox}
\usepackage{tabularx}
\usepackage{threeparttable}
\usepackage{makecell}
\usepackage{array}

\newcolumntype{Y}{>{\centering\arraybackslash}X}

\usepackage{subcaption}

\usepackage{tikz}
\usetikzlibrary{shapes.geometric,arrows.meta,positioning,backgrounds,calc,fit}

\newtheorem{theorem}{Theorem}
\theoremstyle{plain}

\newtheorem{algorithm}{Algorithm}

\newtheorem{lemma}{Lemma}

\newtheorem{proposition}{Proposition}

\numberwithin{equation}{section}
\numberwithin{theorem}{section}
\numberwithin{lemma}{section}
\numberwithin{proposition}{section}
\numberwithin{corollary}{section}

\ifx\pdfoutput\relax\let\pdfoutput=\undefined\fi
\newcount\msipdfoutput
\ifx\pdfoutput\undefined\else
\ifcase\pdfoutput\else
\ifx\paperwidth\undefined\else
\ifdim\paperheight=0pt\relax\else\pdfpageheight\paperheight\fi
\ifdim\paperwidth=0pt\relax\else\pdfpagewidth\paperwidth\fi
\fi\fi\fi
\begin{document}
\title[Patterns in Quantum Hierarchical CNNs]{Pattern Formation in Quantum Hierarchical Cellular Neural Networks}
\author[Z\'{u}\~{n}iga-Galindo]{W. A. Z\'{u}\~{n}iga-Galindo}
\address{University of Texas Rio Grande Valley\\
School of Mathematical \& Statistical Sciences\\
One West University Blvd\\
Brownsville, TX 78520, United States}
\email{wilson.zunigagalindo1@utrgv.edu}
\author[Zambrano-Luna]{B. A. Zambrano-Luna}
\address{ Universidad Sergio Arboleda\\ Escuela de Ciencias Exactas e Ingeniería\\
Calle 74 No. 14-14, Bogotá, Colombia}
\email{brian.zambrano@usa.edu.co}
\author[Indoung]{Chayapuntika Indoung}
\address{University of Texas Rio Grande Valley\\
School of Mathematical \& Statistical Sciences\\
One West University Blvd\\
 Edinburg, TX 78539, United States}
\email{chayapuntika.indoung01@utrgv.edu}
\begin{abstract}
We present a new class of quantum neural networks (QNNs) whose states are
solutions of $p$-adic Schr\"{o}dinger equations with a non-local potential
that controls the interaction between the neurons. These equations are
obtained as Wick rotations of the state equations of $p$-adic cellular neural networks (CNNs). The $p$-adic CNNs arise as continuous limits of large discrete hierarchical neural
networks (NNs). The CNNs are bio-inspired by the
Wilson--Cowan model, which describes the macroscopic dynamics of large
populations of neurons. We provide a detailed study of the discretization of
the new $p$-adic Schr\"{o}dinger equations, which allows the construction of
new QNNs on simple graphs. We also conduct detailed numerical simulations,
offering a clear insight into the functioning of the new QNNs. At a
mathematical level, we show the existence of global solutions for the new $p$%
-adic Schr\"{o}dinger equations.

\end{abstract}
\maketitle

\section{\label{Section_0}Introduction}

In the 1980s, Chua and Yang introduced a new natural computing paradigm,
cellular neural networks (CNNs), which include cellular automata as a special case. CNNs are a parallel computing paradigm similar to neural networks, where
each cell (a processing unit) only communicates directly with its nearest
neighbors; this localized, analog processing enables the real-time, high-speed
solutions of complex computational problems, particularly those involving
spatial and temporal data, such as image and video processing; see e.g.,
\cite{Chua-Tamas}-\cite{Slavova}, and the references therein. The CNN model is
bioinspired by the Wilson--Cowan model, which describes the evolution of
excitatory and inhibitory activity in a synaptically coupled neuronal network;
see, e.g., \cite{Wilson-Cowan 1}-\cite{Neural-Fields} and the references therein.

Quantum cellular neural networks (QCNNs) are computational models that
integrate quantum mechanics with the architecture of classical cellular neural
networks; they were introduced by Toth et al. in the 1990s, \cite{Toth et al}.
In the original proposal, quantum-dot cellular automata were used as
fundamental processing units. The dynamics of the QCNN cells are governed by a
time-dependent Schr\"{o}dinger equation, enabling the network to operate on
continuous values rather than just binary ones. In this paper, we initiate the rigorous study of the $p$-adic QCNNs as tools for quantum computing. 

Before introducing the formal construction, it is useful to explain intuitively why $p$-adic numbers are a natural domain for hierarchical neural networks. The $p$-adic integers $\mathbb{Z}_p$ can be visualized as the boundary of an infinite rooted tree in which every node has $p$ children: two points $x,y\in\mathbb{Z}_p$ are ``close'' -- in the $p$-adic sense, $\left\vert x-y\right\vert _p$ small -- exactly when they share a long common branch of that tree, i.e., when their $p$-adic digit expansions agree on many leading digits. This is an ultrametric, hierarchical notion of proximity, fundamentally different from the Euclidean one: closeness is organized by shared ancestry in the tree rather than by distance along a line. Such a structure is a natural mathematical idealization of systems that are organized in nested scales or clusters -- for instance, the hierarchical organization of neurons into cells, columns, areas, and lobes in biological neural systems, which is precisely the Wilson--Cowan-inspired motivation behind the classical $p$-adic CNNs recalled below. This intuition is what motivates working on $\mathbb{Z}_p$ rather than on a Euclidean domain $\mathbb{R}^n$, as is customary in most existing QNN proposals.

The proposal of constructing quantum neural networks (QNNs) based on the
Schr\"{o}dinger equation is not new, see \cite{Nakajima et al}-\cite{Schuld et
al}. In \cite{Behera et al}, the authors introduced a recurrent quantum neural
network (RQNN) model for eye tracking of moving targets based on a non-linear
Schr\"{o}dinger equation. On the other hand, the non-linear
Schr\"{o}dinger equation has been used extensively in image processing; see
e.g. \cite{Singh et al}-\cite{Honigman et al}.

The construction of mathematical models for hierarchical neural networks (NNs) is an area of
intense activity. Here, it is relevant to note that the $p$-adic numbers are
naturally organized in a tree-like structure, which can be used to construct
models of hierarchical NNs. In \cite{Zambrano-Zuniga-1}, the first two authors
introduced the $p$-adic cellular neural networks, which are mathematical
generalizations of classical CNNs. The new networks have an infinite number of
cells organized hierarchically in a tree-like structure, with an infinite
number of hidden layers. Intuitively, the $p$-adic CNNs occur as limits of
large hierarchical discrete CNNs. In \cite{Zambrano-Zuniga-2}-\cite{Zuniga et
al}, they developed image processing algorithms using $p$-adic CNNs. In this
paper, we introduce the quantum counterparts of the $p$-adic CNNs.

The continuous-time quantum walks (CTQWs) are fundamental tools for
developing efficient quantum algorithms, simulating complex quantum systems,
and potentially achieving quantum advantages in various computational tasks, \cite{Farhi-Gutman}-\cite{Childs et al}. In \cite{Zuniga-QM-2}, see also
\cite{Zuniga-Mayes}-\cite{Zuniga-Chacon}, the first author showed that a large
class of $p$-adic Schr\"{o}dinger equations is the scaling limit of certain
continuous-time quantum Markov chains (CTQMCs). In practice, discretizing such
an equation yields a CTQMC. This construction includes, as a particular case,
the CTQWs constructed using adjacency matrices.

In this paper, we introduce a new class of $p$-adic non-linear Schr\"{o}dinger
equations obtained as the Wick rotation of the state equations of the $p$-adic
CNNs. The equations have the form
\begin{equation}
i\frac{\partial}{\partial t}\Psi\left(  x,t\right)  =\Psi\left(  x,t\right)
-J\left(  \left\vert x\right\vert _{p}\right)  \ast\Psi\left(  x,t\right)  +%
{\displaystyle\int\limits_{\mathbb{Z}_{p}}}
W\left(  x,y\right)  \phi\left(  \Psi\left(  y,t\right)  \right)  dy+Z(x,t),
\label{Model 3A}%
\end{equation}
where $J:\left[  0,1\right]  \rightarrow\mathbb{R}_{\geq0}$, and
$\int_{\mathbb{Z}_{p}}J\left(  \left\vert x\right\vert _{p}\right)  dx=1$, and
$W\left(  x,y\right)  $, $Z(x,t)$, $\Psi\left(  x,t\right)  $ are complex-valued
functions; $\mathbb{Z}_{p}$ is the $p$-adic unit ball, which is an additive
group, and $dx$ denotes a Haar measure on this group. When $W\left(
x,y\right)  =0$, (\ref{Model 3A}) is a  CTQMC. In this case, if
$\left\Vert \Psi\left(  \cdot,0\right)  \right\Vert _{2}=1$, then $\left\Vert
\Psi\left(  \cdot,t\right)  \right\Vert _{2}=1$ for any $t\geq0$, which means
that (\ref{Model 3A}) describes a unitary evolution. Rigorous discretizations
of (\ref{Model 3A}) (with $W\left(  x,y\right)  =0$) are discussed in
\cite{Zuniga-QM-2}-\cite{Zuniga-Chacon}. On the other hand, when $W\left(  x,y\right)  \neq0$,
(\ref{Model 3A}) describes a non-unitary evolution. Decoherence is the process
where a quantum system loses its unique quantum properties (such as
superposition and entanglement) due to interaction with its surrounding environment. Non-unitary evolution (described, for instance, by Lindblad master equations) is the mathematical framework used to model that process. Based on these ideas, we propose that (\ref{Model 3A}) is a new kind of master equation that describes an open quantum network. This interpretation has been confirmed in all our numerical simulations.

By expressing the functions $W\left(  x,y\right)  $, $Z(x,t)$, $\Psi\left(
x,t\right)  $ in terms of their real and imaginary parts, the
integro-differential equation (\ref{Model 3A}) corresponds to a
Wilson-Cowan-type system for $\operatorname{Re}\left(  \Psi\left(  x,t\right)
\right)  $ and $\operatorname{Im}\left(  \Psi\left(  x,t\right)  \right)  $.
This shows directly that (\ref{Model 3A}) is a quantum version of a neural
activity model; see \cite{Zuniga-Entropy}, and the references therein.

Suitable discretizations of (\ref{Model 3A}) give new quantum networks on
simple graphs. Let $(\mathcal{G},V,E)$ be a simple graph, with adjacency
matrix $\boldsymbol{A}=\left[  A_{I,K}\right]  _{I,K\in V}$. We set
$\boldsymbol{D}=\left[  D_{I,K}\right]  _{I,K\in V}$ to be the diagonal matrix
defined as $D_{I,K}=val(K)\delta_{I,K}$, where $val(K)$ denotes the valence
of vertex $K$, and $\delta_{I,K}$ is the Kronecker symbol. A new example of a QNN (obtained using our theory) is given by the following discrete Schr\"{o}dinger equation on
$\mathcal{G}$:%
\begin{gather}
i\frac{\partial}{\partial t}\left[  \Psi_{I}\left(  t\right)  \right]  _{I\in
V}=-p^{-l}\left(  \boldsymbol{A}-\boldsymbol{D}\right)  \left[  \Psi_{I}\left(  t\right)
\right]  _{I\in V}\label{Model 1}\\
+\left[  p^{-l}%
{\displaystyle\sum\limits_{K\in V}}
W_{I,K}\phi\left(  \Psi_{K}\left(  t\right)  \right)  +Z_{I}(t)\right]  _{I\in
V};\nonumber
\end{gather}
where $\Psi_{I}\left(  t\right)  \in\mathbb{C}$, for $t\geq0$, $I\in V$, is
the quantum state of vertex $I$, and the column vector $\left[  \Psi
_{I}\left(  t\right)  \right]  _{I\in V}$ is the wavefunction of the network;
$W_{I,K}\in\mathbb{C}$, for $I,K\in V$, is the weight of the connection
between the neurons; $Z_{I}(t)\in\mathbb{C}$, for $t\geq0$, $I\in V$, is the
bias of the neuron $I$ at time $t$, and $\phi:\mathbb{R}\rightarrow\mathbb{R}%
$, with $\phi\left(  a+ib\right)  =\phi\left(  a\right)  +i\phi\left(
b\right)  $, is the activation function. Finally, $l$ is a positive integer
and $p$ is a prime number. The quantity $p^{-l}$ is a scale factor. The network
(\ref{Model 1}) is a quantum version of a $p$-adic CNN. If $\phi=0$, the
Schr\"{o}dinger equation%
\[
i\frac{\partial}{\partial t}\left[  \Psi_{I}\left(  t\right)  \right]  _{I\in
V}=-p^{-l}\left( \boldsymbol{A}- \boldsymbol{D}\right)  \left[  \Psi_{I}\left(  t\right)
\right]  _{I\in V}%
\]
describes the standard CTQWs on graphs; see, e.g. \cite{Farhi-Gutman}%
-\cite{Zuniga-Chacon}.

It is relevant to mention that there is another type of Schr\"{o}dinger
equation on graphs; see \cite{Noja} and the references therein. In this
context, a Schr\"{o}dinger equation on a graph is a collection of ordinary
Schr\"{o}dinger equations parametrized by the graph's edges, where the
vertices encode boundary conditions for the equations. To the best of our
understanding, this type of Schr\"{o}dinger equation is not related at all to
the equations considered here. On the other hand, in \cite{Verdon et al}, see
also \cite{Ceschini et al}, the quantum graph neural networks (QGNNs) were
introduced. In this approach, a discrete Hamiltonian is constructed from a graph, and the network is defined by an evolution operator obtained as
products of exponentials of this Hamiltonian. These networks are completely different from the ones considered here.

The broader rationale for introducing yet another QNN architecture is best explained through the limitations of the existing proposals just reviewed. Variational quantum circuits, the dominant near-term paradigm, are finite-dimensional, strictly unitary, and carry no intrinsic hierarchical structure beyond what the circuit layout imposes; non-linearity enters only through classical post-processing of measurement outcomes. Quantum graph neural networks are likewise unitary and discrete-time, with no explicit non-linear activation function built into the quantum dynamics itself. Schr\"{o}dinger-equation-based QNNs defined on Euclidean domains $\mathbb{R}^n$ do not carry a native multiscale or hierarchical geometry; any such structure must be imposed externally. The $p$-adic QCNNs introduced here address these three limitations simultaneously: the domain $\mathbb{Z}_p$ carries a tree-like topology, non-linearity enters directly into the Schr\"{o}dinger dynamics through the activation function $\phi$, and a single governing equation, \mbox{\eqref{Model 3A}} below, interpolates continuously between closed (unitary) and open (non-unitary) evolution depending on whether the weight kernel $W$ vanishes. A detailed, criterion-by-criterion comparison against six representative QNN families -- variational quantum circuits, the original quantum-dot QCNNs of \mbox{\cite{Toth et al}}, quantum graph neural networks, Euclidean Schr\"{o}dinger-based QNNs, continuous-time quantum walks, and the classical $p$-adic CNNs from which the present model is derived -- is given in Section \mbox{\ref{Section_6A}}; the comparison there also identifies the specific gap in the literature that motivates the present construction.

It is also useful to state at the outset how the proposed framework relates to practical quantum computing, since it is formulated as a Schr\"{o}dinger-equation-based dynamical system rather than as a variational quantum circuit implemented on quantum hardware. The present work should be understood as a rigorous mathematical and theoretical foundation, with a partially implementable linear sector: the free evolution (obtained when $W=0$ and $Z=0$) is genuinely unitary and, as discussed in Section \mbox{\ref{Section_7A}}, admits in principle a gate-based realization via a quantum-Fourier-transform-type decomposition (for $p=2$, this coincides with the Walsh--Hadamard transform on $l$ qubits). The non-linear and open-system regimes ($W\neq0$ or $Z\neq0$), which are central to the pattern-formation and decoherence phenomena studied numerically in Section \mbox{\ref{Section_6}}, do not currently correspond to sequences of unitary gates; at this stage they constitute a theoretical, quantum-inspired model of an open quantum network, and their hardware realization is identified as an open problem in Section \mbox{\ref{Section_7A}} and in the Limitations discussion of Section \mbox{\ref{Section_7}}.

Beyond its relevance to quantum computing, the proposed framework is also of intrinsic interest to non-linear science. Equation \mbox{\eqref{Model 3A}} is a genuinely non-linear, non-local dynamical system on a non-Archimedean (ultrametric) space, so its pattern-forming behavior is of interest independently of any quantum-computational motivation. Moreover, since setting $W=0$ and $\phi=0$ recovers the standard continuous-time quantum walk on a graph, the framework strictly generalizes CTQWs and connects naturally to the graph-based quantum-algorithms literature. Finally, because the governing equation is obtained from the classical Wilson--Cowan-inspired $p$-adic CNNs by a Wick rotation, it provides a concrete mathematical bridge between classical neural pattern formation and a quantum analogue, which is of independent interest for the study of quantum models of neural and brain dynamics (see Section \mbox{\ref{Section_7}}).

We conclude this Introduction by summarizing the manuscript's objectives and principal contributions. The primary objective is to construct and rigorously analyze a class of quantum neural networks whose states solve $p$-adic Schr\"{o}dinger equations on a hierarchical (ultrametric) domain, and to understand, both analytically and numerically, how a non-local weight kernel drives the transition between closed (unitary) and open (non-unitary) quantum dynamics. The main contributions are: (a) a new class of nonlinear $p$-adic Schr\"{o}dinger equations, obtained as the Wick rotation of the state equation of the $p$-adic CNNs of \mbox{\cite{Zambrano-Zuniga-1}}; (b) a global-existence and uniqueness result for these equations (Appendix); (c) a rigorous discretization scheme that produces finite-dimensional QNNs on simple graphs (Sections \mbox{\ref{Section_4}}--\mbox{\ref{Section_5}}); (d) a systematic, criterion-by-criterion comparison against six existing QNN families, clarifying the novelty of the present approach (Section \mbox{\ref{Section_6A}}); (e) a detailed numerical study characterizing the unitary-to-non-unitary transition and the resulting decoherence and pattern-formation phenomena (Section \mbox{\ref{Section_6}}); and (f) an analysis of computational complexity, memory cost, and scalability, together with a discussion of the prospects for quantum hardware implementation (Section \mbox{\ref{Section_7A}}).

The paper is organized as follows. In Section \ref{Section_2}, we present some basic results from $p$-adic analysis that are needed here. In Section
\ref{Section_3}, we introduce the $p$-adic QNNs; Section \ref{Section_4} is
dedicated to the discretization of the QNNs. In Section \ref{Section_5}, new
QNNs on simple graphs are presented. Section \ref{Section_6A} presents a discussion comparing quantum machine learning (QML) against the new $p$-adic QCNNs. Section \ref{Section_6} is dedicated to numerical simulations. Section \ref{Section_7A} discusses the computational complexity of the $p$-adic QCNNs. Finally, in Section \ref{Section_7}, we present the conclusions and a general discussion. The Appendix is dedicated to the existence of global solutions for certain $p$-adic Schr\"{o}dinger equations. 

Figure~\ref{fig:workflow} provides a schematic overview of the entire
construction, from the Wilson--Cowan biological model to the final QNN on a
simple graph, and may serve as a guide for the reader throughout the paper.

\begin{figure}[t]
\centering
\tikzset{
  biobox/.style={rectangle, rounded corners=7pt,
    draw=green!55!black, line width=1.1pt, fill=green!8,
    text width=7.6cm, minimum height=1.5cm, align=center},
  classicbox/.style={rectangle, rounded corners=7pt,
    draw=teal!65!black, line width=1.1pt, fill=teal!7,
    text width=7.6cm, minimum height=1.5cm, align=center},
  contqbox/.style={rectangle, rounded corners=7pt,
    draw=blue!65!black, line width=1.1pt, fill=blue!6,
    text width=7.6cm, minimum height=1.5cm, align=center},
  discqbox/.style={rectangle, rounded corners=7pt,
    draw=violet!60!black, line width=1.1pt, fill=violet!5,
    text width=7.6cm, minimum height=1.5cm, align=center},
  ctqwbox/.style={rectangle, rounded corners=7pt,
    draw=orange!65!black, line width=1.1pt, fill=orange!7,
    text width=3.2cm, minimum height=1.5cm, align=center},
  mainarr/.style={-{Stealth[length=8pt,width=6pt]},
    line width=1.2pt, color=gray!60!black},
  sidarr/.style={-{Stealth[length=8pt,width=6pt]},
    line width=1.2pt, color=orange!70!black, dashed},
  albl/.style={font=\scriptsize\itshape,
    fill=white, inner sep=2pt, text=gray!60!black}
}
\scalebox{0.80}{%
\begin{tikzpicture}[every node/.style={font=\small}]

\node[biobox] (WC) {%
  \textbf{Wilson--Cowan Neural Field Model}\\[3pt]
  {\scriptsize Neural activity on $(\Omega,d\mu)$:
  $\;d\boldsymbol{y}/dt = -\boldsymbol{H}\boldsymbol{y}
    + \int_\Omega W\phi(\boldsymbol{y})\,d\mu + \boldsymbol{\eta}$}\\[2pt]
  {\scriptsize\textit{Real-valued; dissipative (Markovian) dynamics;
  eq.~\eqref{Eq_1}}}%
};

\node[classicbox, below=1.6cm of WC] (pCNN) {%
  \textbf{$p$-Adic CNN on $\mathbb{Z}_p$}\\[3pt]
  {\scriptsize $\Omega = \mathbb{Z}_p$, $d\mu = dx$ (Haar);\;
  $\boldsymbol{H} = J\bigl(|x|_p\bigr){*}\cdot - \mathrm{Id}$}\\[2pt]
  {\scriptsize\textit{Hierarchical tree topology; infinitely many
  layers; \cite{Zambrano-Zuniga-1}--\cite{Zuniga et al}}}%
};

\node[contqbox, below=1.6cm of pCNN] (pQNN) {%
  \textbf{$p$-Adic QNN on $\mathbb{Z}_p$}\\[3pt]
  {\scriptsize $i\,\partial_t\Psi = \boldsymbol{H}\Psi
    + \int_{\mathbb{Z}_p} W(x,y)\,\phi(\Psi(y,t))\,dy + Z(x,t)$}\\[2pt]
  {\scriptsize\textit{Complex-valued; unitary if $W\!=\!Z\!=\!0$;\;
  open system if $W\!\neq\!0$ or $Z\!\neq\!0$;\;
  eq.~\eqref{EQ_0A}}}%
};

\node[discqbox, below=1.6cm of pQNN] (discQNN) {%
  \textbf{Discrete QNN on $G_l = \mathbb{Z}_p/p^l\mathbb{Z}_p$}\\[3pt]
  {\scriptsize $i\,\dot{\Psi}_I
    = -[\boldsymbol{J}^{(l)}\boldsymbol{\Psi}]_I
    + p^{-l}\!\sum_{K\in G_l} W_{I,K}\,\phi(\Psi_K) + Z_I$,\;
  $I \in G_l$}\\[2pt]
  {\scriptsize\textit{$N\!=\!p^l$ neurons;\;
  $\boldsymbol{J}^{(l)}$ = convolution matrix on
  $G_l\!\cong\!\mathbb{Z}/p^l\mathbb{Z}$;\;
  eq.~\eqref{Network_Type_I}}}%
};

\node[discqbox, below=1.6cm of discQNN] (graphQNN) {%
  \textbf{QNN on Simple Graph $\mathcal{G} = (V,E)$}\\[3pt]
  {\scriptsize $J^{(l)}_{I,K} = p^{-l}A_{I,K}$,\quad
  $J^{(l)}_{I,I} = -p^{-l}\mathrm{val}(I)$,\quad $I,K\in G_l^0$}\\[2pt]
  {\scriptsize\textit{Vertices $=$ neurons;\;
  adjacency matrix $A$ determines $\boldsymbol{J}^{(l)}$;\;
  eq.~\eqref{Matrix_CTQW}}}%
};

\node[ctqwbox, right=1.8cm of graphQNN] (CTQW) {%
  \textbf{CTQW on $\mathcal{G}$}\\[3pt]
  {\scriptsize $i\,\dot{\Psi}_I
    = -[\boldsymbol{J}^{(l)}\boldsymbol{\Psi}]_I$}\\[3pt]
  {\scriptsize $W\!=\!0$,\; $\phi\!=\!0$}\\[2pt]
  {\scriptsize \cite{Farhi-Gutman}--\cite{Zuniga-Chacon}}%
};

\draw[mainarr] (WC.south) --
  node[albl, right=4pt]
    {$\Omega = \mathbb{Z}_p$;\;
     $\boldsymbol{H} = J(|x|_p){*}\cdot - \mathrm{Id}$
     \quad (Sec.~\ref{Section_3})}
  (pCNN.north);

\draw[mainarr] (pCNN.south) --
  node[albl, right=4pt]
    {Wick rotation:\; $t \;\to\; it$
     \quad (Sec.~\ref{Section_3})}
  (pQNN.north);

\draw[mainarr] (pQNN.south) --
  node[albl, right=4pt]
    {Discretize:\;
     $\Psi\in\mathcal{D}_l(\mathbb{Z}_p)\cong\mathbb{C}^{p^l}$
     \quad (Sec.~\ref{Section_4})}
  (discQNN.north);

\draw[mainarr] (discQNN.south) --
  node[albl, right=4pt]
    {Set $J^{(l)}_{I,K}$ from adjacency matrix of $\mathcal{G}$
     \quad (Sec.~\ref{Section_5})}
  (graphQNN.north);

\draw[sidarr] (graphQNN.east) --
  node[albl, above=3pt] {$W\!=\!0$,\; $\phi\!=\!0$}
  (CTQW.west);

\begin{scope}[on background layer]
  \node[rounded corners=10pt, fill=gray!4, draw=gray!25,
        line width=0.6pt,
        fit=(WC)(pCNN)(pQNN)(discQNN)(graphQNN),
        inner sep=8pt] {};
\end{scope}

\end{tikzpicture}%
}%
    \caption{Schematic overview of the construction of the $p$-adic QCNNs.
  \textbf{Step~1:} The neurons are placed on the $p$-adic integers $\mathbb{Z}_p$
  (a group with tree-like ultrametric topology), and the diffusion generator
  $\mathbf{H} = J_\alpha{*}\cdot - \mathrm{Id}$ is chosen, yielding the $p$-adic CNN
  (Section~3).
  \textbf{Step~2:} The Wick rotation $t\to it$ promotes the real-valued,
  dissipative CNN dynamics to complex-valued Schr\"odinger dynamics,
  giving the $p$-adic QNN (3.3) (Section~3).
  \textbf{Step~3:} Restricting solutions to the finite-dimensional space
  $\mathcal{D}_l(\mathbb{Z}_p)\cong\mathbb{C}^{p^l}$ produces a discrete
  quantum network of $N=p^l$ neurons on the quotient group
  $G_l = \mathbb{Z}_p/p^l\mathbb{Z}_p$ (Section~4).
  \textbf{Step~4:} Choosing the Hamiltonian entries from the adjacency matrix
  of a simple graph $G$ embeds the network on $G$ (Section~5).
  Setting $W=0$ and $\phi=0$ in any of the lower two models recovers
  the standard continuous-time quantum walk (CTQW) on $G$
  (\cite{Farhi-Gutman}--\cite{Zuniga-Chacon}).}
    \label{fig:workflow}
\end{figure}

\clearpage

\section{ \label{Section_2}Basic concepts of \texorpdfstring{$p$}{p}-adic analysis}

In this section, we fix the notation and collect some basic results on
$p$-adic analysis that we use in this paper. For a detailed exposition on
$p$-adic analysis, the reader may consult \cite{V-V-Z}-\cite{Kochubei}.

\subsection{$p$-Adic numbers}

From now on, we use $p$ to denote a fixed prime number. Any non-zero $p$-adic number $x$ has a unique expansion of the form

\begin{equation}
x=x_{-k}p^{-k}+x_{-k+1}p^{-k+1}+\ldots+x_{0}+x_{1}p+\ldots,\text{
}\label{p-adic-number}%
\end{equation}
with $x_{-k}\neq0$, where $k$ is an integer, and the $x_{j}$s are numbers
from the set $\left\{  0,1,\ldots,p-1\right\}  $. The set of all possible
sequences of the form (\ref{p-adic-number}) constitutes the field of $p$-adic
numbers $\mathbb{Q}_{p}$. There are natural field operations, sum and
multiplication, on series of form (\ref{p-adic-number}). There is also a norm
in $\mathbb{Q}_{p}$ defined as $\left\vert x\right\vert _{p}=p^{-ord(x)}$,
where $ord_{p}(x)=ord(x)=-k$, for a nonzero $p$-adic number $x$. By definition
$ord(0)=\infty$. The field of $p$-adic numbers with the distance induced by
$\left\vert \cdot\right\vert _{p}$ is a complete ultrametric space. The
ultrametric property refers to the fact that $\left\vert x-y\right\vert
_{p}\leq\max\left\{  \left\vert x-z\right\vert _{p},\left\vert z-y\right\vert
_{p}\right\}  $ for any $x$, $y$, $z$ in $\mathbb{Q}_{p}$. The $p$-adic
integers, which are sequences of the form (\ref{p-adic-number}) with $-k\geq0$,
constitute the unit ball $\mathbb{Z}_{p}$. The unit ball is an infinite rooted
tree with fractal structure. As a topological space $\mathbb{Q}_{p}$ is
homeomorphic to a Cantor-like subset of the real line, see, e.g.,
\cite{V-V-Z}, \cite{Alberio et al}. $\mathbb{Q}_{p}$ is a paramount example of a totally disconnected space, with a very rich mathematical structure.

\subsection{$p-$Adic balls and spheres}

The ball with center at $a\in\mathbb{Q}_{p}$ and radius $p^{-l}$ is the set
\[
B_{-l}(a)=\left\{  x\in\mathbb{Q}_{p};\left\vert x-a\right\vert _{p}\leq
p^{-l}\right\}  =a+p^{l}\mathbb{Z}_{p},
\]
and the sphere with center at $a\in\mathbb{Q}_{p}$ and radius $p^{-l}$ is the
set%
\[
S_{-l}(a)=\left\{  x\in\mathbb{Q}_{p};\left\vert x-a\right\vert _{p}%
=p^{-l}\right\}  =a+p^{l}\mathbb{Z}_{p}^{\times},
\]
where
\[
\mathbb{Z}_{p}^{\times}=\left\{  x\in\mathbb{Z}_{p};\left\vert x\right\vert
_{p}=1\right\}  .
\]
The ball $B_{-l}(0)$ is the unit ball $\mathbb{Z}_{p}$. If $a\in\mathbb{Z}%
_{p}$ and $l$ is a non-negative integer,%
\[
S_{-l}(a)\subset B_{-l}(a)\subset\mathbb{Z}_{p}.
\]
Furthermore,%
\[
B_{-l}(a)=S_{-l}(a)%
{\displaystyle\bigsqcup}
B_{-\left(  l+1\right)  }(a),
\]
where $%
{\textstyle\bigsqcup}
$\ denotes disjoint union.

\subsection{Test functions}

A function $\varphi:\mathbb{Q}_{p}\rightarrow\mathbb{C}$ is called locally
constant if for any $a\in\mathbb{Q}_{p}$, there is an integer $l=l(a)$, such
that
\[
\varphi\left(  a+x\right)  =\varphi\left(  a\right)  \text{ for any }%
|x|_{p}\leq p^{l}.
\]
The set of functions for which $l=l\left(  \varphi\right)  $ depends only on
$\varphi$ forms a $\mathbb{C}$-vector space denoted as $\mathcal{U}%
_{loc}\left(  \mathbb{Q}_{p}\right)  $. We call $l\left(  \varphi\right)  $
the exponent of local constancy. If $\varphi\in\mathcal{U}_{loc}\left(
\mathbb{Q}_{p}\right)  $ has compact support, we say that $\varphi$ is a test
function. Any function of this type is a linear combination of characteristic
functions of balls. We denote by $\mathcal{D}(\mathbb{Q}_{p})$ the complex
vector space of test functions. There is a natural integration theory so that
$\int_{\mathbb{Q}_{p}}\varphi\left(  x\right)  dx$ gives a well-defined
complex number. The measure $dx$ is the Haar measure of $\mathbb{Q}_{p}$,
\cite{Halmos}.

\subsection{Lebesgue spaces}

Let $U\subset\mathbb{Q}_{p}$ be an open subset. We denote by $\mathcal{D}(U)$ the $\mathbb{C}$-vector space of test functions with supports
contained in $U$; then, $\mathcal{D}(U)$ is dense in
\[
L^{\rho}(U)=\left\{  \varphi:U\rightarrow\mathbb{C};\left\Vert
\varphi\right\Vert _{\rho}=\left\{
{\displaystyle\int\limits_{U}}
\left\vert \varphi\left(  x\right)  \right\vert ^{\rho}dx\right\}  ^{\frac
{1}{\rho}}<\infty\right\}  ,
\]
for $1\leq\rho<\infty$, see, e.g., \cite[Section 4.3]{Alberio et al}.

We now take $U=\mathbb{Z}_{p}$ and $\rho=2$. The density of $\mathcal{D}%
(\mathbb{Z}_{p})$ in $L^{2}\left(  \mathbb{Z}_{p}\right)  $ means that given
$f\in L^{2}\left(  \mathbb{Z}_{p}\right)  $ and $\epsilon>0$, there exists a
nonnegative integer $l$, and a function
\[
\varphi^{\left(  l\right)  }\left(  x\right)  =\sum_{I\in G_{l}}\varphi
_{I,l}\Omega\left(  p^{l}\left\vert x-I\right\vert _{p}\right)  \in
\mathcal{D}(\mathbb{Z}_{p}),
\]
such that $\left\Vert f-\varphi^{\left(  l\right)  }\right\Vert _{2}<\epsilon
$. Here, $I=I_{0}+\ldots+I_{l-1}p^{l-1}$, with $I_{k}\in\left\{
0,\ldots,p-1\right\}  $, $\varphi_{I,l}\in\mathbb{C}$, and $\Omega\left(
p^{l}\left\vert x-I\right\vert _{p}\right)  $ is the characteristic function
of the ball $I+p^{l}\mathbb{Z}_{p}$. The function $\varphi^{\left(  l\right)
}$ is a discretization of $f$; the functions $\varphi^{\left(  l\right)  }$
constitute a $\mathbb{C}$-vector space $\mathcal{D}_{l}(\mathbb{Z}_{p})$,
which is isometric to the finite-dimensional Hilbert space $\mathbb{C}^{p^{l}%
}$. The functions $\left\{  p^{\frac{l}{2}}\Omega\left(  p^{l}\left\vert
x-I\right\vert _{p}\right)  \right\}  _{I\in G_{l}}$ form an orthonormal basis.

Since notation is introduced progressively across several sections, Table \mbox{\ref{tab:notation}} collects the principal symbols used throughout the paper, together with a pointer to where each is first defined, for ease of reference.

\begin{minipage}{\textwidth}
\begin{center}
\footnotesize
\begin{tabular}{@{}llp{9.3cm}@{}}
\toprule
\textbf{Symbol} & \textbf{First used} & \textbf{Meaning} \\
\midrule
$p$ & Sec.~\ref{Section_2} & prime number defining the $p$-adic field $\mathbb{Q}_p$ \\
$\mathbb{Z}_p$ & Sec.~\ref{Section_2} & ring of $p$-adic integers (compact, boundaryless) \\
$\left\vert\cdot\right\vert_p$ & Sec.~\ref{Section_2} & $p$-adic absolute value (ultrametric) \\
$l$ & Sec.~\ref{Section_4} & discretization level (depth of the truncated $p$-adic tree) \\
$G_l=\mathbb{Z}_p/p^l\mathbb{Z}_p$ & Sec.~\ref{Section_4} & finite quotient group used to discretize $\mathbb{Z}_p$ \\
$N=p^l$ & Sec.~\ref{Section_4}, \ref{Section_7A} & number of neurons in the discretized network \\
$\Psi(x,t)$ & Sec.~\ref{Section_3} & complex-valued state (wavefunction) of the network \\
$J_\alpha(\left\vert x\right\vert_p)$ & Sec.~\ref{Section_6} & radial diffusion kernel with exponent $\alpha$, defining the free Hamiltonian \\
$\boldsymbol{J}^{(l)}(\alpha)$ & \eqref{Matrix_J_1}--\eqref{Matrix_J_2} & discretized free-evolution (kernel) matrix on $G_l$ \\
$\boldsymbol{H}_l$ & Sec.~\ref{Section_7A} & free Hamiltonian on $G_l$; convolution operator with entries from $\boldsymbol{J}^{(l)}$ \\
$W(x,y)$, $W_{I,K}$ & Sec.~\ref{Section_3}, \ref{Section_4} & non-local interaction (weight) kernel coupling neurons $x,y$ (or $I,K$) \\
$W_{cat}$ & Sec.~\ref{Section_6} & $p$-adic approximation of the cat-cortex connectivity matrix of \cite{Zuniga-Entropy} \\
$Z(x,t)$, $Z_I(t)$ & Sec.~\ref{Section_3}, \ref{Section_4} & external bias/input applied to neuron $x$ (or $I$) at time $t$ \\
 $\phi$ & Sec.~\ref{Section_3} & nonlinear activation function \\
$\Psi_0$ & Sec.~\ref{Section_6} & initial condition (state at $t=0$) \\
$\delta t$ & Sec.~\ref{Section_6} & reporting time step used in the numerical simulations \\
$\left\Vert\Psi(\cdot,t)\right\Vert_2^2$ & Sec.~\ref{Section_6} & squared $L^2$-norm; conserved iff the evolution is unitary \\
\bottomrule
\end{tabular}
\end{center}
\captionof{table}{Principal notation used throughout the paper.}
\label{tab:notation}
\end{minipage}
\bigskip

\section{\label{Section_3}CNNs and QNNs}

Denote by $\Omega$ the space of neurons, which we assume is an infinite set
endowed with a measure $d\mu$ ($\left(  \Omega,d\mu\right)  $ is a measure
space), and denote by $\boldsymbol{y}(x,t)$ a real-valued function
representing the neural activity at the location $x\in\Omega$ and time $t$. A
large class of continuous neural networks is defined by an
integro-differential equation of the form%
\begin{gather}
\frac{d\boldsymbol{y}(x,t)}{dt}=-\gamma\boldsymbol{Hy}\left(  x,t\right)  +
{\displaystyle\int\limits_{\Omega}}
\boldsymbol{A}(x,z)\boldsymbol{y}\left(  z,t\right)  d\mu\left(  z\right)
\label{Eq_1}\\
+
{\displaystyle\int\limits_{\Omega}}
\boldsymbol{B}(x,z)\boldsymbol{x}\left(  z,t\right)  d\mu\left(  z\right)  +%
{\displaystyle\int\limits_{\Omega}}
\boldsymbol{W}\left(  x,z\right)  \phi\left(  \boldsymbol{y}\left(
z,t\right)  \right)  d\mu\left(  z\right) \nonumber\\
+%
{\displaystyle\int\limits_{\Omega}}
\boldsymbol{U}\left(  x,z\right)  \phi\left(  \boldsymbol{x}\left(
z,t\right)  \right)  d\mu\left(  z\right)  +\boldsymbol{\eta}\left(
x,t\right)  \text{, }\nonumber
\end{gather}
where $x\in\Omega$,\ $t\in\mathbb{R}$. The kernels $\boldsymbol{A}(x,z)$,
$\boldsymbol{B}(x,z)$, $\boldsymbol{W}\left(  x,z\right)  $, and
$\boldsymbol{U}\left(  x,z\right)  $ define the network architecture, $\phi$
is an activation function, $\boldsymbol{x}\left(  z,t\right)  $ is
the input and $\boldsymbol{\eta}\left(  x,t\right)  $ is the bias. We assume
that $\frac{d\boldsymbol{y}(x,t)}{dt}=-\gamma\boldsymbol{Hy}\left(
x,t\right)  $, $\gamma>0$, is a diffusion equation, which means that it describes a
random motion in $\Omega\times\mathbb{R}_{\geq0}$ (a Markov process). So,
(\ref{Eq_1}) is a reaction-diffusion equation.

If the network is stochastic, the functions $\boldsymbol{A}(x,z)$,
$\boldsymbol{B}(x,z)$, $\boldsymbol{W}\left(  x,z\right)  $, and
$\boldsymbol{U}\left(  x,z\right)  $ are realizations of generalized Gaussian
random variables in $L^{2}\left(  \Omega\times\Omega\right)  $, with mean
zero, and $\boldsymbol{\eta}\left(  x,t\right)  $ is a realization of a
generalized Gaussian noise in $L^{2}(\Omega\times\mathbb{R})$. When
$\boldsymbol{H}$ is the identity operator, this type of network was considered
in \cite{Grosvenor-Jefferson}-\cite{Sompolinsky et al}.

If the network is deterministic, the functions $\boldsymbol{A}(x,z)$,
$\boldsymbol{B}(x,z)$, $\boldsymbol{W}\left(  x,z\right)  $, and
$\boldsymbol{U}\left(  x,z\right)  $ belong to a fixed, suitable function
space. If $\boldsymbol{H}$ is the identity operator $\boldsymbol{I}$, the
system (\ref{Eq_1}) is a generalization of a continuous cellular neural network
(CNN), \cite{Chua-Tamas}, \cite{Slavova}, \cite{Zambrano-Zuniga-1}%
-\cite{Zambrano-Zuniga-2}.

For simplicity, we set $\gamma=1$. By performing the Wick rotation
$t\rightarrow it$, $i=\sqrt{-1}$, and taking $\boldsymbol{y}(x,it)=\Psi
\left(  x,t\right)  $, $-\boldsymbol{A}(x,z)=A(x,z)$, $-\boldsymbol{B}%
(x,z)=B(x,z)$, $-\boldsymbol{W}(x,z)=W(x,z)$, $-\boldsymbol{U}(x,z)=U(x,z)$,
$\boldsymbol{x}\left(  z,it\right)  =\Phi\left(  z,t\right)  $,
$-\boldsymbol{\eta}\left(  x,it\right)  =Z\left(  x,t\right)  $, (\ref{Eq_1})
becomes%
\begin{gather}
i\frac{d\Psi\left(  x,t\right)  }{dt}=\boldsymbol{H}\Psi\left(  x,t\right)  +%
{\displaystyle\int\limits_{\Omega}}
A(x,z)\Psi\left(  z,t\right)  d\mu\left(  z\right) \label{Eq_2}\\
+%
{\displaystyle\int\limits_{\Omega}}
B(x,z)\Phi\left(  z,t\right)  d\mu\left(  z\right)  +%
{\displaystyle\int\limits_{\Omega}}
W\left(  x,z\right)  \phi\left(  \Psi\left(  z,t\right)  \right)  d\mu\left(
z\right) \nonumber\\
+%
{\displaystyle\int\limits_{\Omega}}
U\left(  x,z\right)  \phi\left(  \Phi\left(  z,t\right)  \right)
d\mu\left(  z\right)  +Z\left(  x,t\right)  \text{, }\nonumber
\end{gather}
where the parameters $A(x,z)$, $B(x,z)$, $W\left(  x,z\right)  $, $U\left(
x,z\right)  $, $Z\left(  x,t\right)  $ are complex-valued functions, and by
definition, $\phi\left(  a+ib\right)  =\phi\left(  a\right)  +i\phi\left(
b\right)$. We argue that (\ref{Eq_2}) is a quantum neural
network (QNN). This paper is dedicated to developing this proposal. For
simplicity, here we consider networks of type%
\begin{equation}
i\frac{d\Psi\left(  x,t\right)  }{dt}=\boldsymbol{H}\Psi\left(  x,t\right)  +%
{\displaystyle\int\limits_{\Omega}}
W\left(  x,y\right)  \phi\left(  \Psi\left(  y,t\right)  \right)  d\mu\left(
y\right)  +Z\left(  x,t\right)  \text{. } \label{EQ_0A}%
\end{equation}

There are two basic choices for $\Omega$: $\mathbb{R}$ and $\mathbb{Q}_{p}$
(or $\mathbb{Z}_{p}$). In the first case, the neurons are organized in a
straight line, whereas in the second case, they are organized in a tree-like
structure with an infinite number of layers. From now on, we set
$\Omega=\mathbb{Z}_{p}$, the $p$-adic unit ball, and $d\mu=dx$ is the normalized Haar measure of $\mathbb{Q}_{p}$.

For practical applications and numerical simulations, a discrete version of (\ref{EQ_0A}) is necessary. The discretization is obtained by assuming that the parameters  $W, Z$ belong to a finite-dimensional vector space, and by approximating $\boldsymbol{H}$ by a matrix. This is the goal of the next section.

\section{\label{Section_4}Discretization of \texorpdfstring{$p$}{p}-adic QNNs}

The equation \mbox{\eqref{Model 3A}} introduced in Section \mbox{\ref{Section_3}} is an infinite-dimensional Schr\"{o}dinger equation posed on the compact group $\mathbb{Z}_p$ (i.e., the states are elements of an infinite-dimensional function space) and is therefore not directly amenable to numerical simulation or to a finite-dimensional quantum-circuit implementation. To obtain a network of finitely many neurons, we discretize $\mathbb{Z}_p$ by passing to the finite quotient group $G_l=\mathbb{Z}_p/p^l\mathbb{Z}_p$, for a fixed level $l\in\mathbb{N}$: this amounts to restricting attention to functions on $\mathbb{Z}_p$ that are locally constant at scale $p^{-l}$, which are naturally indexed by the $N=p^l$ elements of $G_l$. The rest of this section makes this restriction precise and shows that it produces a well-defined finite-dimensional Schr\"{o}dinger equation, i.e., a QNN of the type already sketched in the Introduction (equation \mbox{\eqref{Model 1}}). Section \mbox{\ref{Section_5}} then places this discretized network on an arbitrary simple graph.

We now take
\[
\boldsymbol{H}\Psi\left(  x,t\right)  =-J\left(  \left\vert x\right\vert
_{p}\right)  \ast\Psi\left(  x,t\right)  +\Psi\left(  x,t\right)  ,
\]
where `$\ast$' denotes the spatial convolution, and $J:\left[  0,1\right]
\rightarrow\mathbb{R}_{\geq0}$, and $\int_{\mathbb{Z}_{p}}J\left(  \left\vert
x\right\vert _{p}\right)  dx=1$. The choice of this Hamiltonian is motivated
by the fact that discretizations of the free Schr\"{o}dinger equations of type%
\[
i\frac{\partial}{\partial t}\Psi\left(  x,t\right)  =-J\left(  \left\vert
x\right\vert _{p}\right)  \ast\Psi\left(  x,t\right)  +\Psi\left(  x,t\right)
\]
give rise to CTQWs on graphs, \cite{Zuniga-QM-2}, \cite{Zuniga-Chacon}.

We now consider the discretization of
\begin{equation}
i\frac{\partial}{\partial t}\Psi\left(  x,t\right)  =-J\left(  \left\vert
x\right\vert _{p}\right)  \ast\Psi\left(  x,t\right)  +\Psi\left(  x,t\right)
+%
{\displaystyle\int\limits_{\mathbb{Z}_{p}}}
W\left(  x,y\right)  \phi\left(  \Psi\left(  y,t\right)  \right)  dy+Z(x,t),
\label{EQ_SCHRODINGER}%
\end{equation}
where $W\left(  x,y\right)  $, $Z(x,t)$ are complex-valued functions. There
are two different ways of interpreting the discretization of
(\ref{EQ_SCHRODINGER}). In the first case, we seek an approximation, in the
sense of numerical analysis, to the solutions of (\ref{EQ_SCHRODINGER}). In
the second case, we pick the parameters $J$, $W$, and $Z$ from a suitable
finite-dimensional vector space and search for solutions in that space. Both
approaches are related, \cite{Zuniga-QM-2}; here we follow the second one.

Given a positive integer $l$, we set
\[
G_{l}:=\mathbb{Z}_{p}/p^{l}\mathbb{Z}_{p}=\left\{  I\in\mathbb{Z}_{p}%
;I=I_{0}+I_{1}p+\ldots+I_{l-1}p^{l-1}\text{, with }I_{k}\in\left\{
0,\ldots,p-1\right\}  \right\}  .
\]
Notice that $G_{l}$ is an additive group isomorphic to the integers modulo
$p^{l}$.

We look for solutions of the form%
\begin{equation}
\Psi\left(  x,t\right)  =%
{\displaystyle\sum\limits_{I\in G_{l}}}
\Psi_{I}\left(  t\right)  \Omega\left(  p^{l}\left\vert x-I\right\vert
_{p}\right)  , \label{EQ_PHI-A}%
\end{equation}
where $\Psi_{I}\left(  t\right)  $ are complex-valued differentiable
functions, and $\Omega\left(  p^{l}\left\vert x-I\right\vert _{p}\right)  $ is
the characteristic function of the ball $I+p^{l}\mathbb{Z}_{p}$; we also
assume that
\[
W\left(  x,y\right)  =%
{\displaystyle\sum\limits_{I\in G_{l}}}
\text{ \ }%
{\displaystyle\sum\limits_{J\in G_{l}}}
W_{I,J}\Omega\left(  p^{l}\left\vert x-I\right\vert _{p}\right)  \Omega\left(
p^{l}\left\vert y-J\right\vert _{p}\right)  ,
\]%
\[
Z(x,t)=%
{\displaystyle\sum\limits_{I\in G_{l}}}
\text{ }Z_{I}(t)\Omega\left(  p^{l}\left\vert x-I\right\vert _{p}\right)  ,
\]
where $W_{I,J}\in\mathbb{C}$, $Z_{I}(t)\in\mathbb{C}$. Notice that%
\begin{equation}
\phi\left(  \Psi\left(  y,t\right)  \right)  =\text{\ }%
{\displaystyle\sum\limits_{J\in G_{l}}}
\phi\left(  \Psi_{J}\left(  t\right)  \right)  \Omega\left(  p^{l}\left\vert
y-J\right\vert _{p}\right)  , \label{EQ_PHI}%
\end{equation}
and that%
\begin{align}
&
{\displaystyle\int\limits_{\mathbb{Z}_{p}}}
W\left(  x,y\right)  \phi\left(  \Psi\left(  y,t\right)  \right)
dy\label{EQ_W}\\
&  =%
{\displaystyle\sum\limits_{I\in G_{l}}}
\text{ \ }%
{\displaystyle\sum\limits_{J\in G_{l}}}
W_{I,J}\phi\left(  \Psi_{J}\left(  t\right)  \right)  \Omega\left(
p^{l}\left\vert x-I\right\vert _{p}\right)
{\displaystyle\int\limits_{\mathbb{Z}_{p}}}
\Omega\left(  p^{l}\left\vert y-J\right\vert _{p}\right)  dy\nonumber\\
&  =%
{\displaystyle\sum\limits_{I\in G_{l}}}
\text{ \ }\left\{  p^{-l}%
{\displaystyle\sum\limits_{J\in G_{l}}}
W_{I,J}\phi\left(  \Psi_{J}\left(  t\right)  \right)  \right\}  \Omega\left(
p^{l}\left\vert x-I\right\vert _{p}\right)  .\nonumber
\end{align}

In the above calculation, we used that%
\[%
{\displaystyle\int\limits_{\mathbb{Z}_{p}}}
\Omega\left(  p^{l}\left\vert y-J\right\vert _{p}\right)  dy=%
{\displaystyle\int\limits_{J+p^{l}\mathbb{Z}_{p}}}
dy=p^{-l}.
\]

We now use that the support of $J\left(  \left\vert x\right\vert _{p}\right)
$ is the unit ball, so%
\begin{align}
J\left(  \left\vert x\right\vert _{p}\right)   &  =J\left(  \left\vert
x\right\vert _{p}\right)  \Omega\left(  \left\vert x\right\vert _{p}\right)
=J\left(  \left\vert x\right\vert _{p}\right)
{\displaystyle\sum\limits_{I\in G_{l}}}
\Omega\left(  p^{l}\left\vert x-I\right\vert _{p}\right) \label{EQ_J}\\
&  =%
{\displaystyle\sum\limits_{I\in G_{l}}}
J\left(  \left\vert x\right\vert _{p}\right)  \Omega\left(  p^{l}\left\vert
x-I\right\vert _{p}\right) \nonumber\\
&  =%
{\displaystyle\sum\limits_{\substack{I\in G_{l}\\I\neq0}}}
J\left(  \left\vert I\right\vert _{p}\right)  \Omega\left(  p^{l}\left\vert
x-I\right\vert _{p}\right)  +J\left(  \left\vert x\right\vert _{p}\right)
\Omega\left(  p^{l}\left\vert x\right\vert _{p}\right)  .\nonumber
\end{align}

We now compute $J\left(  \left\vert x\right\vert _{p}\right)  \ast
\Omega\left(  p^{l}\left\vert x-K\right\vert _{p}\right)  $. In this
calculation, we use the formula
\[
\Omega\left(  p^{l}\left\vert x-I\right\vert _{p}\right)  \ast\Omega\left(
p^{l}\left\vert x-K\right\vert _{p}\right)  =p^{-l}\Omega\left(
p^{l}\left\vert x-\left(  I+K\right)  \right\vert _{p}\right)  ,
\]
and the fact that $\left\vert I+p^{l}x\right\vert _{p}=\left\vert I\right\vert
_{p}$, for $I\in G_{l}\smallsetminus\left\{  0\right\}  $, and for any
$x\in\mathbb{Z}_{p}$.

Indeed, using (\ref{EQ_J}),
\begin{gather}
J\left(  \left\vert x\right\vert _{p}\right)  \ast\Omega\left(  p^{l}%
\left\vert x-K\right\vert _{p}\right)  =%
{\displaystyle\sum\limits_{\substack{I\in G_{l}\\I\neq0}}}
J\left(  \left\vert I\right\vert _{p}\right)  \Omega\left(  p^{l}\left\vert
x-I\right\vert _{p}\right)  \ast\Omega\left(  p^{l}\left\vert x-K\right\vert
_{p}\right) \label{EQ_J2}\\
+J\left(  \left\vert x\right\vert _{p}\right)  \Omega\left(  p^{l}\left\vert
x\right\vert _{p}\right)  \ast\Omega\left(  p^{l}\left\vert x-K\right\vert
_{p}\right) \nonumber\\
=%
{\displaystyle\sum\limits_{\substack{I\in G_{l}\\I\neq0}}}
p^{-l}J\left(  \left\vert I\right\vert _{p}\right)  \Omega\left(
p^{l}\left\vert x-\left(  I+K\right)  \right\vert _{p}\right)  +J\left(
\left\vert x\right\vert _{p}\right)  \Omega\left(  p^{l}\left\vert
x\right\vert _{p}\right)  \ast\Omega\left(  p^{l}\left\vert x-K\right\vert
_{p}\right)  .\nonumber
\end{gather}

We now compute%
\[
J\left(  \left\vert x\right\vert _{p}\right)  \Omega\left(  p^{l}\left\vert
x\right\vert _{p}\right)  \ast\Omega\left(  p^{l}\left\vert x-K\right\vert
_{p}\right)  =%
{\displaystyle\int\limits_{\left(  x+p^{l}\mathbb{Z}_{p}\right)  \cap\left(
K+p^{l}\mathbb{Z}_{p}\right)  }}
\text{ \ }J\left(  \left\vert x-y\right\vert _{p}\right)  dy.
\]
The last integral is non-zero only if $x\in K+p^{l}\mathbb{Z}_{p}$, i.e.,
$x+p^{l}\mathbb{Z}_{p}=K+p^{l}\mathbb{Z}_{p}$. In this case, taking $x-y=z$,
$dy=dz,$ $z\in p^{l}\mathbb{Z}_{p}$, we have
\begin{gather}%
{\displaystyle\int\limits_{\left(  x+p^{l}\mathbb{Z}_{p}\right)  \cap\left(
K+p^{l}\mathbb{Z}_{p}\right)  }}
\text{ \ }J\left(  \left\vert x-y\right\vert _{p}\right)  dy=%
{\displaystyle\int\limits_{\left(  x+p^{l}\mathbb{Z}_{p}\right)  }}
\text{ \ }J\left(  \left\vert x-y\right\vert _{p}\right)  dy\label{EQ_J3}\\
=%
{\displaystyle\int\limits_{p^{l}\mathbb{Z}_{p}}}
\text{ \ }J\left(  \left\vert z\right\vert _{p}\right)  dz=:Aver_{l}\left(
J\right)  \text{, for }x\in K+p^{l}\mathbb{Z}_{p}.\nonumber
\end{gather}

Therefore, from (\ref{EQ_PHI-A}), (\ref{EQ_J2}), and (\ref{EQ_J3}), using the
fact that $G_{l}$ is an additive group, and consequently for $K$ fixed, $I-K$
runs through all the elements of $G_{l}$ for $I\in G_{l}$,%
\begin{gather}
J\left(  \left\vert x\right\vert _{p}\right)  \ast\Psi\left(  x,t\right)  =%
{\displaystyle\sum\limits_{K\in G_{l}}}
\Psi_{K}\left(  t\right)  J\left(  \left\vert x\right\vert _{p}\right)
\ast\Omega\left(  p^{l}\left\vert x-K\right\vert _{p}\right) \label{EQ_J4}\\
=%
{\displaystyle\sum\limits_{K\in G_{l}}}
\Psi_{K}\left(  t\right)  \left\{
{\displaystyle\sum\limits_{\substack{I\in G_{l}\\I\neq0}}}
p^{-l}J\left(  \left\vert I\right\vert _{p}\right)  \Omega\left(  p^{l}%
\left\vert x-\left(  I+K\right)  \right\vert _{p}\right)  +Aver_{l}\left(
J\right)  \Omega\left(  p^{l}\left\vert x-K\right\vert _{p}\right)  \right\}
\nonumber\\
=%
{\displaystyle\sum\limits_{K\in G_{l}}}
\Psi_{K}\left(  t\right)  \left\{
{\displaystyle\sum\limits_{\substack{I\in G_{l}\\I\neq K}}}
p^{-l}J\left(  \left\vert I-K\right\vert _{p}\right)  \Omega\left(
p^{l}\left\vert x-I\right\vert _{p}\right)  +Aver_{l}\left(  J\right)
\Omega\left(  p^{l}\left\vert x-K\right\vert _{p}\right)  \right\} \nonumber
\end{gather}%
\begin{multline*}
=%
{\displaystyle\sum\limits_{I\in G_{l}}}
\left\{
{\displaystyle\sum\limits_{\substack{K\in G_{l}\\K\neq I}}}
\Psi_{K}\left(  t\right)  p^{-l}J\left(  \left\vert I-K\right\vert
_{p}\right)  \right\}  \Omega\left(  p^{l}\left\vert x-I\right\vert
_{p}\right)  +\\
Aver_{l}\left(  J\right)
{\displaystyle\sum\limits_{I\in G_{l}}}
\Psi_{I}\left(  t\right)  \Omega\left(  p^{l}\left\vert x-I\right\vert
_{p}\right)  .
\end{multline*}
Now, the discretization of (\ref{EQ_SCHRODINGER}) follows from (\ref{EQ_W}) and
(\ref{EQ_J4}):%
\begin{gather}
i\frac{\partial}{\partial t}\Psi_{I}\left(  t\right)  =-p^{-l}%
{\displaystyle\sum\limits_{\substack{K\in G_{l}\\K\neq I}}}
\Psi_{K}\left(  t\right) J\left(  \left\vert I-K\right\vert
_{p}\right)  +\left(  1-Aver_{l}\left(  J\right)  \right)  \Psi_{I}\left(
t\right)  +\label{EQ_SCHRODINGER_1}\\
p^{-l}%
{\displaystyle\sum\limits_{K\in G_{l}}}
W_{I,K}\phi\left(  \Psi_{K}\left(  t\right)  \right)  +Z_{I}(t),\text{ for
}I\in G_{l}\text{.}\nonumber
\end{gather}
On the other hand,%
\[
1=%
{\displaystyle\int\limits_{\mathbb{Z}_{p}}}
J\left(  \left\vert x\right\vert _{p}\right)  dx=%
{\displaystyle\sum\limits_{I\in G_{l}}}
\text{ \ }%
{\displaystyle\int\limits_{I+p^{l}\mathbb{Z}_{p}}}
\text{ }J\left(  \left\vert x\right\vert _{p}\right)  dx=p^{-l}%
{\displaystyle\sum\limits_{\substack{I\in G_{l}\\I\neq0}}}
J\left(  \left\vert I\right\vert _{p}\right)  +Aver_{l}\left(  J\right)  .
\]
With this identity, (\ref{EQ_SCHRODINGER_1}) can be rewritten as%
\begin{gather}
i\frac{\partial}{\partial t}\Psi_{I}\left(  t\right)  =-p^{-l}%
{\displaystyle\sum\limits_{\substack{K\in G_{l}\\K\neq I}}}
\Psi_{K}\left(  t\right) J\left(  \left\vert I-K\right\vert
_{p}\right)  +p^{-l}\left(
{\displaystyle\sum\limits_{\substack{K\in G_{l}\\K\neq0}}}
J\left(  \left\vert K\right\vert _{p}\right)  \right)  \Psi_{I}\left(
t\right)  +\label{EQ_SCHRODINGER_2}\\
p^{-l}%
{\displaystyle\sum\limits_{K\in G_{l}}}
W_{I,K}\phi\left(  \Psi_{K}\left(  t\right)  \right)  +Z_{I}(t),\text{ for
}I\in G_{l}\text{.}\nonumber
\end{gather}
We now introduce the matrix,%
\[
\boldsymbol{J}^{\left(  l\right)  }=\left[  J_{I,K}^{\left(  l\right)
}\right]  _{I,K\in G_{l}},
\]
where%
\begin{equation}
J_{I,K}^{\left(  l\right)  }=\left\{
\begin{array}
[c]{ccc}%
p^{-l}J\left(  \left\vert I-K\right\vert _{p}\right)  & \text{if } & I\neq K\\
&  & \\
-p^{-l}%
{\displaystyle\sum\limits_{\substack{L\in G_{l}\\L\neq0}}}
J\left(  \left\vert L\right\vert _{p}\right)  & \text{if} & I=K,
\end{array}
\right.  \label{Matrix_J_l}%
\end{equation}
and the vector notation $\left[  \Psi_{I}\left(  t\right)  \right]  _{I\in
G_{l}}$. With this notation, we rewrite (\ref{EQ_SCHRODINGER_2}) as
\begin{equation}
i\frac{\partial}{\partial t}\left[  \Psi_{I}\left(  t\right)  \right]  _{I\in
G_{l}}=-\boldsymbol{J}^{\left(  l\right)  }\left[  \Psi_{I}\left(  t\right)
\right]  _{I\in G_{l}}+\left[  p^{-l}%
{\displaystyle\sum\limits_{K\in G_{l}}}
W_{I,K}\phi\left(  \Psi_{K}\left(  t\right)  \right)  +Z_{I}(t)\right]  _{I\in
G_{l}}. \label{Network_Type_I}%
\end{equation}
Now, from (\ref{Network_Type_I}), it follows that%
\begin{multline*}
\left[  \Psi_{I}\left(  t\right)  \right]  _{I\in G_{l}}=e^{it\boldsymbol{J}%
^{\left(  l\right)  }}\left[  \Psi_{I}^{0}\right]  _{I\in G_{l}}\\
-i%
{\displaystyle\int\limits_{0}^{t}}
e^{i\left(  t-s\right)  \boldsymbol{J}^{\left(  l\right)  }}\left[  p^{-l}%
{\displaystyle\sum\limits_{K\in G_{l}}}
W_{I,K}\phi\left(  \Psi_{K}\left(  s\right)  \right)  +Z_{I}(s)\right]  _{I\in
G_{l}}ds.
\end{multline*}

QNNs on graphs are well-known tools in quantum computing. Our next step is to show that the formalism developed in this section yields new QNNs on graphs.

\section{\label{Section_5}\texorpdfstring{$p$-adic}{p-adic} QNNs on graphs}

Let $(\mathcal{G},V,E)$ be a simple graph, which means that $\mathcal{G}$ does
not contain any loops or multiple edges between the same pair of vertices, and
all the edges are undirected. We take $V=G_{l}^{0}\subset G_{l}$ for some
$l\in\mathbb{N}$ fixed. This means that we identify each vertex of
$\mathcal{G}$ with a $p$-adic integer from $G_{l}^{0}$. Let $A=\left[
A_{I,K}\right]  _{I,K\in G_{l}^{0}}$ be the adjacency matrix of $\mathcal{G}$.
Notice that $A_{I,I}=0$ since $\mathcal{G}$ has no loops, and $A_{I,K}%
=A_{K,I}$, for $I,K\in G_{l}^{0}$. For convenience, we assume that $A_{I,K}=0$
if $I\notin G_{l}^{0}$ or $K\notin G_{l}^{0}$. Now, we set the entries of
matrix (\ref{Matrix_J_l}) as
\[
J_{I,K}^{(l)}=p^{-l}A_{I,K}\text{ if }I\neq K\text{, with }I,K\in G_{l}^{0},
\]
and
\[
J_{I,K}^{(l)}=0\text{ if }I\in G_{l}\setminus G_{l}^{0}\text{ or }K\in
G_{l}\setminus G_{l}^{0}.
\]
Now, for any $K\in G_{l}$, the mapping%
\[%
\begin{array}
[c]{ccc}%
G_{l} & \rightarrow & G_{l}\\
I & \rightarrow & I-K
\end{array}
\]
is a group isomorphism. Then%
\begin{equation}
p^{-l}%
{\displaystyle\sum\limits_{\substack{I\in G_{l}\\I\neq0}}}
\text{ \ }J\left(  \left\vert I\right\vert _{p}\right)  =p^{-l}%
{\displaystyle\sum\limits_{\substack{I\in G_{l}\\I\neq K}}}
\text{ \ }J\left(  \left\vert I-K\right\vert _{p}\right)  . \label{Eq_11}%
\end{equation}
Now from (\ref{Eq_11}), taking $J\left(  \left\vert I-K\right\vert
_{p}\right)  =A_{I,K}$, and using that
\[
val(K)=\sum_{\substack{I\in G_{l}^{0}\\I\neq K}}A_{I,K},
\]
where $val(K)$ denotes the valence of the vertex $K$, we have
\[
p^{-l}%
{\displaystyle\sum\limits_{\substack{I\in G_{l}\\I\neq0}}}
\text{ \ }J\left(  \left\vert I\right\vert _{p}\right)  =p^{-l}val(K).
\]
In conclusion,%
\begin{equation}
J_{I,K}^{(l)}=\left\{
\begin{array}
[c]{lll}%
p^{-l}A_{I,K} & \text{if} & I\neq K\\
&  & \\
-p^{-l}val(K) & \text{if} & I=K,
\end{array}
\right.  \label{Matrix_CTQW}%
\end{equation}
for $I,K\in G_{l}^{0}$. Now, from (\ref{Network_Type_I}) it follows that the
$p$-adic QNNs of type I on graphs have the form%
\begin{equation}
i\frac{\partial}{\partial t}\left[  \Psi_{I}\left(  t\right)  \right]  _{I\in
V}=-\boldsymbol{J}^{\left(  l\right)  }\left[  \Psi_{I}\left(  t\right)
\right]  _{I\in V}+\left[  p^{-l}%
{\displaystyle\sum\limits_{K\in V}}
W_{I,K}\phi\left(  \Psi_{K}\left(  t\right)  \right)  +Z_{I}(t)\right]  _{I\in
V}.\nonumber
\end{equation}
If we take $\phi=0$, in the above equations, we get the Schr\"{o}dinger
equation that describes continuous-time quantum walks on graphs; see, e.g.
\cite{Venegas-Andraca}-\cite{Childs et al}.

\section{\label{Section_6A}Comparative Discussion: \texorpdfstring{$p$}{p}-Adic QCNNs vs. Existing QNN Families}
To situate the present work within the broader landscape of quantum machine learning (QML), we compare the proposed $p$-adic QCNNs with six representative families of quantum neural networks that appear in the current literature. The comparison is organized around eight criteria that we regard as structurally decisive: the mathematical domain on which the network is defined, the topology of neuronal organization, the type of quantum evolution (unitary or non-unitary), the presence of explicit nonlinearity, biological inspiration, intrinsic hierarchical structure, whether the framework subsumes continuous-time quantum walks (CTQWs) as a special case, and whether it models open quantum systems. A summary is given in Tables~\ref{tab:comparison-structure}-\ref{tab:comparison-properties}; a detailed discussion of each family follows.

\textbf{Variational Quantum Circuits (VQCs)}. VQCs, also called parametrized quantum circuits (PQCs), are the dominant paradigm in near-term QML \cite{VQC1, VQC2}. A VQC is a sequence of parametrized unitary gates applied to a register of $n$ qubits; training consists of optimizing the gate parameters to minimize a cost function evaluated through repeated measurement. VQCs are inherently finite-dimensional (the Hilbert space is $\mathbb{C}^{2^n}$), operate in strictly closed (unitary) quantum systems, and possess no intrinsic geometric or hierarchical structure beyond whatever is imposed by the circuit layout. Nonlinearity enters only through classical post-processing of measurement outcomes, not through the quantum dynamics itself. The proposed $p$-adic QCNNs differ fundamentally: they operate on an infinite-dimensional function space $L^2(\mathbb{Z}_p)$, carry an intrinsic tree-structured topology inherited from the $p$-adic integers, and embed nonlinearity directly into the Schr\"{o}dinger dynamics via the activation function $\varphi$.

\textbf{Quantum Convolutional Neural Networks (QCNNs)}. The QCNN architecture of Cong, Choi, and Lukin \cite{QCNN} mimics classical CNNs on a qubit lattice: alternating layers of translationally invariant two-qubit convolution gates and partial-measurement pooling layers reduce the system size hierarchically until a final classification measurement is performed. The resulting architecture is hierarchical in the sense that qubits are progressively discarded, but the underlying topology is a one-dimensional (or two-dimensional) Euclidean lattice with nearest-neighbor interactions. By contrast, the hierarchical structure of the $p$-adic QCNNs is encoded in the ultrametric topology of $\mathbb{Z}_p$ itself: neurons at level $l$ of the $p$-adic tree interact collectively, so the range of interaction is determined algebraically rather than by a circuit layout. Moreover, the pooling-by-measurement of QCNNs induces decoherence as a side effect, whereas in the $p$-adic QCNNs the transition between unitary and non-unitary evolution is controlled explicitly and continuously through the weight kernel $W(x, y)$.

\textbf{Quantum Graph Neural Networks (QGNNs)}. QGNNs, introduced in \cite{Verdon et al} and reviewed in \cite{Ceschini et al}, construct a quantum evolution operator as a product of exponentials $\prod_k e^{-i\theta_k H_k}$, where each $H_k$ is a Hamiltonian derived from the graph adjacency or Laplacian matrix. The network is then trained by optimizing the angles ${\theta_k}$. This approach is closely related to the quantum approximate optimization algorithm (QAOA) and produces strictly unitary, discrete-time dynamics. The $p$-adic QCNNs introduced here also place the network on a graph (after discretization; see Section \ref{Section_5}), but the evolution is governed by a continuous-time Schr\"{o}dinger equation rather than a discrete product of exponentials, and the graph structure is derived from the $p$-adic tree rather than prescribed as input. Furthermore, QGNNs do not include an explicit nonlinear activation function in the quantum dynamics, whereas the term $W(x,y) \varphi(\Psi(y,t))$ in \eqref{Model 3A} constitutes an inherent nonlinearity.

\textbf{Schr\"{o}dinger-equation-based QNNs}. Several works have proposed QNNs whose dynamics are governed by a Schr\"{o}dinger equation defined on a real or complex Euclidean domain, including the recurrent quantum neural network (RQNN) for eye tracking of \cite{Behera et al}, the entanglement-learning model of \cite{Berhman 1, Berhman 2}, and the frameworks surveyed in \cite{Nakajima et al, Altaisky et al, Gupta, Schuld et al}. These approaches share the use of a Schr\"{o}dinger equation as the governing law, and some include nonlinear potentials. They differ from the present work in the choice of domain: whereas those models define quantum states on $\mathbb{R}^n$ (or a finite-dimensional approximation thereof), the $p$-adic QCNNs are defined on $\mathbb{Z}_p$, a totally disconnected ultrametric space whose geometry is fundamentally different from Euclidean space. This non-Archimedean geometry encodes a hierarchy of spatial scales algebraically exactly, without discretization error, and places neurons in a tree-like structure rather than on a line or a lattice.

\textbf{Continuous-Time Quantum Walks (CTQWs)}. CTQWs \cite{Farhi-Gutman}-\cite{Childs et al} are defined by the Schr\"{o}dinger equation $i d\psi/dt = L\psi$, where $L$ is the graph Laplacian or adjacency matrix. They are exactly the linear, weight-free ($W = 0$, $\phi = 0$) special case of the $p$-adic QCNNs on graphs introduced in Sections \ref{Section_4}-\ref{Section_5}; see also \cite{Zuniga-QM-2}-\cite{Zuniga-Chacon}. Thus, the proposed framework strictly generalizes CTQWs by adding (i) a nonlinear activation term $\varphi(\Psi)$ and (ii) a non-local weight kernel $W(x,y)$ that couples neurons and drives the system away from unitary evolution. In this sense, the $p$-adic QCNNs provide a nonlinear, open-system extension of the CTQW paradigm.

\textbf{Original Quantum Cellular Neural Networks}. The QCNN of \cite{Toth et al} was the first proposal to combine quantum mechanics with the CNN architecture: quantum-dot cells, each governed by a time-dependent Schr\"{o}dinger equation, are arranged in a Euclidean lattice and coupled to nearest neighbors. The dynamics is local (nearest-neighbor only), and the network is not trained in the machine-learning sense. The $p$-adic QCNNs differ in three key respects: the coupling is non-local (controlled by the integral kernel $W(x,y)$), the neuron topology is a $p$-adic tree rather than a Euclidean lattice, and the network is bio-inspired by the Wilson--Cowan model of large neuronal populations rather than by the physics of quantum dots.

\textbf{Classical $p$-adic CNNs}. The classical $p$-adic CNNs introduced in \cite{Zambrano-Zuniga-1}-\cite{Zuniga et al} are the direct predecessors of the networks studied here. They are real-valued integro-differential systems on $\mathbb{Z}_p$ inspired by the Wilson--Cowan model, and they have been used for image processing \cite{Zambrano-Zuniga-2, Zuniga et al}. The $p$-adic QCNNs are obtained from the $p$-adic CNNs by the Wick rotation $t \to it$, which promotes the diffusive (Markovian) dynamics to Schr\"{o}dinger (quantum) dynamics and makes the state function complex-valued. All the structural features of the classical $p$-adic CNNs --- tree topology, non-local interactions, Wilson--Cowan biological inspiration --- are preserved in the quantum counterpart.

\begin{table}[h]
\centering
\caption{Structural comparison of quantum neural network frameworks.}
\label{tab:comparison-structure}

\begin{threeparttable}
\footnotesize
\renewcommand{\arraystretch}{1.0}
\setlength{\tabcolsep}{3pt}

\begin{tabularx}{\textwidth}{@{}p{3.6cm}p{2.8cm}X p{2.8cm}@{}}
\toprule
\textbf{Framework} &
\textbf{Domain} &
\textbf{Neuron topology} &
\textbf{Evolution} \\
\midrule

VQCs / PQCs
  & $\mathbb{C}^{2^n}$ (qubits)
  & Flat register
  & U \\[3pt]

QCNNs (\cite{QCNN})
  & $\mathbb{C}^{2^n}$ (qubits)
  & Euclidean lattice, pooling
  & U (+ measurement) \\[3pt]

QGNNs (\cite {Verdon et al})
  & $\mathbb{C}^{2^n}$ (qubits)
  & Arbitrary graph
  & U (discrete-time) \\[3pt]

Schr\"{o}dinger-based QNNs (\cite {Nakajima et al, Behera et al, Gupta})
  & $L^2(\mathbb{R}^n)$
  & Euclidean / recurrent
  & U or N-U \\[3pt]

CTQWs (\cite{Farhi-Gutman}-\cite{Venegas-Andraca})
  & $\ell^2(V)$ (graph vertices)
  & Arbitrary graph
  & U \\[3pt]

Original QCNN (\cite{Toth et al})
  & $\mathbb{C}^n$ (quantum dots)
  & Euclidean lattice, local
  & U \\[3pt]

Classical $p$-adic CNNs (\cite{Zambrano-Zuniga-1}-\cite{Zuniga et al})
  & $L^2(\mathbb{Z}_p)$
  & $p$-adic tree (ultrametric)
  & Markovian (real-valued) \\[3pt]

\midrule

\textbf{$p$-adic QCNNs (this work)}
  & $L^2(\mathbb{Z}_p)$
  & $p$-adic tree / simple graph
  & \textbf{U or N-U} \\

\bottomrule
\end{tabularx}

\begin{tablenotes}
\footnotesize
\item \textit{Notes.} U = unitary closed system; N-U = non-unitary open system; CTQW = continuous-time quantum walk.
\end{tablenotes}

\end{threeparttable}
\end{table}

\begin{table}[h]
\centering
\caption{Qualitative comparison of quantum neural network frameworks.}
\label{tab:comparison-properties}

\begin{threeparttable}
\footnotesize
\renewcommand{\arraystretch}{1.15}
\setlength{\tabcolsep}{3pt}

\begin{tabularx}{\textwidth}{@{}p{3.6cm}YYYYY@{}}
\toprule
\textbf{Framework} &
\makecell{\textbf{Nonlinearity}} &
\makecell{\textbf{Bio-inspired}} &
\makecell{\textbf{Hierarchical}} &
\makecell{\textbf{Generalizes}\\\textbf{CTQWs}} &
\makecell{\textbf{Open}\\\textbf{system}} \\
\midrule

VQCs / PQCs
  & $\times$ (classical only)
  & $\times$
  & $\times$
  & $\times$
  & $\times$ \\[3pt]

QCNNs (\cite{QCNN})
  & $\partial$ (via measurement)
  & $\partial$ (classical CNN)
  & $\partial$ (pooling only)
  & $\times$
  & $\partial$ \\[3pt]

QGNNs (\cite {Verdon et al})
  & $\times$
  & $\times$
  & $\partial$ (graph-dependent)
  & $\partial$ (QAOA-related)
  & $\times$ \\[3pt]

Schr\"{o}dinger-based QNNs (\cite {Nakajima et al, Behera et al, Gupta})
  & $\partial$ (nonlinear potential)
  & $\partial$ (task-specific)
  & $\times$
  & $\times$
  & $\partial$ \\[3pt]

CTQWs (\cite{Farhi-Gutman}-\cite{Venegas-Andraca})
  & $\times$
  & $\times$
  & $\times$
  & $\checkmark$ (definition)
  & $\times$ \\[3pt]

Original QCNN (\cite{Toth et al})
  & $\partial$ (quantum-dot dynamics)
  & $\partial$ (classical CNN)
  & $\times$
  & $\times$
  & $\partial$ \\[3pt]

Classical $p$-adic CNNs (\cite{Zambrano-Zuniga-1}-\cite{Zuniga et al})
  & $\checkmark$ (activation $\phi$)
  & $\checkmark$ (WC model)
  & $\checkmark$ (tree levels)
  & N/A
  & N/A \\[3pt]

\midrule

\textbf{$p$-adic QCNNs (this work)}
  & $\checkmark$ (activation $\phi$)
  & $\checkmark$ (WC model)
  & $\checkmark$ (tree levels)
  & $\checkmark$ ($W\!=\!0$, $\phi\!=\!0$)
  & $\checkmark$ ($W \!\neq\! 0$) \\

\bottomrule
\end{tabularx}

\begin{tablenotes}
\footnotesize
\item \textit{Notes.} WC = Wilson--Cowan model; $\checkmark$ = yes; $\times$ = no; $\partial$ = partial.
\end{tablenotes}

\end{threeparttable}
\end{table}

\newpage

\textbf{Final comments}. The comparison in Tables~\ref{tab:comparison-structure}-\ref{tab:comparison-properties} highlights three structural features that jointly distinguish the $p$-adic QCNNs from all other families listed. First, the non-Archimedean domain $\mathbb{Z}_p$ endows the network with a rigorous, algebraically exact hierarchical topology that does not arise from an ad hoc circuit design choice but is intrinsic to the space of neurons. Second, the framework interpolates continuously between closed (unitary, $W = 0$) and open (non-unitary, $W \neq 0$) quantum dynamics within a single governing equation \eqref{Model 3A}, without invoking a separate Lindblad master equation formalism; as our numerical simulations confirm, the open-system regime models decoherence. Third, the $p$-adic QCNNs strictly generalize the CTQW paradigm: setting $W = 0$ and $\phi = 0$ in \eqref{Model 1} recovers the standard continuous-time quantum walk on a graph, so all CTQW-based quantum algorithms are subsumed as a special case. These properties are achieved at the cost of working on an infinite-dimensional, non-Archimedean function space, which makes direct hardware implementation on current gate-based quantum processors an open problem; we return to this point in Section \ref{Section_7}.

\section{\label{Section_6}Numerical simulations}
In \cite{Zuniga-Chacon}, the authors presented detailed simulations of the network (\ref{Network_Type_I}), when $W_{I,K}=0$ and $Z_{I}(t)=0$. In this case, the network corresponds to a closed quantum system with a unitary evolution. The purpose of the numerical simulations presented in this section is to provide a first picture to understand how the dynamics of the network change when $W_{I,K} \neq 0$, or $Z_{I}(t) \neq 0$. We analyze the computational cost of these simulations, and how it scales with graph size, separately in Section \mbox{\ref{Section_7A}}. We emphasize that the simulations below are representative, illustrative experiments, chosen to exhibit qualitatively distinct regimes of the network (\mbox{\ref{Network_Type_I}}); they are not intended as a comprehensive quantitative validation of the framework for practical applications, but rather provide initial numerical evidence consistent with the mathematical formulation of Section \mbox{\ref{Section_3}}.

\subsection{Numerical Implementation and Workflow}
\label{subsec:workflow}

Every experiment reported in this section is obtained by the same six-step pipeline, summarized in Algorithm \mbox{\ref{alg:workflow}}: (1) fix the prime $p$ and the discretization level $l$, which determine the finite quotient group $G_{l}=\mathbb{Z}_{p}/p^{l}\mathbb{Z}_{p}$ and the number of neurons $N=p^{l}$; (2) assemble the free-evolution (kernel) matrix $\boldsymbol{J}^{(l)}(\alpha)$ from \mbox{\eqref{Matrix_J_1}}--\mbox{\eqref{Matrix_J_2}}; (3) select the interaction weight $W$ (one of the three types described in Section \mbox{\ref{subsec:general-setup}} below) and the bias $Z$; (4) fix the activation function $\phi$ and the initial condition $\Psi_{0}$; (5) integrate the resulting system of $N$ coupled, complex-valued, nonlinear ordinary differential equations (\mbox{\ref{Network_Type_I}}) forward in time in Python, using SciPy's \texttt{solve\_ivp} adaptive-step ODE integrator; and (6) record and plot the reported diagnostics -- the probability density $\left\vert \Psi(x,t)\right\vert^2$, the norm trajectory $\left\Vert \Psi(\cdot,t)\right\Vert_2^2$, and, where relevant, the real and imaginary parts of $\Psi$.

\begin{minipage}{\textwidth}
\begin{algorithm}[Numerical simulation of a $p$-adic QCNN]
\label{alg:workflow}
\begin{algorithmic}[1]
\Require prime $p$, level $l$, kernel exponent $\alpha$, weight kernel $W$, bias $Z(t)$, activation $\phi$, initial state $\Psi_0$, time horizon $T$
\State Set $N \gets p^{l}$; enumerate $G_l = \mathbb{Z}_p/p^l\mathbb{Z}_p$ (The labeling is compatible with the tree-like structure  of $\mathbb{Z}_p$; see \cite[Appendix. Images and Test Functions]{Zambrano-Zuniga-2})
\State Assemble $\boldsymbol{J}^{(l)}(\alpha)$ from \eqref{Matrix_J_1}--\eqref{Matrix_J_2}
\State Assemble the $N \times N$ interaction matrix $W = [W_{I,K}]$ (zero / constant / cat-cortex)
\State Set $\Psi(0) \gets \Psi_0 \in \mathbb{C}^N$
\State Define $f(t,\Psi) \gets -i\left[-p^{-l}\boldsymbol{J}^{(l)}\Psi + p^{-l}W\phi(\Psi) + Z(t)\right]$ \Comment{right-hand side of \eqref{Network_Type_I}}
\State Integrate $\dot{\Psi}=f(t,\Psi)$ on $[0,T]$ with SciPy's \texttt{solve\_ivp} (tolerances rtol, atol) 
\State Evaluate $\left\vert\Psi(x,t)\right\vert^2$ and $\left\Vert\Psi(\cdot,t)\right\Vert_2^2$ on the reporting grid
\State \Return diagnostics and figures
\end{algorithmic}
\end{algorithm}
\end{minipage}

\textbf{Boundary conditions.} No boundary conditions are imposed or needed: $\mathbb{Z}_p$ and its finite quotients $G_l$ are compact topological groups without boundary, so the Cauchy problem for \mbox{\eqref{EQ_0A}} (and its discretization (\mbox{\ref{Network_Type_I}})) is posed purely as an initial-value problem.

\textbf{Computational environment.} All simulations were implemented in Python, using NumPy and SciPy for linear algebra and time integration (SciPy's \texttt{solve\_ivp} ODE integrator) and Matplotlib for the figures.

\textbf{Numerical accuracy.} As a convergence check, we re-ran a representative subset of the simulations below with the integrator tolerances tightened by two orders of magnitude (and, independently, with the reporting time step halved); the resulting norm trajectories were unchanged to within plotting resolution. This indicates that the reported loss of the conserved quantity $\left\Vert\Psi(\cdot,t)\right\Vert_2^2$ when $W\neq0$ or $Z\neq0$ is a modeled physical effect of the equation, not a numerical artifact of the time-integration scheme. On the other hand, in  \cite[Section 5.2]{Zambrano-Zuniga-2}, the first two authors established that the states of the discrete CNNs are good approximations of the corresponding states of the $p$-adic continuous CNNs. This technique extends to $p$-adic quantum CNNs.

Table \mbox{\ref{tab:sim-summary}} below consolidates the parameters and outcome of every numerical experiment reported in this section, for quick reference; the individual subsections that follow give the full description and figures for each entry.

\begin{minipage}{\textwidth}
\begin{center}
\footnotesize
\resizebox{\textwidth}{!}{%
\begin{tabular}{@{}llccccccl@{}}
\toprule
\textbf{Sim.} & \textbf{$p,\,l$ ($N$)} & \textbf{$\alpha$} & \textbf{$W$} & \textbf{$Z$} & \textbf{$\Psi_0$} & \textbf{$[0,T]$} & \textbf{$\delta t$} & \textbf{Evolution / observation} \\
\midrule
1   & $3,6$ (729) & 2.5 & 0 & 0 & bump at $x=4$ & $[0,400]$ & 0.001 & Unitary (free / CTQMC); norm conserved \\
2   & $3,6$ (729) & 2.5 & 0 & 2 sine pulses & bump at $x=4$ & $[0,600]$ & 0.001 & Non-unitary; responds to each pulse \\
3.1 & $3,6$ (729) & 2.5 & 0 & 10 (const.) & bump at $x=4$ & $[0,100]$ & 0.001 & Non-unitary; norm grows in time \\
3.2 & $2,6$ (64)  & 2.5 & $0.1\,W_{cat}$ & 10 (const.) & bump at $x=4$ & $[0,600]$ & 0.001 & Non-unitary; habituation to constant input \\
3.3 & $3,6$ (729) & 2.5 & 50 (const.) & 0 & $0.5+0.3i$ & $[0,400]$ & 0.001 & Non-unitary; habituation \\
4.1 & $2,6$ (64)  & 1.6 & $0.1\,W_{cat}$ & 2 space-time pulses & 0 & $[0,600]$ & 0.001 & Non-unitary; responds to pulses \\
4.2 & $2,6$ (64)  & 1.6 & $0.1\,W_{cat}$ & 3 pulses + const. tail & bump at $x=4$ & $[0,1500]$ & 0.001 & Non-unitary; habituation to constant tail \\
4.3 & $2,6$ (64)  & 1.6 & 1.0 (const.) & as in 4.2 & bump at $x=4$ & $[0,1500]$ & 0.001 & Non-unitary; interaction suppresses pulse response \\
5   & $2,6$ (64)  & 1.6 & $10.0\,W_{cat}$ & 3 pulses + const. tail & 0 & $[0,600]$ & 0.001 & Non-unitary; large $\Vert W\Vert_\infty$ degrades response \\
6.1 & $2,6$ (64)  & 2.5 & 0 & 1 pulse & 0 & $[0,100]$ & 0.001 & Classical $p$-adic CNN (real-valued); responds to pulse \\
6.2 & $2,6$ (64)  & 2.5 & $0.05\,W_{cat}$ & as in 6.1 & 0 & $[0,400]$ & 0.001 & Classical CNN; interaction produces faster decay \\
6.3 & $2,6$ (64)  & 2.5 & $0.1\,W_{cat}$ & 3 pulses + const. tail & 0 & $[0,600]$ & 0.001 & Classical CNN; habituation, differs from quantum case \\
\bottomrule
\end{tabular}}
\end{center}
\captionof{table}{Summary of all numerical experiments in Section \ref{Section_6}: prime $p$, discretization level $l$ (number of neurons $N=p^l$), kernel exponent $\alpha$, interaction weight $W$, bias/input $Z$, initial condition $\Psi_0$, time horizon, reporting step $\delta t$, and evolution type/key observation. $W_{cat}$ denotes the $p$-adic approximation of the cat-cortex connectivity matrix of \cite{Zuniga-Entropy} (Figure \ref{figure cat_matrix}); ``bump at $x=4$'' abbreviates $\Psi_0(x)=\Omega(p^2\vert x-4\vert_p)/\Vert\Omega(p^2\vert x-4\vert_p)\Vert_2$.}
\label{tab:sim-summary}
\end{minipage}
\bigskip

\subsection{\label{subsec:general-setup}General set-up}
In order to produce numerical approximations for the solution of the Schr\"{o}dinger equations, the first step is to pick the kernel $J_{\alpha
}\left( x\right) $. We take%
\begin{equation*}
J_{\alpha }\left( x\right) =\frac{1-p^{-\alpha }}{1-p^{\alpha -1}}\left\{
\left\vert x\right\vert _{p}^{\alpha -1}-p^{\alpha -1}\right\} \Omega \left(
\left\vert x\right\vert _{p}\right) ,
\end{equation*}%
for $\alpha \in \mathbb{R}\smallsetminus \left\{ 1\right\} $. Then $\
J_{\alpha }\left( x\right) \geq 0$, $\left\Vert J_{\alpha }\right\Vert _{1}=1
$, see \cite[Lemma 5.2]{Taibleson}.

The equation%
\begin{equation*}
\frac{\partial }{\partial t}u\left( x,t\right) =J_{\alpha }\left( \left\vert
x\right\vert _{p}\right) \ast u\left( x,t\right) -u\left( x,t\right) \text{, 
}x\in \mathbb{Z}_{p},t\geq 0,
\end{equation*}%
is a heat equation, i.e., it has an associated Markov process. The Wick
rotation of this equation gives rise to a free Schr\"{o}dinger equation
connected with a continuous-time quantum Markov Chain; see \cite{Zuniga-QM-2, Zuniga-Chacon}.

The numerical approximations of equations involving the operator 
\[
\varphi
\left( x\right) \rightarrow J_{\alpha }\left( \left\vert x\right\vert
_{p}\right) \ast \varphi \left( x\right) -\varphi \left( x\right) 
\]
use an approximation of it by the matrix:%
\begin{equation}
\boldsymbol{J}^{\left( l\right) }\left( \alpha  \right) =\left[
J_{I,K}^{\left( l\right) }\left( \alpha \right) \right] _{I,K\in G_{l}},
\label{Matrix_J_1}
\end{equation}%
\begin{equation}
J_{I,K}^{\left( l\right) }\left( \alpha \right) =\left\{ 
\begin{array}{lll}
p^{-l}\frac{1-p^{-\alpha }}{1-p^{\alpha -1}}\left\{ \left\vert
I-K\right\vert _{p}^{\alpha -1}-p^{\alpha -1}\right\}  & \text{if} & I\neq K
\\ 
&  &  \\ 
-\sum\limits_{\substack{ L\in G_{l} \\ L\neq 0}}p^{-l}\frac{1-p^{-\alpha }}{%
1-p^{\alpha -1}}\left\{ \left\vert L\right\vert _{p}^{\alpha -1}-p^{\alpha
-1}\right\}  & \text{if} & I=K.%
\end{array}%
\right.   \label{Matrix_J_2}
\end{equation}

Notice that a prime $p$ and a level $l$ for a tree $G_{l}=\mathbb{%
Z}_{p}/p^{l}\mathbb{Z}_{p}$, where $l$ is a positive integer, are implicit in this choice. 

Another important choice is the kernel $W(x,y)$ that gives the strength of
the connection between the neurons $x$, $y$. A natural way of fixing $W(x,y)$
is by training the network to perform a particular task. This
machine-learning approach is not used here. In the numerical simulations, we
use three different types of kernels $W(x,y)$. The first is $W(x,y)=0$; in this case, the neurons are not connected. The second is $W(x,y)=W_{0}$, a
constant, which means that  
\begin{equation}
\int\limits_{\mathbb{Z}_{p}}W(x,y)\phi \left( \Psi \left( y,t\right)
\right) dy=W_{0}\int\limits_{\mathbb{Z}_{p}}\phi \left( \Psi \left(
y,t\right) \right) dy.  \label{Term_W}
\end{equation}%
By interpreting $\phi \left( \Psi \left( y,t\right) \right) $ as the output
of the neuron located at position $y$ at the time $t$, the formula (\ref%
{Term_W}) gives a weighted average of the neurons' outputs at time $t$.
The last choice for $W(x,y)$ is an approximation for the connection matrix
of the cat cortex, developed by the first two authors in \cite{Zuniga-Entropy}. This choice
requires $p=2$ and $l=6$; see Figure \ref{figure cat_matrix}.

The activation function is fixed as $\phi \left( s\right) =0.5\left(
\left\vert s+1\right\vert -\left\vert s-1\right\vert \right) $, $s\in 
\mathbb{R}$. This activation function is widely used in cellular neural networks, see \cite{Chua-Tamas, Chua, Slavova, Zambrano-Zuniga-1, Zambrano-Zuniga-2, Zuniga et al}.

\subsection{A glimpse into the non-unitary evolution}
When $W=Z=0$, the equation (\ref{EQ_0A}) becomes the Schr\"{o}dinger equation
describing a continuous-time quantum Markov chain \cite{Zuniga-QM-2}. This
equation describes a unitary evolution, the fundamental process in quantum
mechanics in which a system changes over time while conserving total
probability and quantum information. The ultimate goal of the numerical
simulations is to understand the breakdown of unitary evolution as the
parameters $W$ and $Z$ change. We fix the initial datum as $\Psi_{0}%
(x)=\Omega(p^{2}|x-4|_{p})/\Vert\Omega(p^{2}|x-4|_{p})\Vert_{2}$. The
numerical simulations in this section confirm that when $W\neq0$ or $Z\neq0$, the
norm $\left\Vert \Psi\left(  \cdot,t\right)  \right\Vert _{2}^{2}=1$ is no
longer conserved, and consequently, the network models an open quantum system
subject to environmental interaction. The way in which $W\neq0$, or $Z\neq0$, controls
the behavior of the norm $\left\Vert \Psi\left(  \cdot,t\right)  \right\Vert
_{2}^{2}$ is not completely understood at this moment. The numerical
simulations show that the equation (\ref{EQ_0A}) describes a quantum system,
where decoherence is controlled by the parameters $W$ and $Z$.

\begin{figure}
    \centering
    \includegraphics[width=0.75\linewidth]{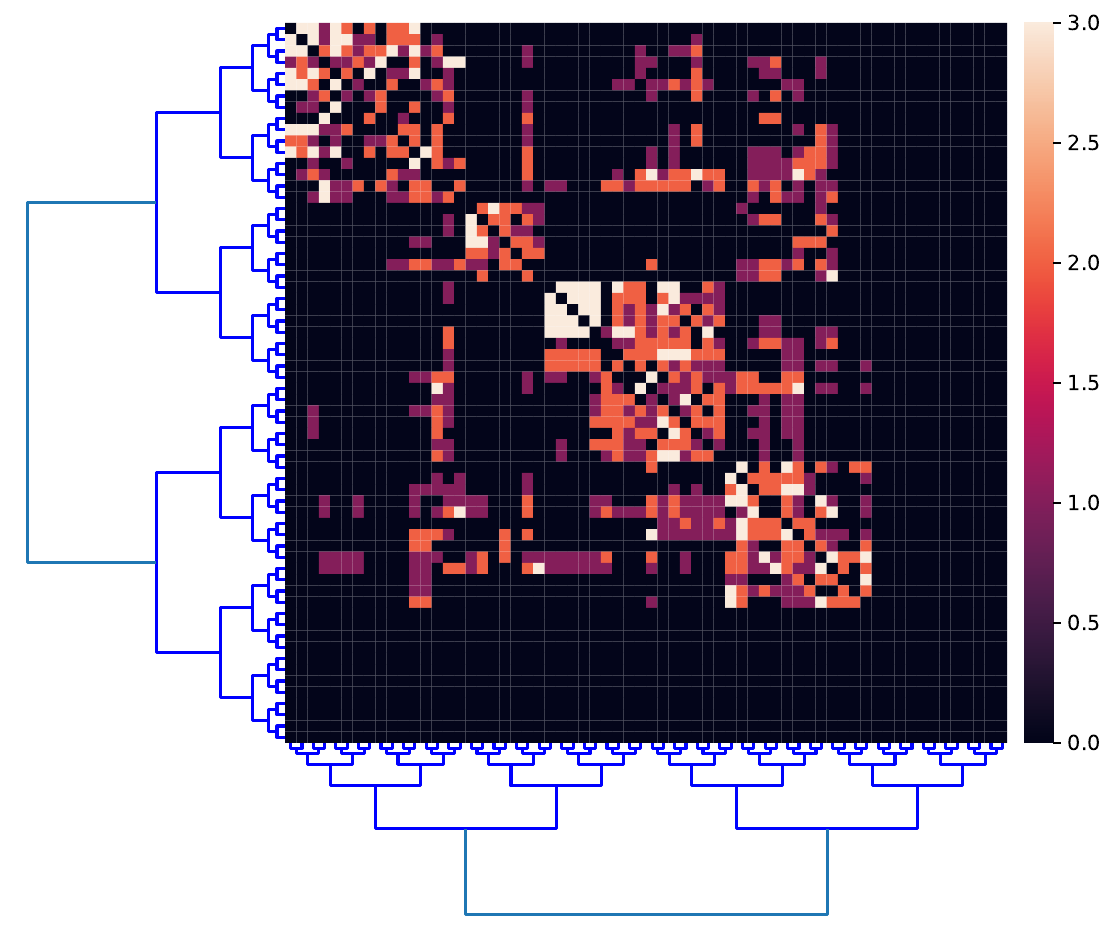}
    \caption{$p$-Adic approximation for the cat cortex. The  parameters are $ p=2$ and $l=6$, \cite{Zuniga-Entropy}. The connectivity matrix for the cat cortex is  available at  https://terpconnect.umd.edu/~cherniak/SuppData/cat.htm}
    \label{figure cat_matrix}
\end{figure}


\medskip
\textbf{Group A: Unitary evolution (baseline).} The following simulation establishes the baseline, closed-system behavior of the network ($W=Z=0$), against which all subsequent (open-system) groups are compared.

\subsection{Numerical simulation 1}
(Validates the unitary, norm-conserving special case of equation \mbox{\eqref{EQ_0A}} discussed in Section \mbox{\ref{Section_3}}.)
The parameters used are  $p=3$, $l=6$, $G_{6}=\mathbb{Z}_{3}/3^{6}\mathbb{Z}_{3}$, $W=0$, $Z=0$, and $\alpha=2.5$. The initial condition is $\Psi_0(x) = \Omega(p^2\vert x-4\vert_p)/\Vert\Omega(p^2\vert x -4\vert_p)\Vert_2$. The time runs over $[0,400]$, with step $\delta t = 0.001$. This case corresponds to a free Schr\"odinger equation, so the network undergoes a unitary evolution. Figure \ref{figure1}-(A) shows a numerical approximation for the probability density $\left\vert \Psi \left( x,t\right) \right\vert ^{2}$, and Figure \ref{figure1}-(B) shows $\left\Vert\Psi \left( \cdot,t\right) \right\Vert _{2}^{2}=\int_{\mathbb{Z}_{p}}\left\vert
\Psi \left( x,t\right) \right\vert ^{2}dx$. In this case, the  Schr\"odinger equation describes a continuous-time quantum Markov chain; see \cite{Zuniga-QM-2, Zuniga-Mayes, Zuniga-Chacon}. Figure \mbox{\ref{figure1}}-(B) is, by construction, flat at 1: this is the quantitative conservation diagnostic used throughout this section, and every departure from a flat line in later groups is the signature of non-unitary (open-system) behavior.

\begin{figure}[H]
    \centering

    \begin{subfigure}[t]{0.48\linewidth}
        \centering
        \includegraphics[width=\linewidth]{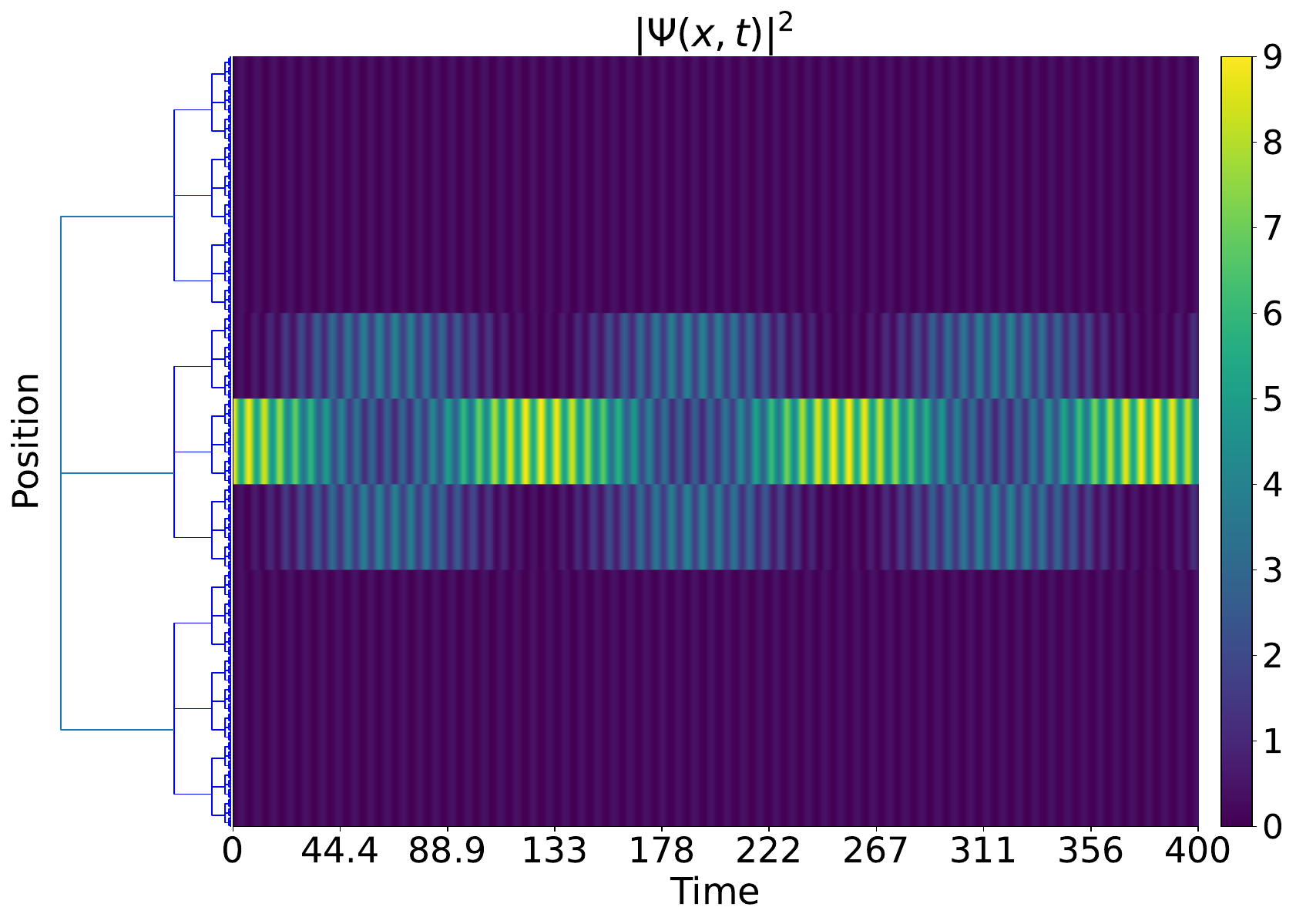}
        \caption{A numerical approximation for  $\left\vert \Psi \left( x,t\right) \right\vert ^{2}$. The position variable runs over the tree $G_{6}$, and the time runs over $[0,400]$.}
        \label{fig:sim1_l1_norm_sq}
    \end{subfigure}
    \hfill
    \begin{subfigure}[t]{0.48\linewidth}
        \centering
        \includegraphics[width=\linewidth]{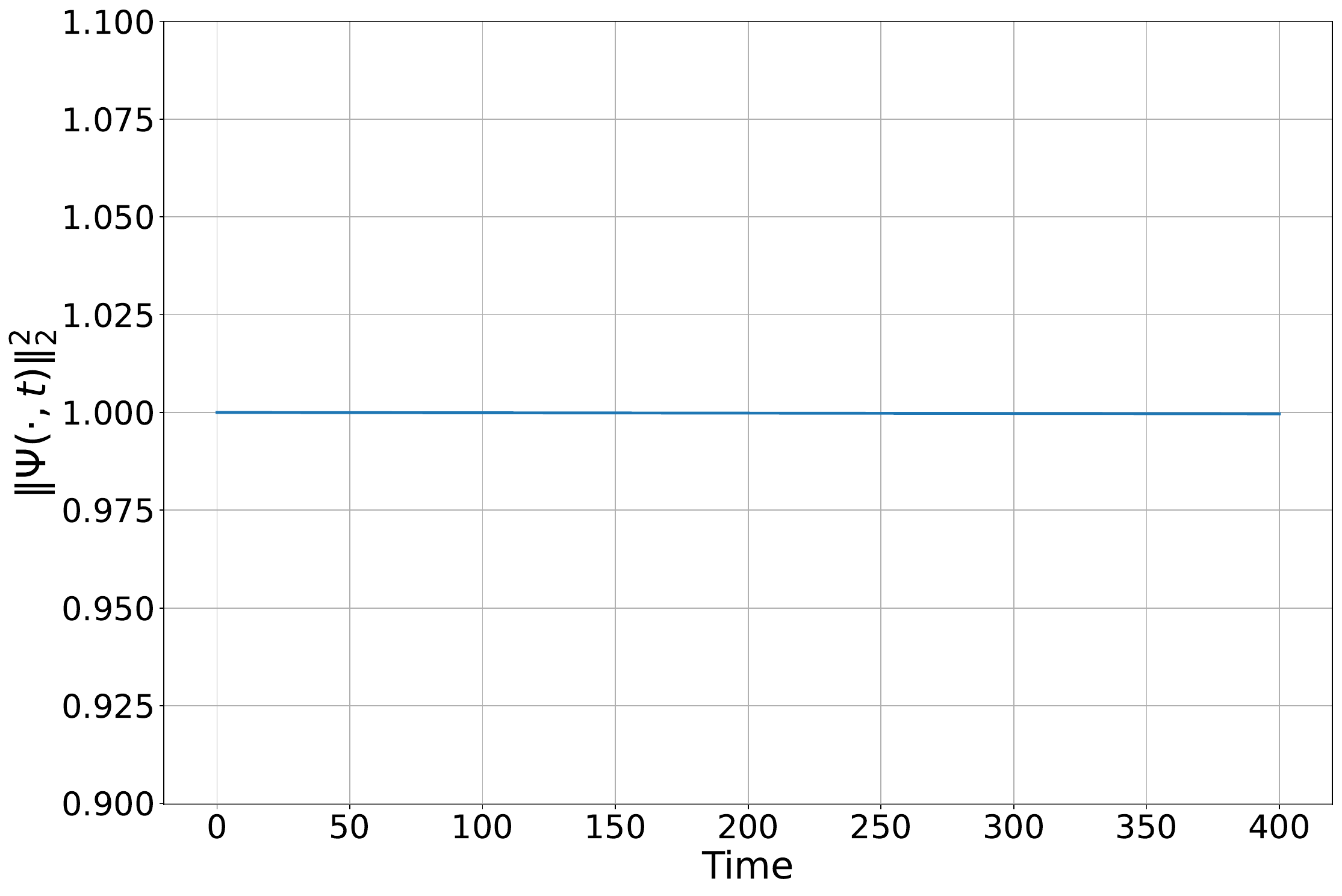}
        \caption{A numerical approximation for $\left\Vert\Psi \left( \cdot,t\right) \right\Vert _{2}^{2}$ for $x\in G_{6}$, and  $t\in[0,400]$. This calculation confirms that the evolution is unitary.}
        \label{fig:sim1_norm_trajectories}
    \end{subfigure}
    \caption{Results of the numerical simulation 1}
    \label{figure1}
\end{figure}

\medskip
\textbf{Group B: Non-unitary evolution driven by an external input, without neuron interaction ($W=0$, $Z\neq0$).} This group isolates the effect of the bias/input term $Z$ on the norm, with the interaction weight kept at $W=0$.

\subsection{Numerical simulation 2}
(Validates the non-unitary regime of equation \mbox{\eqref{EQ_0A}} driven purely by an external input, i.e. $W=0$, $Z\neq0$.)
The parameters are $p=3$, $l=6$, $\alpha=2.5$, $W=0$,  $Z(x,t)=\sin(0.01 \pi t) 1_{[25,50]}(t) + \sin(0.01 \pi t) 1_{[200,225]}(t)$. The initial condition is  $\Psi_0(x) = \Omega(p^2\vert x -4\vert_p)/\Vert\Omega(p^2\vert x -4\vert_p)\Vert_2$. The time runs over $[0,600]$ with a step  $\delta t = 0.001$. In this case, the network has two inputs, $\Psi_0(x)$ and $Z(t)$. The input $Z(t)$ corresponds to two pulses, see  Figure \ref{figure2A}. Figure \ref{figure2}-(B) shows that the evolution is non-unitary, so the network behaves as an open quantum system. Figures \ref{figure2}-(A) and (B) show that the network responds to each pulse. The condition $W=0$ means that there is no interaction between the neurons.

\begin{figure}[H]
    \centering

    \begin{subfigure}[t]{0.49\linewidth}
        \centering
        \includegraphics[width=\linewidth]{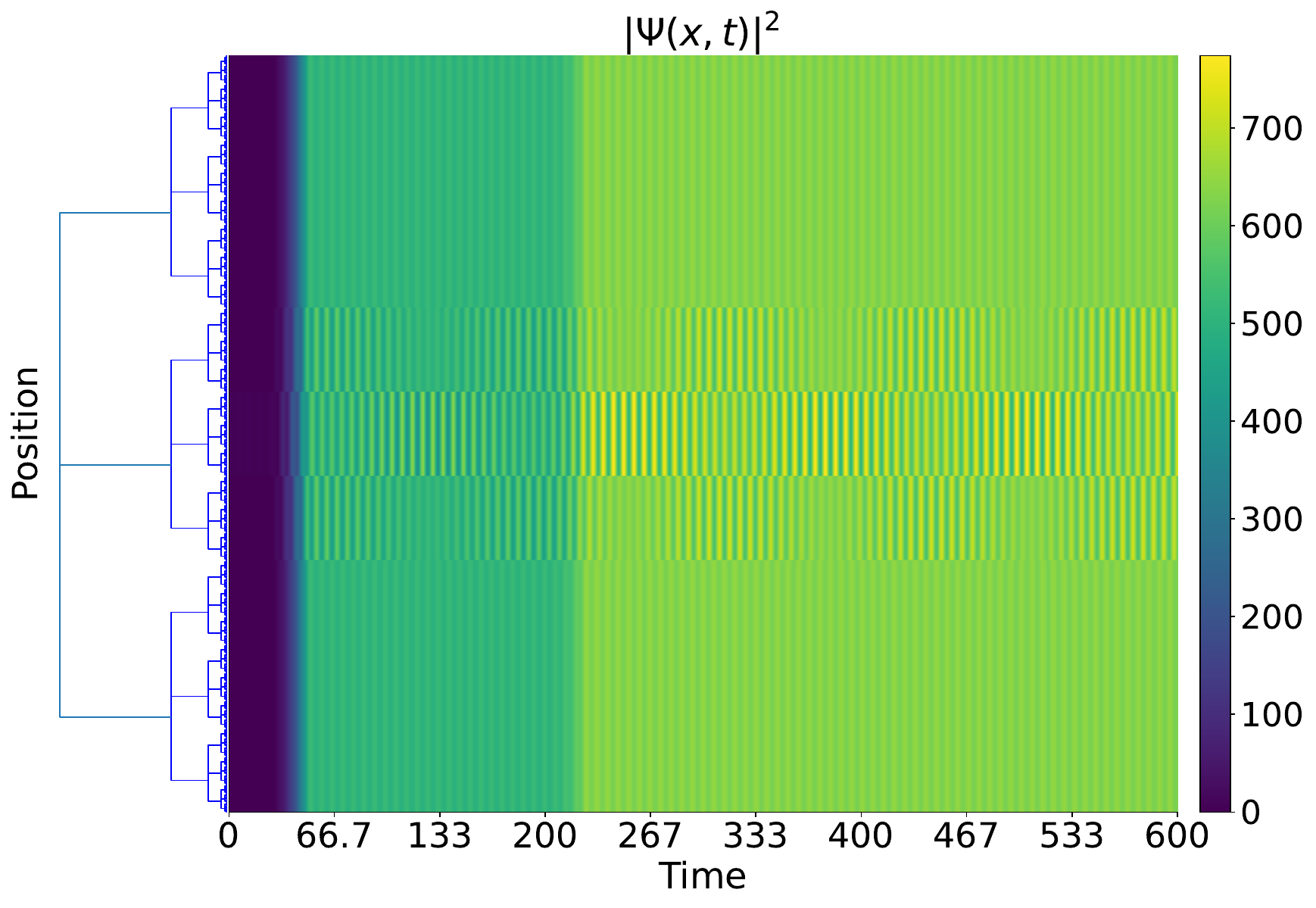}
        \caption{An approximation for $\left\vert \Psi \left( x,t\right) \right\vert ^{2}$. The position runs over $G_{6}=\mathbb{Z}_{3}/3^{6}\mathbb{Z}_{3}$. The network responds to each pulse.}
        \label{fig:simulation_3_1_l1_norm_sq}
    \end{subfigure}
    \hfill
    \begin{subfigure}[t]{0.48\linewidth}
        \centering
        \includegraphics[width=\linewidth]{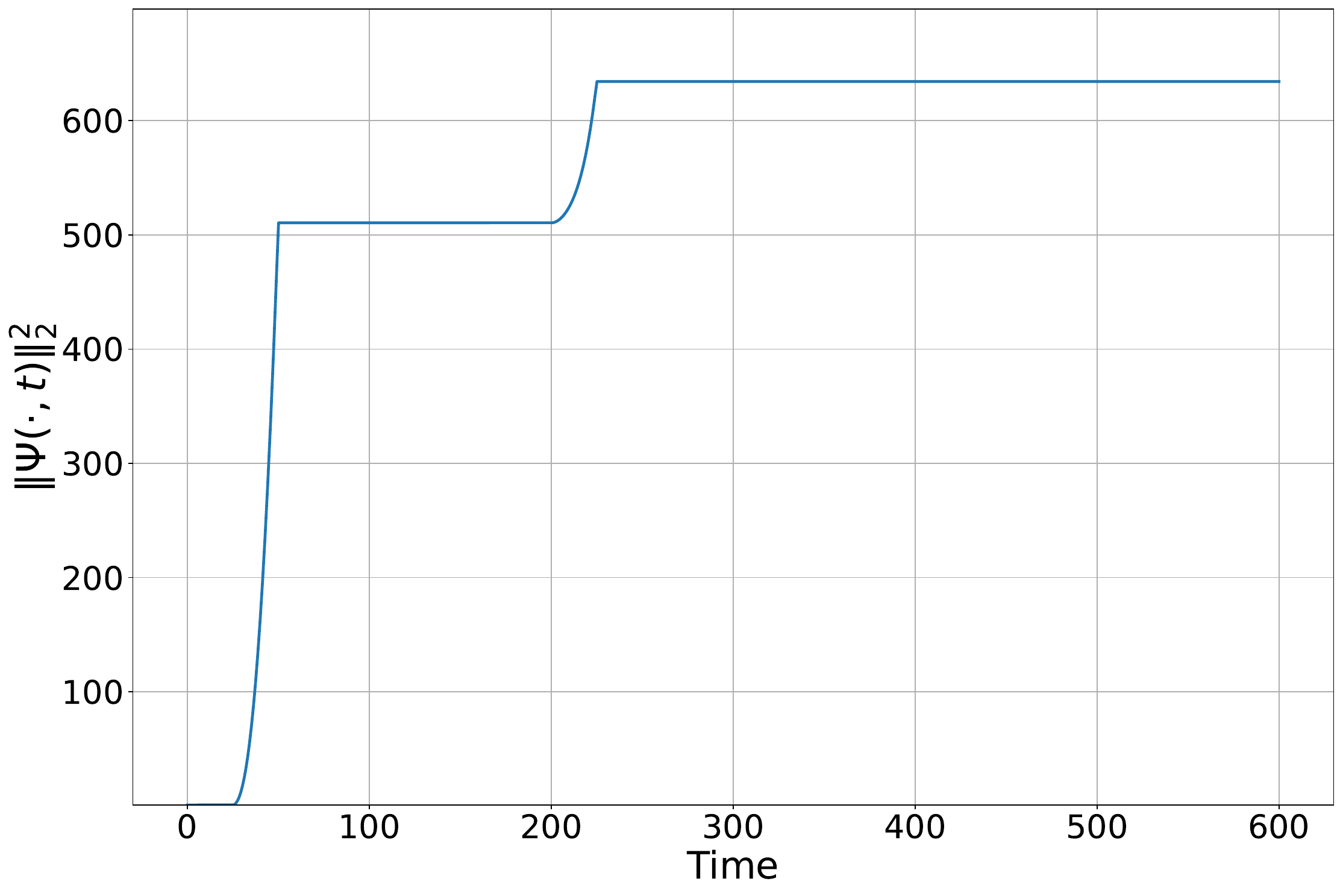}
        \caption{A numerical approximation for $\left\Vert\Psi \left( \cdot,t\right) \right\Vert _{2}^{2}$ for $x\in G_{6}$. The network behaves as an open system.}
        \label{fig:sim3_1_norm_trajectories}
    \end{subfigure}
    
    

    \caption{Numerical simulation 2}
    \label{figure2}
\end{figure}
\begin{figure}[htbp]
    \centering
    \includegraphics[width=0.5\linewidth]{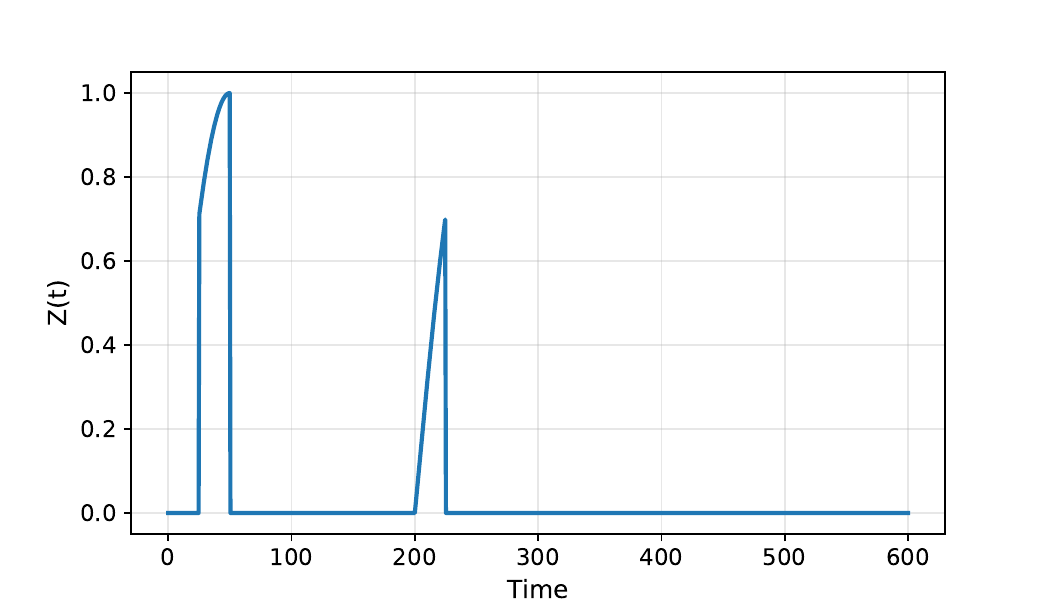}
    \caption{Pulse used in Simulation 2}
    \label{figure2A}
\end{figure}

\subsection{\label{Num_Sim_3}Numerical simulation 3}
(Validates the non-unitary regime driven by a constant external input, $W=0$, $Z\neq0$; continues Group B.)
The parameters are $p=3$, $l=6$, $\alpha=2.5$, $W=0$, $Z=10$. The  initial condition is $\Psi_0(x) = \Omega(p^2\vert x -4\vert_p)/\Vert\Omega(p^2\vert x -4\vert_p)\Vert_2$. The time runs over $[0,100]$ with step $\delta t = 0.001$. Every neuron gets a constant input $Z=10$, and there are no interactions between the neurons. This causes the norm  $\left\Vert\Psi \left( \cdot,t\right) \right\Vert _{2}^{2}$  to grow in time, see Figures \ref{Figure3}-(A) and (B). This calculation confirms that the evolution is not unitary.

\begin{figure}[H]
    \centering
    \begin{subfigure}[t]{0.50\linewidth}
        \centering
        \includegraphics[width=\linewidth]{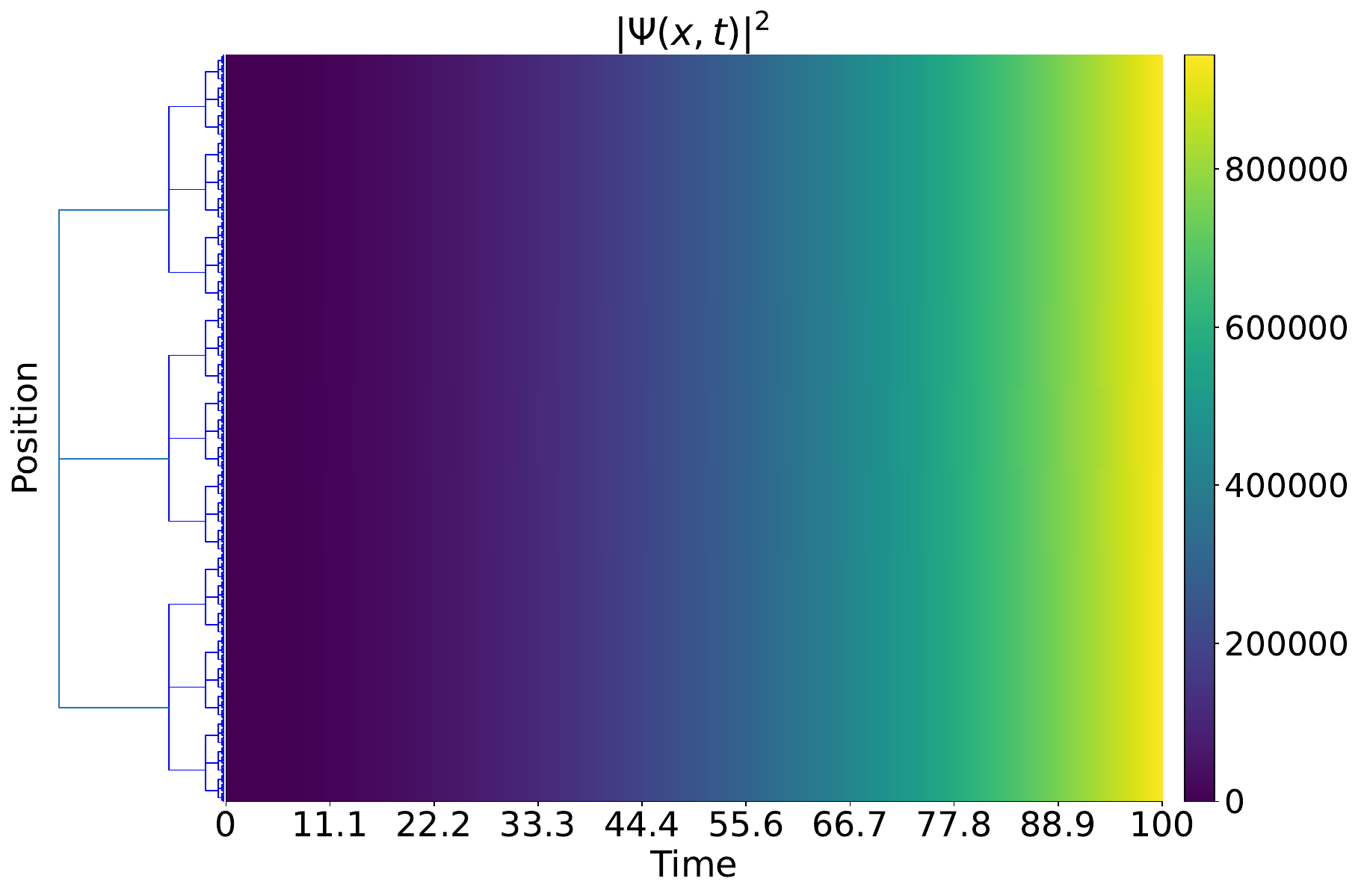}
        \caption{A numerical approximation for $\left\vert \Psi \left( x,t\right) \right\vert ^{2}$ for $x\in G_{6}$, and  $t\in[0,100]$.}
        \label{fig:sim21_l1_norm_sq}
    \end{subfigure}
    \hfill
    \begin{subfigure}[t]{0.48\linewidth}
        \centering
        \includegraphics[width=\linewidth]{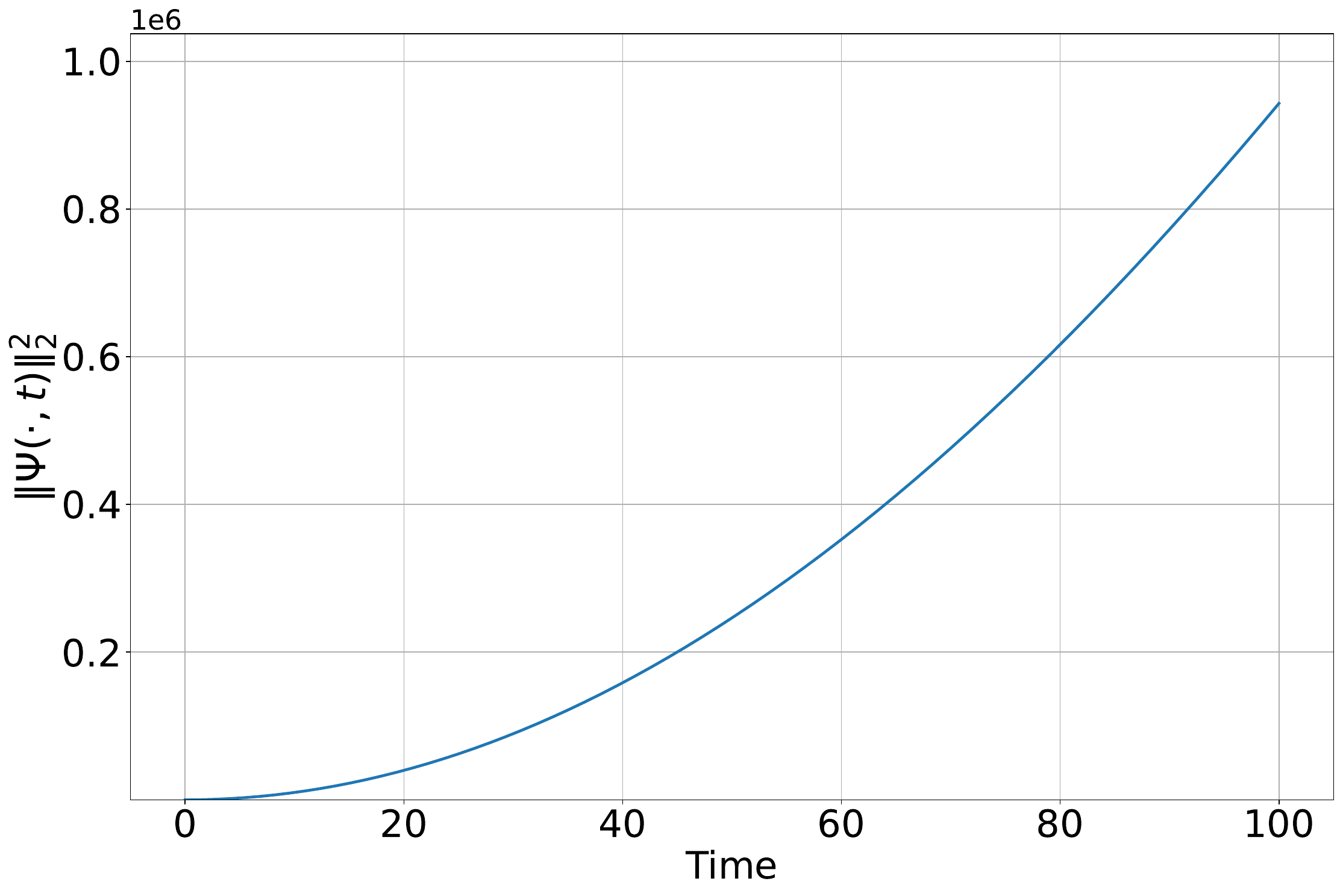}
        \caption{A numerical approximation for $\left\Vert\Psi \left( \cdot,t\right) \right\Vert _{2}^{2}$. The network behaves as an open quantum system.}
        \label{fig:sim21_norm_trajectories}
    \end{subfigure}
    \caption{Results for Simulation 3.1}
    \label{Figure3}
\end{figure}

\medskip
\textbf{Group C: Non-unitary evolution driven by neuron interaction ($W\neq0$).} The remaining simulations in this section switch on the interaction weight $W$ and study how it reshapes the network's response to an input, including the habituation phenomenon.

We now investigate the network's response with $W\neq 0$. We use the approximation of the cat connection matrix $W_{cat}$ developed in \cite{Zuniga-Entropy}. The parameters are $p=2$, $l=6$, $\alpha=2.5$, $W=0.1W_{cat}$, see  Figure \ref{figure cat_matrix}, $Z=10$, and the time runs over  $[0,600]$ with a step $\delta t = 0.001$. The initial condition is   $\Psi_0(x) = \Omega(p^2\vert x -4\vert_p)/\Vert\Omega(p^2\vert x -4\vert_p)\Vert_2$.  The use of prime $p=2$ is required due to the fact that the approximation for the cat connection matrix uses this prime. The interaction between the neurons changes the network's response to the constant input $Z=10$; see Figures \ref{Figure4}-(A) and (B). We interpret Figure \ref{Figure4}-(B) as showing that the network habituates to the constant signal. Habituation is a fundamental form of non-associative learning in which an organism reduces its behavioral response or attention to a repeated, harmless stimulus, \cite{Habituation}. The habituation appears when $W\neq 0$.

\begin{figure}[H]
    \centering

    \begin{subfigure}[t]{0.50\linewidth}
        \centering
        \includegraphics[width=\linewidth]{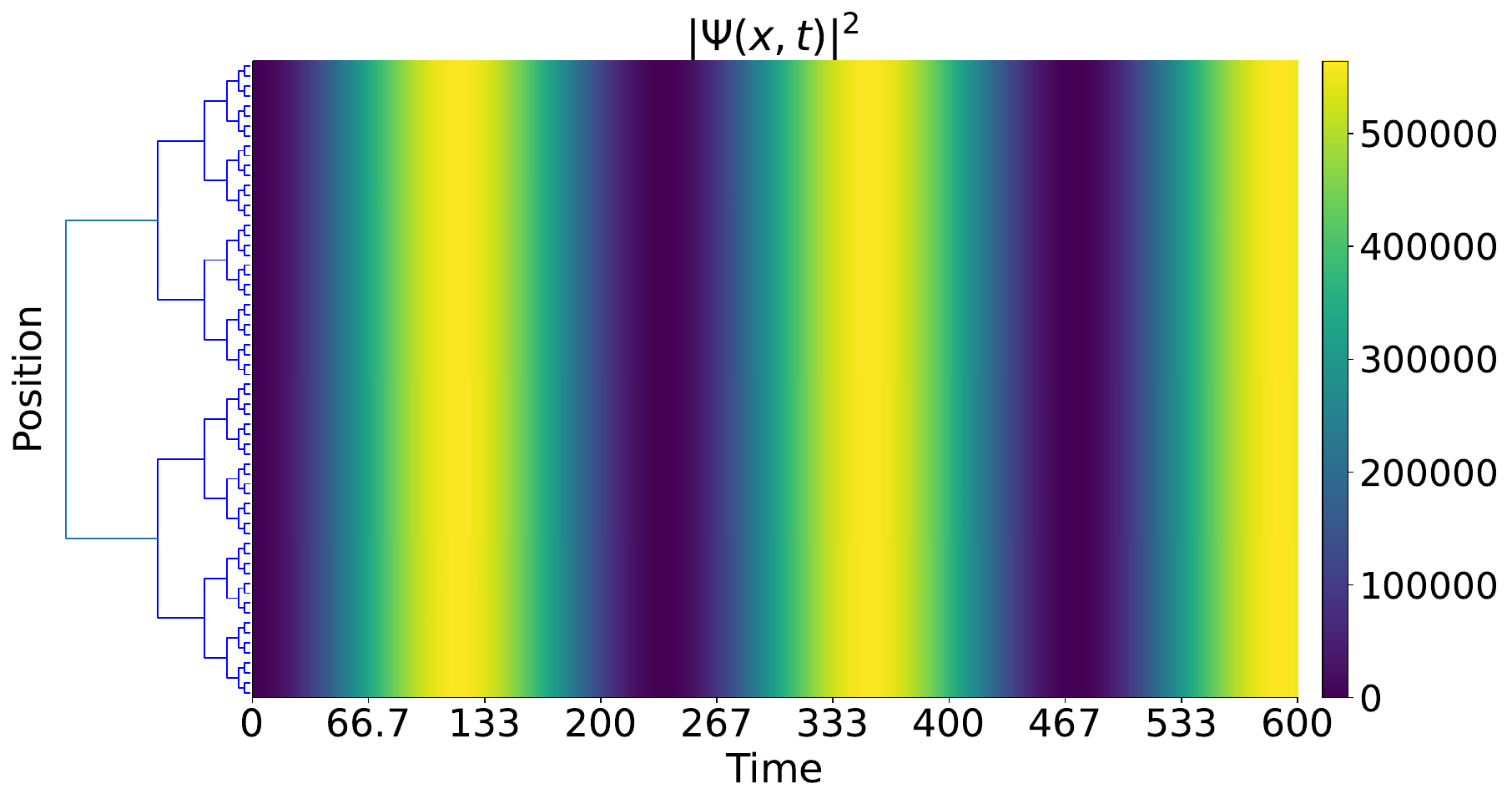}
        \caption{A numerical approximation for $\left\vert \Psi \left( x,t\right) \right\vert ^{2}$ for $x\in G_{6}=\mathbb{Z}_{2}/2^{6}\mathbb{Z}_{2}$, and  $t\in[0,600]$.}
        \label{fig:sim32_l1_norm_sq}
    \end{subfigure}
    \hfill
    \begin{subfigure}[t]{0.48\linewidth}
        \centering
        \includegraphics[width=\linewidth]{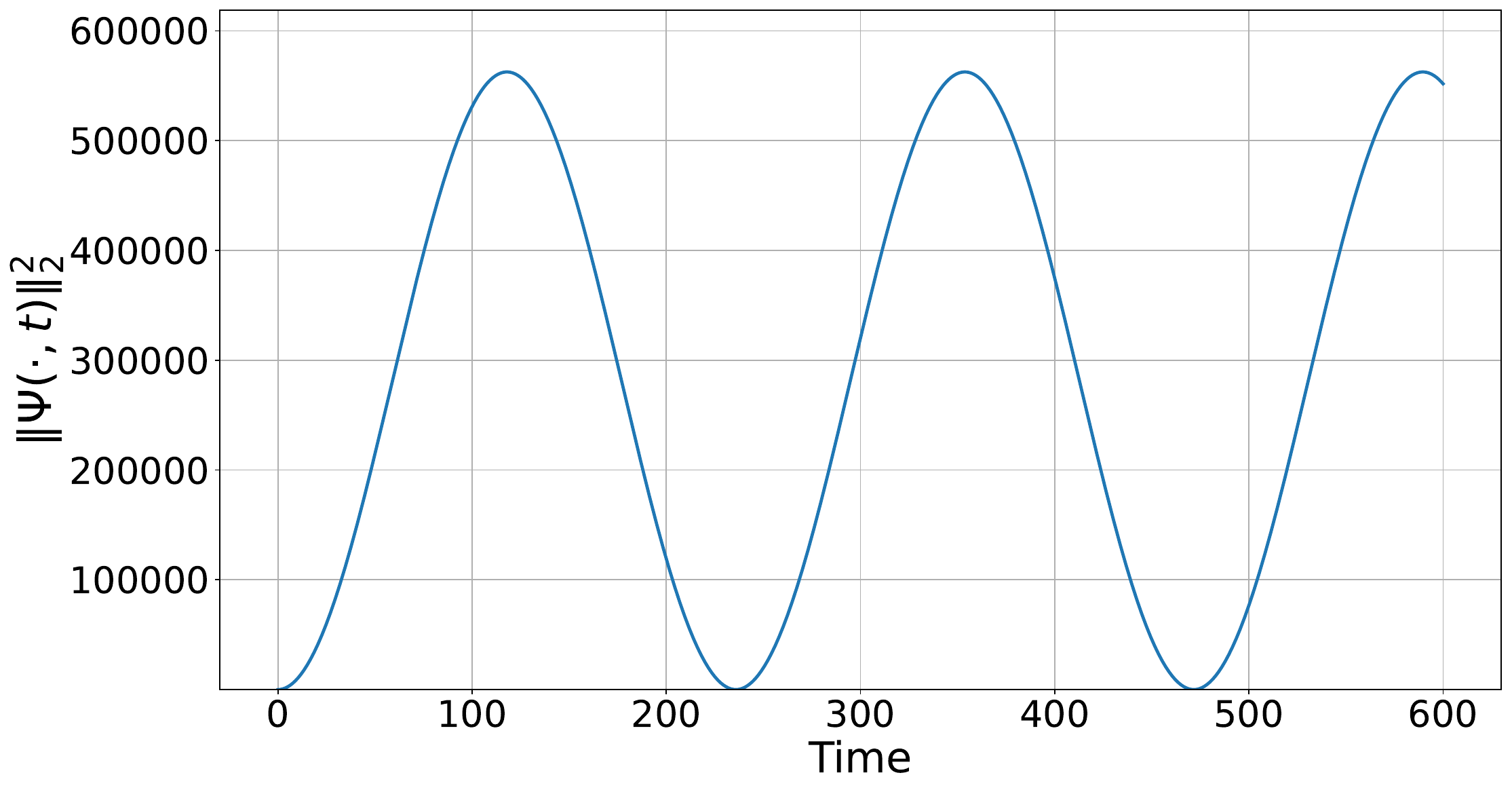}
        \caption{A numerical approximation for $\left\Vert\Psi \left( \cdot,t\right) \right\Vert _{2}^{2}$.}
        \label{fig:sim32_norm_trajectories}
    \end{subfigure}

    \caption{Results for Simulation 3.2}
    \label{Figure4}
\end{figure}

We now study the case where $W(x,y)=W_{0}$ is a constant function, which means that the interaction between neurons is controlled by the term%
\begin{equation*}
W_{0}\int\limits_{\mathbb{Z}_{p}}\phi \left( \Psi \left( y,t\right) \right)
dy.
\end{equation*}
The parameters are $p=3$, $l=6$, $\alpha=2.5$, $W=50$, $Z=0$, and the time runs over $[0,400]$ with step $\delta t = 0.001$. The initial condition is $\Psi_0(x) =0.5 + 0.3i$.  In this case, the initial datum serves as the input.
The results in Figures \ref{Figure5}-(A) and (B) indicate habituation in the network.
\begin{figure}[H]
    \centering

    \begin{subfigure}[t]{0.48\linewidth}
        \centering
        \includegraphics[width=\linewidth]{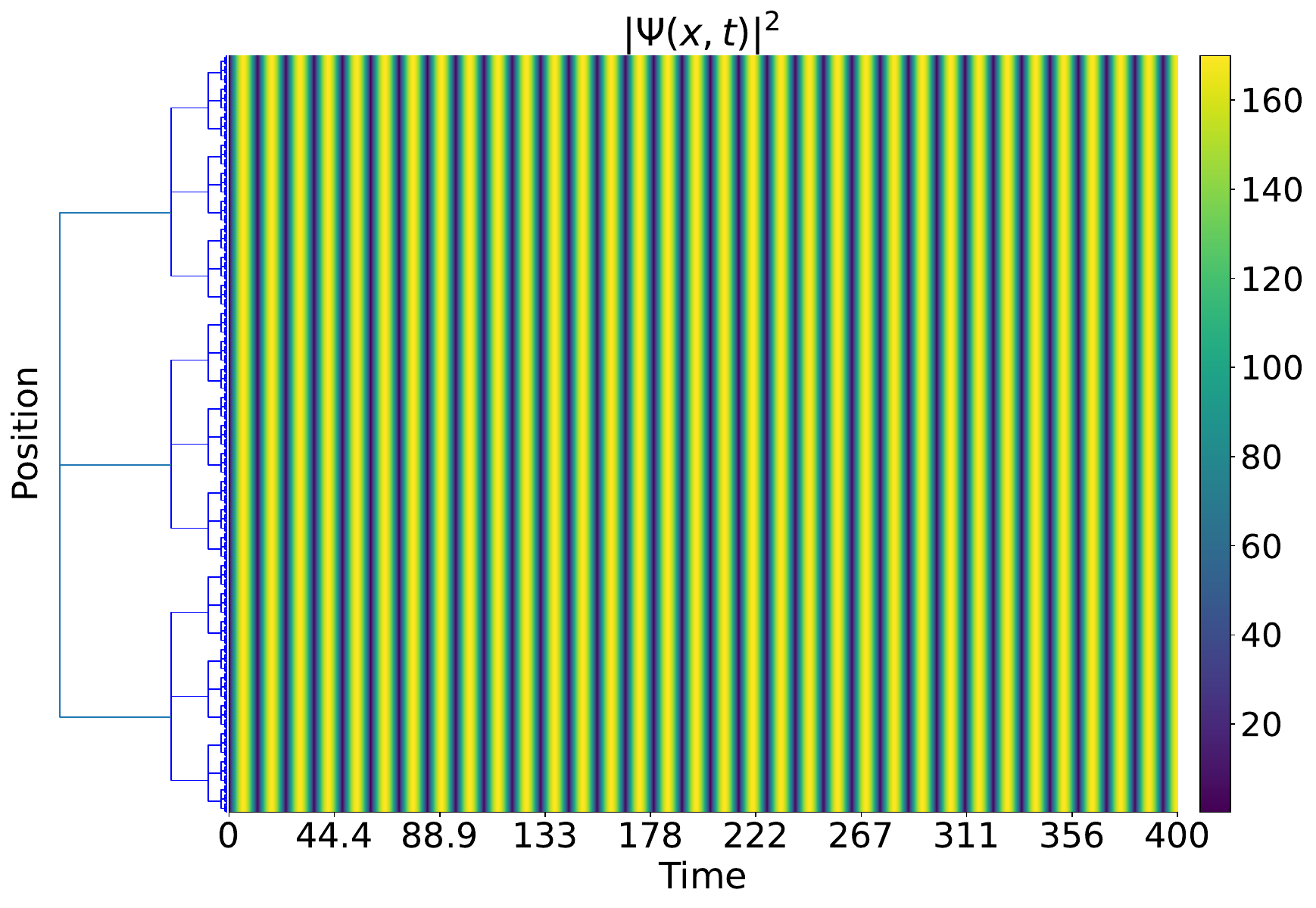}
        \caption{A numerical approximation for  $\left\vert \Psi \left( x ,t\right) \right\vert ^{2}$.}
        \label{fig:sim17_1_l1_norm_sq}
    \end{subfigure}
    \hfill
    \begin{subfigure}[t]{0.48\linewidth}
        \centering
        \includegraphics[width=\linewidth]{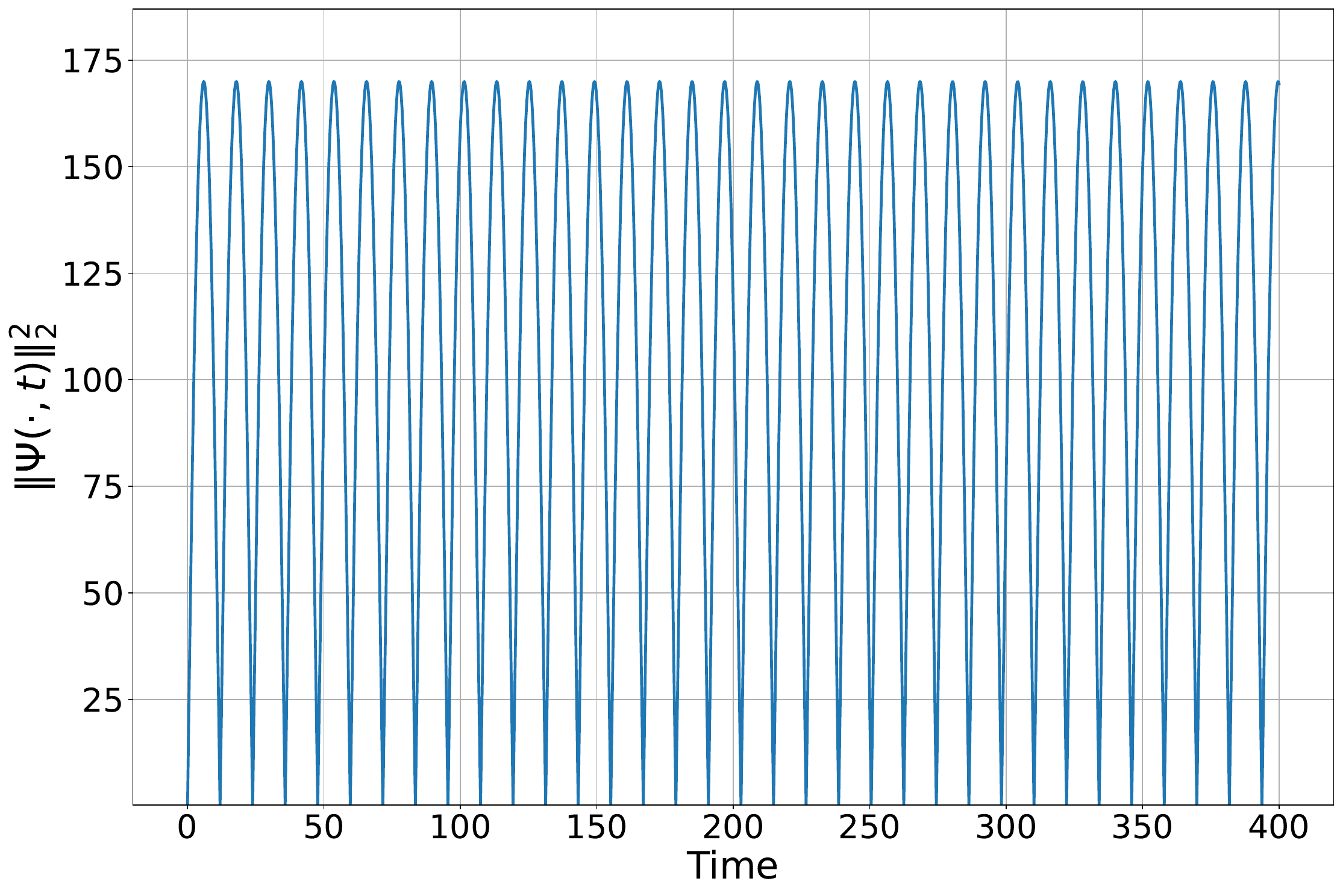}
        \caption{A numerical approximation for $\left\Vert\Psi \left( \cdot,t\right) \right\Vert _{2}^{2}$ }
        \label{fig:sim17_1_norm_trajectories}
    \end{subfigure}

    \caption{Results for Simulation 3.3}
    \label{Figure5}
\end{figure}

\subsection{\label{Num_Sim_4}Numerical simulations 4}
(Validates the non-unitary, interaction-driven regime of equation \mbox{\eqref{EQ_0A}}, using the cat-cortex approximation of $W$; part of Group C.)
The parameters are  $p=2$, $l=6$,  $W=0.1W_{cat}$ (see  Figure \ref{figure cat_matrix}), $Z(x,t)=\sin(0.1 \pi t) \Omega(p^2\vert x - 3\vert_p)1_{[25,50]}(t) + \sin(0.1 \pi t) 1_{[200,225]}(t)\Omega(p^2\vert x \vert_p)$, and $\alpha=1.6$. The initial condition is $\Psi_0(x) =0 + 0i$, and the time runs over  $[0,600]$ with a step $\delta t = 0.001$. In this case, the input $Z(x,t)$ consists of two pulses in time and space; see Figure \ref{Figure6}-(A). The numerical simulations indicate that the network responds to the pulses; see Figures \ref{Figure6}-(A) and (B).

\begin{figure}[H]
    \centering
    \begin{subfigure}[t]{0.50\linewidth}
        \centering
        \includegraphics[width=\linewidth]{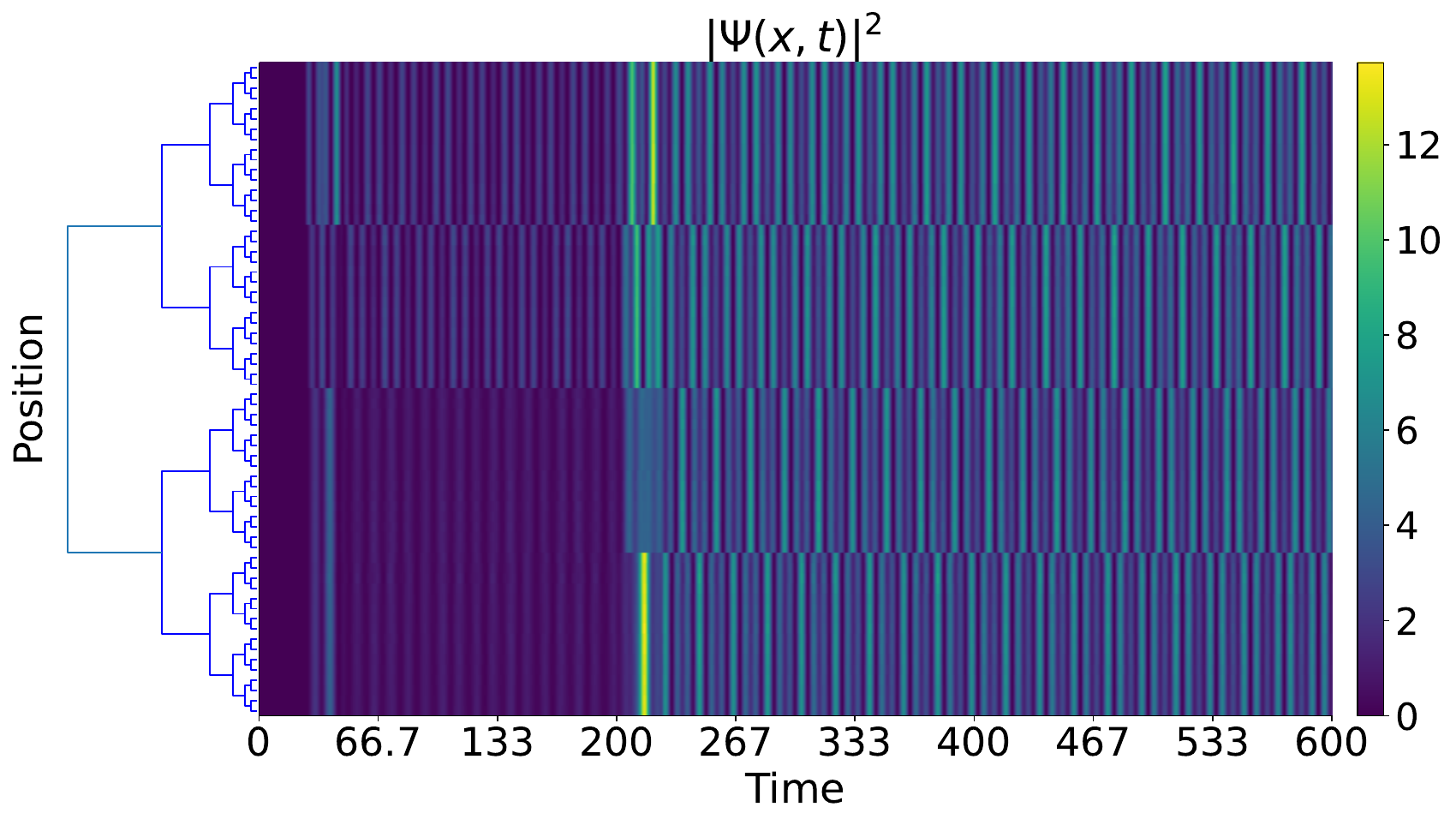}
        \caption{A numerical approximation for $\left\vert \Psi \left( x ,t\right) \right\vert ^{2}$.}
        \label{fig:sim3_12_l1_norm_sq}
    \end{subfigure}
    \hfill
    \begin{subfigure}[t]{0.48\linewidth}
        \centering
        \includegraphics[width=\linewidth]{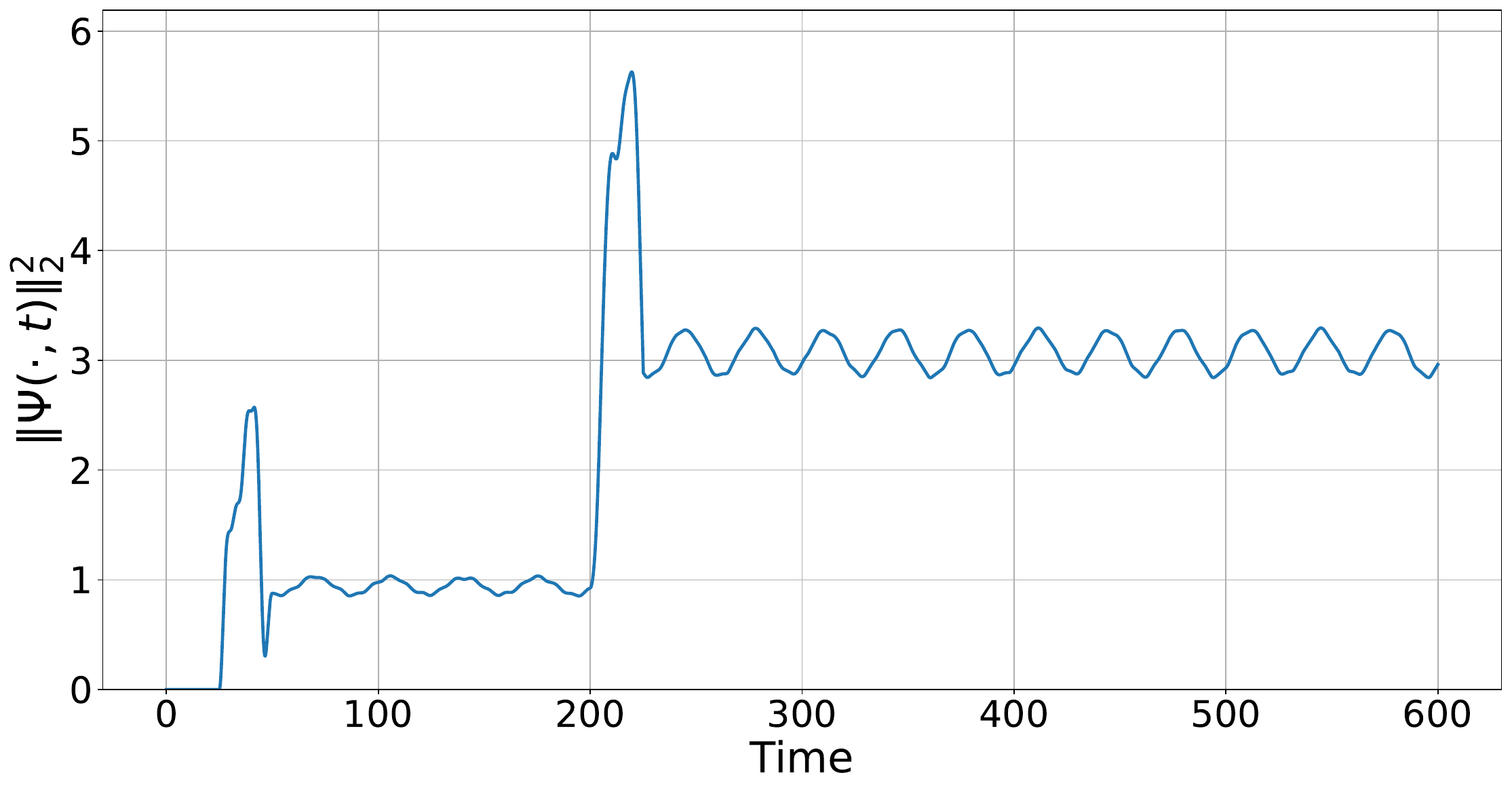}
        \caption{A numerical approximation for $\left\Vert\Psi \left( \cdot,t\right) \right\Vert _{2}^{2}$. }
        \label{fig:simulation_7_norm_trajectories}
    \end{subfigure}

    \caption{Results for Simulation 4.1}
    \label{Figure6}
\end{figure}

In the next simulation, we use the same values for the parameters $p, l, W, \alpha$. The time runs over $[0,1500]$ with step $\delta t = 0.001$. But now there are two inputs, 

\[
Z(x,t)=\sin(0.1 \pi t) 1_{[25,50]}(t) + \sin( \pi t) 1_{[200,225]}(t)+ \sin(10 \pi t) 1_{[800,1225]}(t)  + (0.1)  1_{[1225, \infty)}(t),
\]
and the initial datum 
\[
\Psi_0(x) = \Omega(p^2\vert x -4\vert_p)/\Vert\Omega(p^2\vert x -4\vert_p)\Vert_2.
\]
Notice that for $t\geq 1225$, $Z(x,t)=0.1$. This allows us to test the network's habituation. Figures \ref{Figure7}-(A) and (B) show the network's responses to each pulse, and indicate habituation to the constant input  $Z(x,t)=0.1$, for  $t\geq 1225$. Figure \ref{figure7A} shows the pulse used in the simulation.
\begin{figure}[H]
    \centering

    \begin{subfigure}[t]{0.49\linewidth}
        \centering
        \includegraphics[width=\linewidth]{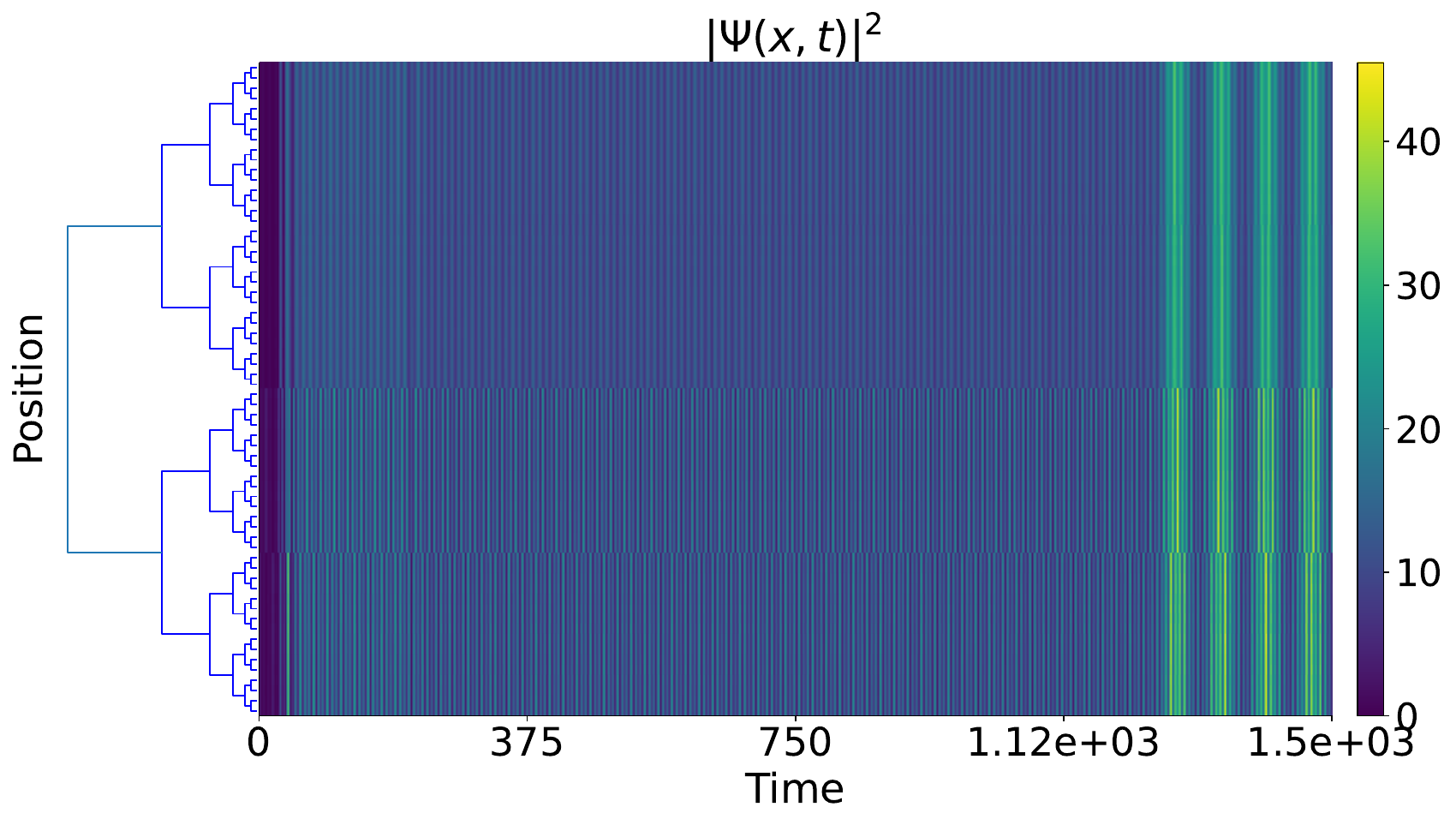}
        \caption{A numerical approximation for $\left\vert \Psi \left( x ,t\right) \right\vert ^{2}$.}
        \label{fig:simulation_8_A_l1_norm_sq}
    \end{subfigure}
    \hfill
    \begin{subfigure}[t]{0.48\linewidth}
        \centering
        \includegraphics[width=\linewidth]{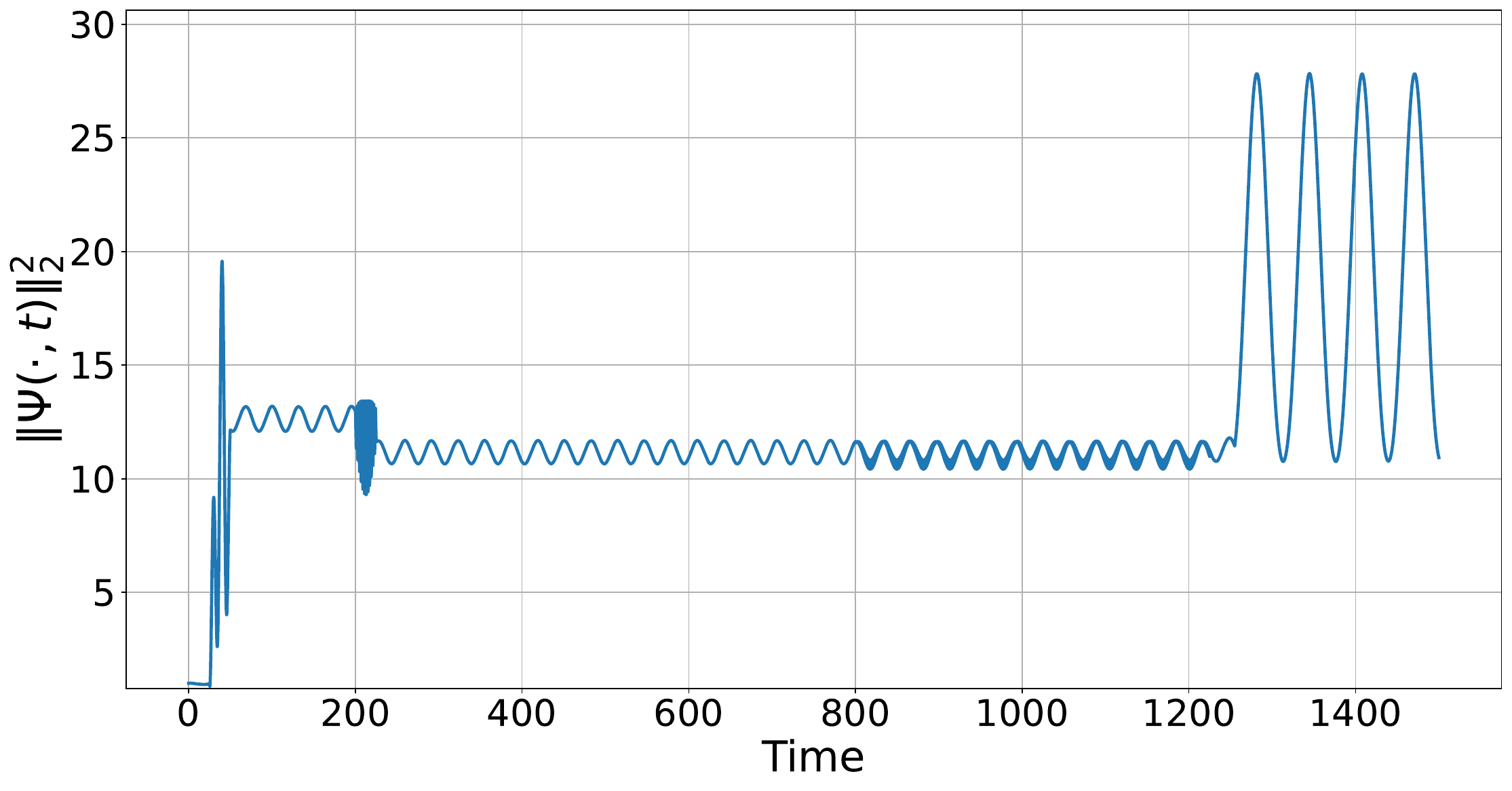}
        \caption{A numerical approximation for $\left\Vert\Psi \left( \cdot,t\right) \right\Vert _{2}^{2}$.}
        \label{fig:simulation_8_A_norm_trajectories}
    \end{subfigure}
    
    

    \caption{Results for Simulation 4.2}
    \label{Figure7}
\end{figure}
\begin{figure}[htbp]
    \centering
    \includegraphics[width=0.75\linewidth]{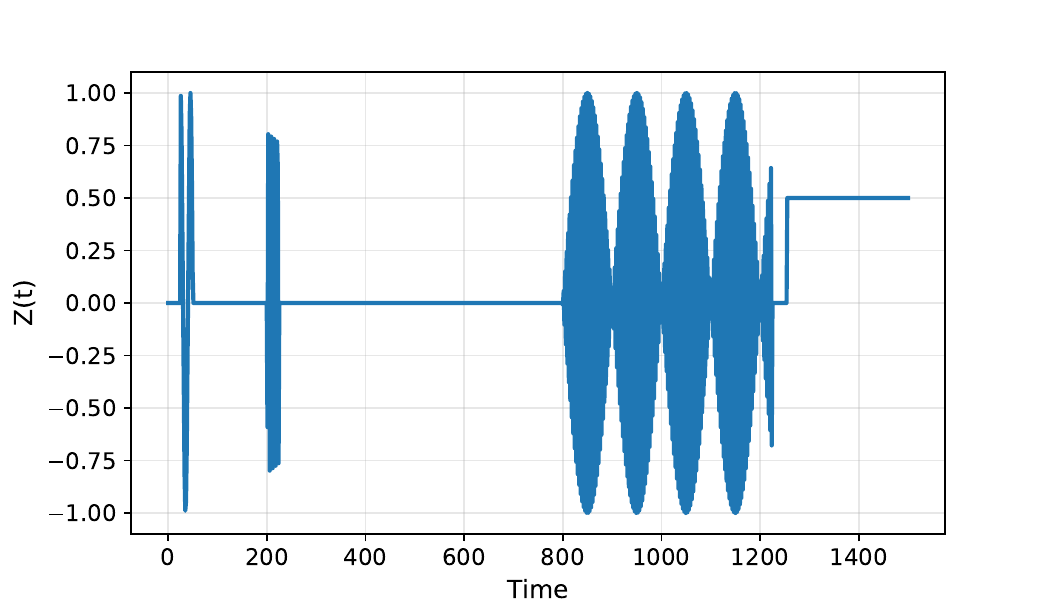}
    \caption{Pulse used in Simulation 4.2}
    \label{figure7A}
\end{figure}

In the next simulation, we keep all the parameters but set $W=1.0$, indicating that the neurons interact. In this case, the network responds poorly to the pulses; see Figures \ref{Figure8}-(A) and (B).

\begin{figure}[H]
    \centering

    \begin{subfigure}[t]{0.49\linewidth}
        \centering
        \includegraphics[width=\linewidth]{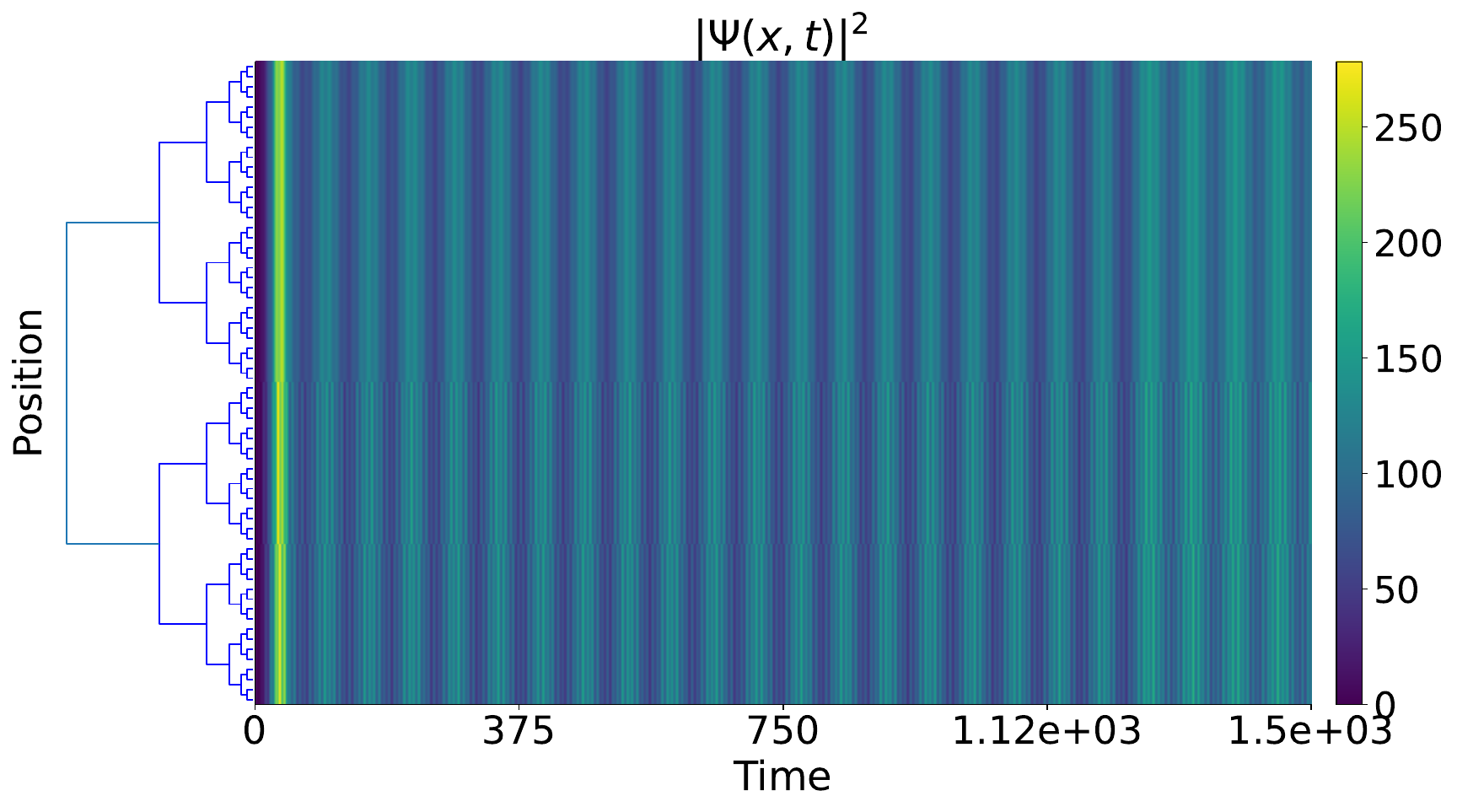}
        \caption{A numerical approximation for $\left\vert \Psi \left( x ,t\right) \right\vert ^{2}$.}
        \label{fig:simulation_9_B_l1_norm_sq}
    \end{subfigure}
    \hfill
    \begin{subfigure}[t]{0.48\linewidth}
        \centering
        \includegraphics[width=\linewidth]{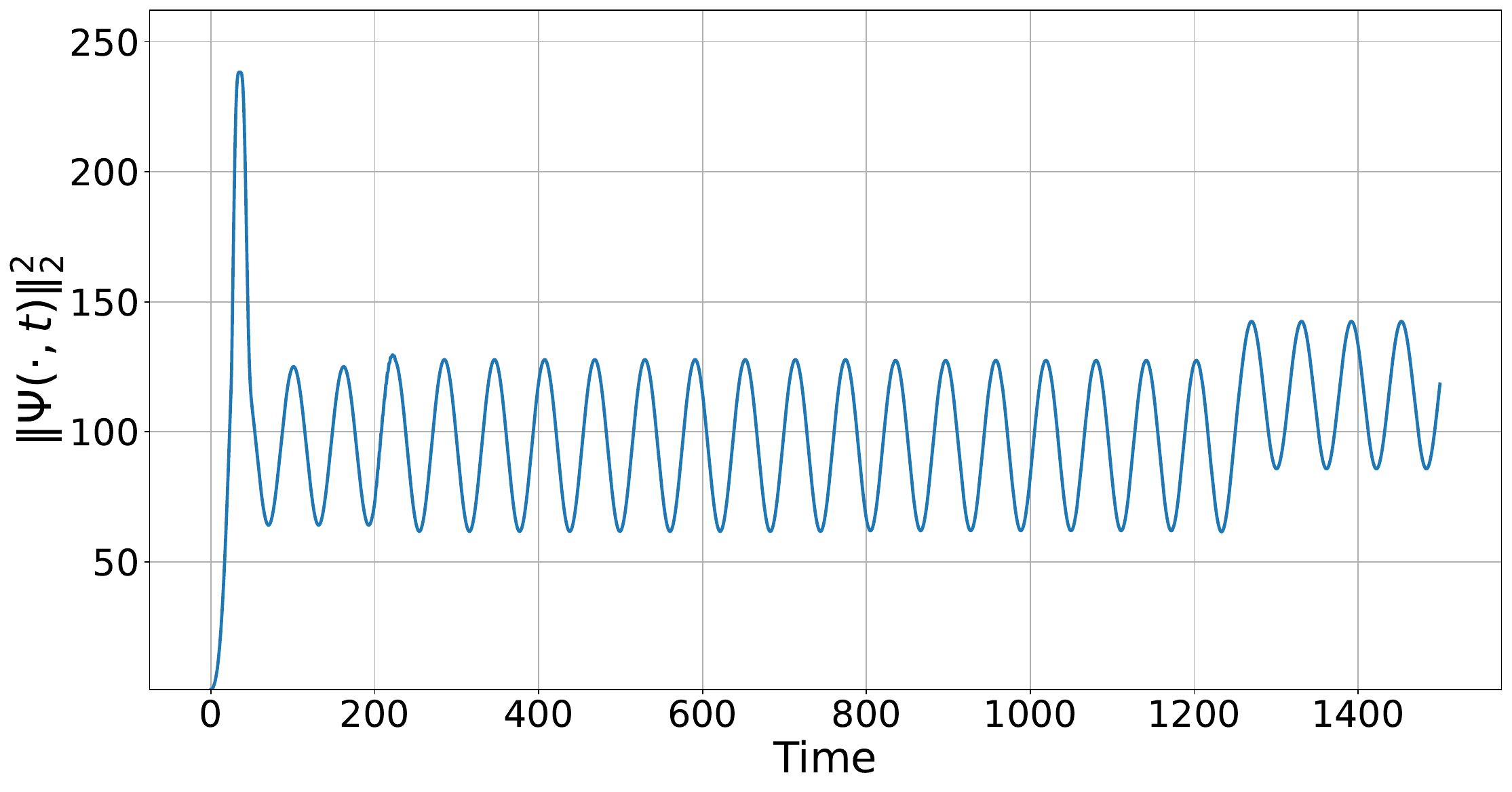}
        \caption{A numerical approximation for $\left\Vert\Psi \left( \cdot,t\right) \right\Vert _{2}^{2}$.}
        \label{fig:simulation_9_B_norm_trajectories}
    \end{subfigure}
    \caption{Results for Simulation 4.3}
    \label{Figure8}
\end{figure}

\subsection{\label{Num_Sim_5}Numerical simulations 5}
(Validates the sensitivity of the interaction-driven regime to the magnitude $\Vert W\Vert_\infty$; part of Group C -- see also the sensitivity discussion of response 3.2 in the accompanying rebuttal.)
The parameters are $p=2$, $l=6$, and $\alpha=1.6$, 
\[Z(x,t)=\sin(0.1 \pi t)1_{[25,50]}(t) + \sin(0.1 \pi t) 1_{[200,225]}(t) + (0.5) 1_{(225,\infty)}(t),
\]
and the initial condition is $\Psi_0(x) =0 + 0i$. The time runs over $[0,600]$ with a step size $\delta t = 0.001$. We perform a simulation with $W=10.0W_{cat}$ (see  Figure \ref{figure cat_matrix}). The simulations (see Figure \ref{Figure9}) show that the network's response is affected by $\left\Vert W\right\Vert_{\infty}$. In this simulation, the network responds poorly to the pulses $Z(x,t)$.

\begin{figure}[H]
    \centering

    \begin{subfigure}[t]{0.50\linewidth}
        \centering
        \includegraphics[width=\linewidth]{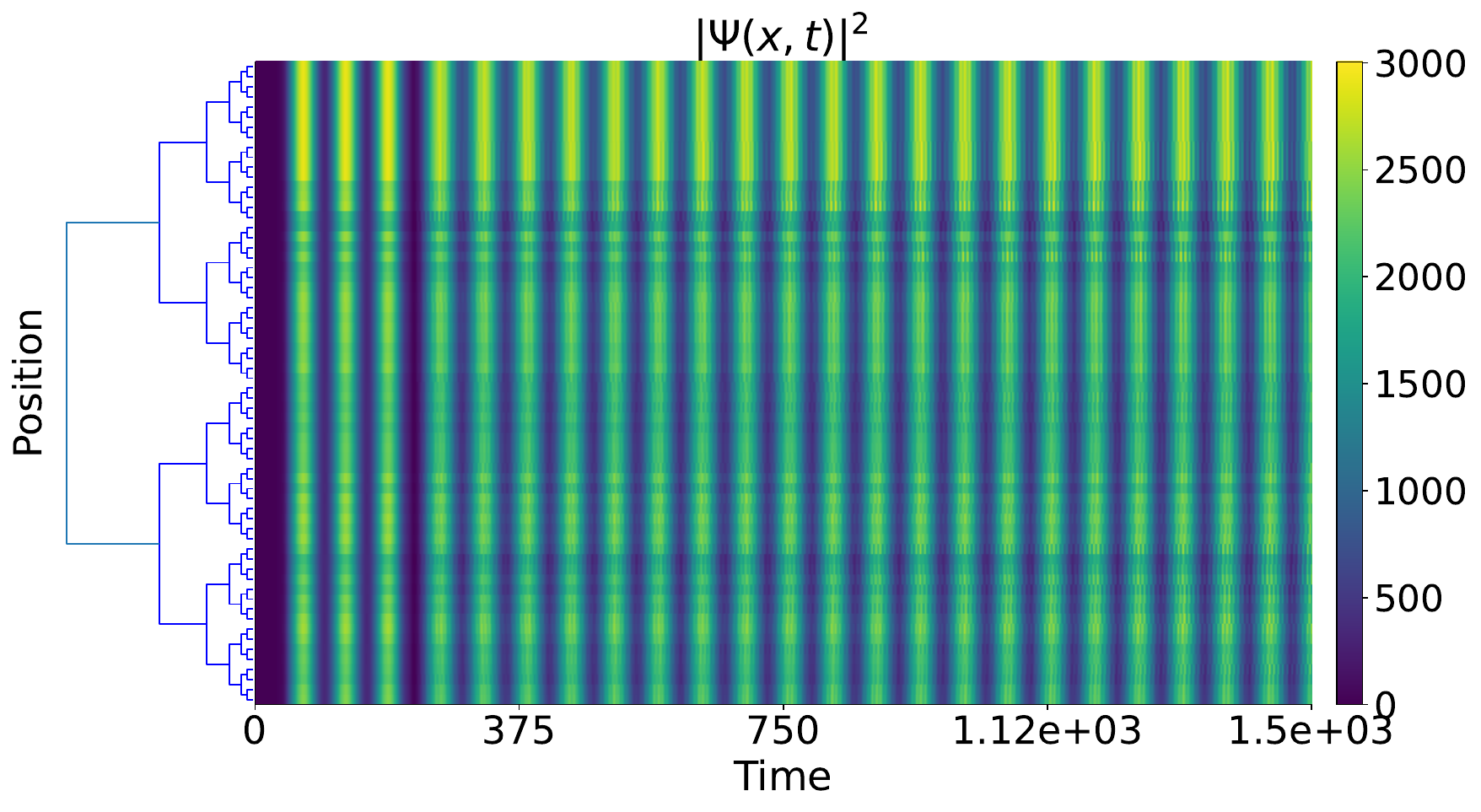}
        \caption{A numerical approximation for $\left\vert \Psi \left( x ,t\right) \right\vert ^{2}$.}
        \label{fig:sim3_10_l1_norm_sq}
    \end{subfigure}
    \hfill
    \begin{subfigure}[t]{0.48\linewidth}
        \centering
        \includegraphics[width=\linewidth]{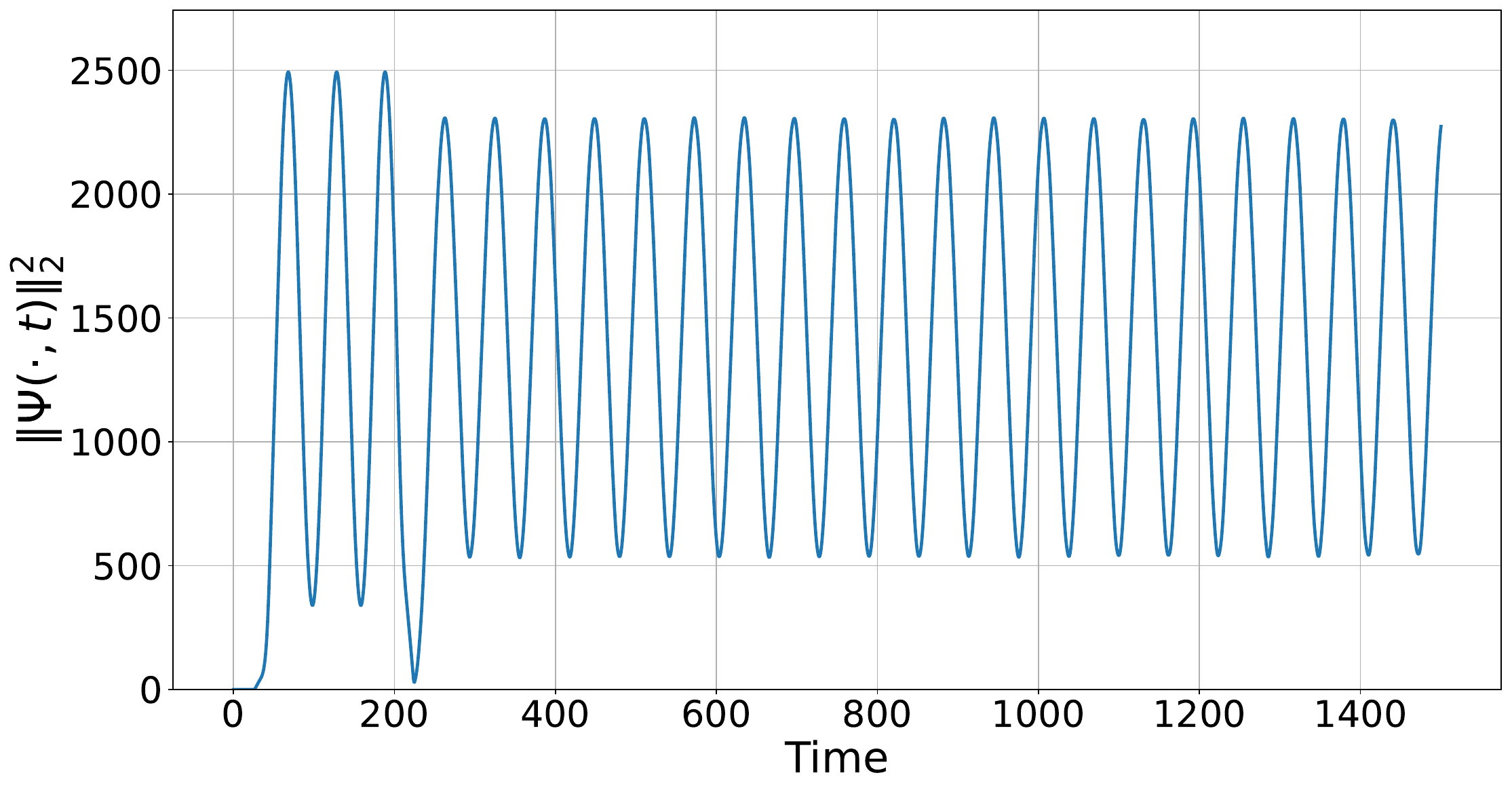}
        \caption{A numerical approximation for $\left\Vert\Psi \left( \cdot,t\right) \right\Vert _{2}^{2}$.}
        \label{fig:sim3_10_norm_trajectories}
    \end{subfigure}

    \caption{Results for Simulation 5}
    \label{Figure9}
\end{figure}

\medskip
\textbf{Group D: Classical (real-valued) $p$-adic CNN, for comparison.} The final group returns to the classical (pre-Wick-rotation) $p$-adic CNN of \mbox{\cite{Zambrano-Zuniga-1}}, i.e. the real-valued heat-type equation \mbox{\eqref{Model 3}} below, so that the quantum habituation phenomena of Group C can be compared directly against their classical counterpart.

\subsection{Numerical simulations 6}
(Validates that the quantum habituation phenomenon of Group C has a classical, real-valued analog, and highlights how the two differ; see the discussion after equation \mbox{\eqref{Model 3}}.)
This subsection provides simulations for $p$-adic CNNs of type
\begin{equation}
\frac{\partial }{\partial t}u\left( x,t\right) =J\left( \left\vert
x\right\vert _{p}\right) \ast u\left( x,t\right) -u\left( x,t\right)
+\int\limits_{\mathbb{Z}_{p}}W\left( x,y\right) \phi \left( u\left(
y,t\right) \right) dy+Z(x,t),  \label{Model 3}
\end{equation}%
where $J:\left[ 0,1\right] \rightarrow \mathbb{R}_{\geq 0}$, and $\int_{%
\mathbb{Z}_{p}}J\left( \left\vert x\right\vert _{p}\right) dx=1$. The quantum CNNs are obtained by Wick-rotating the heat equation above. 
In the simulations, we use the parameters $p=2$, $l=6$, $\alpha=2.5$, the initial condition $u_0(x) =0 $.
We first consider the case $W=0$, $Z(x,t)=\sin(0.01 \pi t) 1_{[2,10]}(t)$,   $t\in [0,100]$, with $\delta t = 0.001$. Figures \ref{Figure11}-(A) and (B) show the response of the network to the pulse under the condition that the neurons do not interact. In the next numerical experiment, we keep all the parameters but change $W$ to $0.05W_{cat}$, and $t\in[0, 400]$. The results are shown in Figures \ref{Figure12}-(A) and (B). The interaction between the neurons produces a faster decay of the output measured using $\left\Vert\Psi \left( \cdot,t\right) \right\Vert _{2}^{2}$.
In the next experiment, we keep all the parameters but change 
\[Z(x,t)=\sin(0.01 \pi t) 1_{[25,50]}(t) + \sin(0.01 \pi t) 1_{[200,225]}(t) + (0.5) 1_{[225, \infty)}(t),\]
for $t\in [0,600]$, and $W=0.1W_{cat}$. The results are shown in Figures \ref{Figure13}-(A) and (B). The network has a constant steady state, which we interpret as habituation, that differs from that shown by the quantum CNN.

\begin{figure}[H]
    \centering

    \begin{subfigure}[t]{0.50\linewidth}
        \centering
        \includegraphics[width=\linewidth]{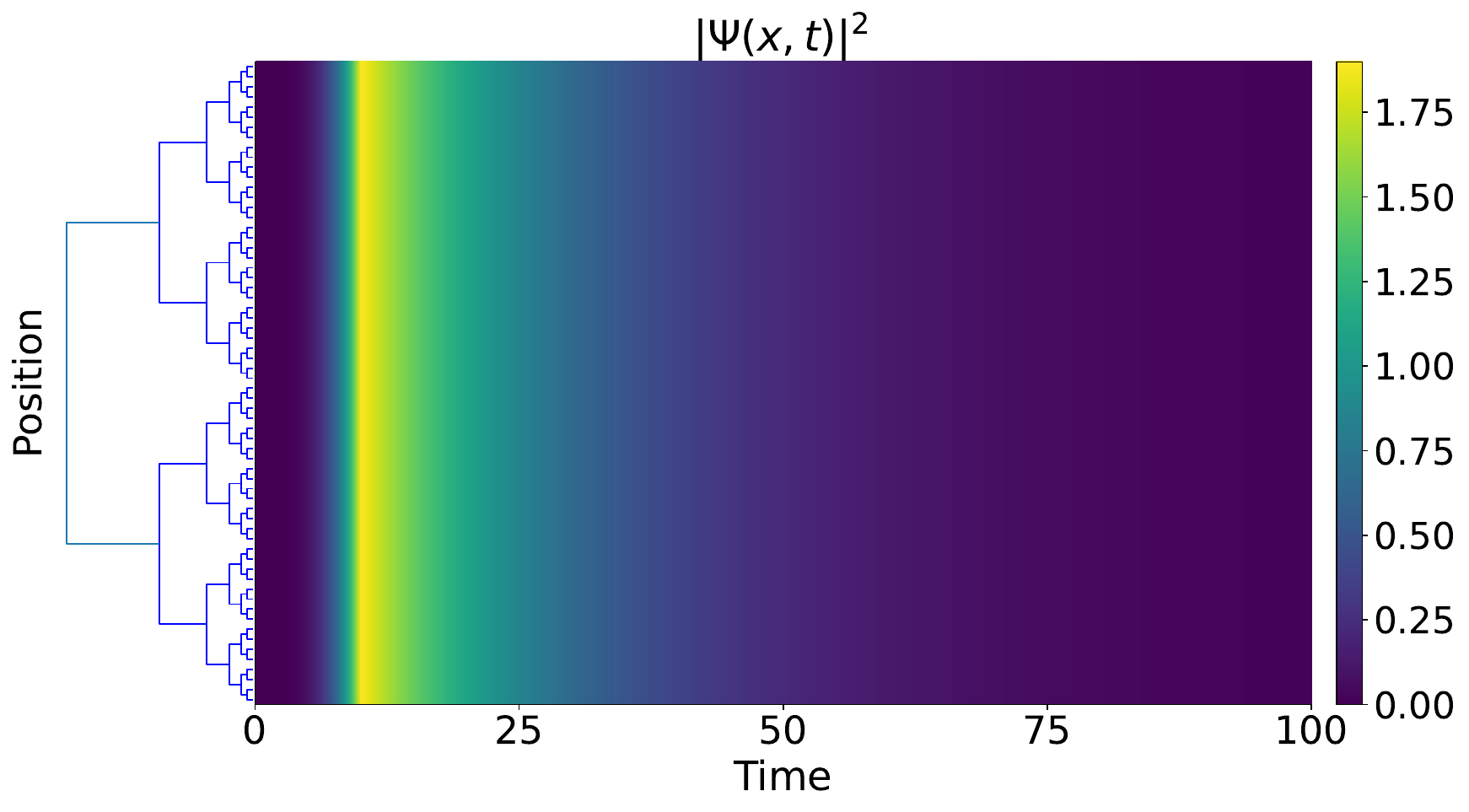}
        \caption{A numerical approximation for $\left\vert \Psi \left( x ,t\right) \right\vert ^{2}$.}
        \label{fig:sim2_5_3_l1_norm_sq}
    \end{subfigure}
    \hfill
    \begin{subfigure}[t]{0.48\linewidth}
        \centering
        \includegraphics[width=\linewidth]{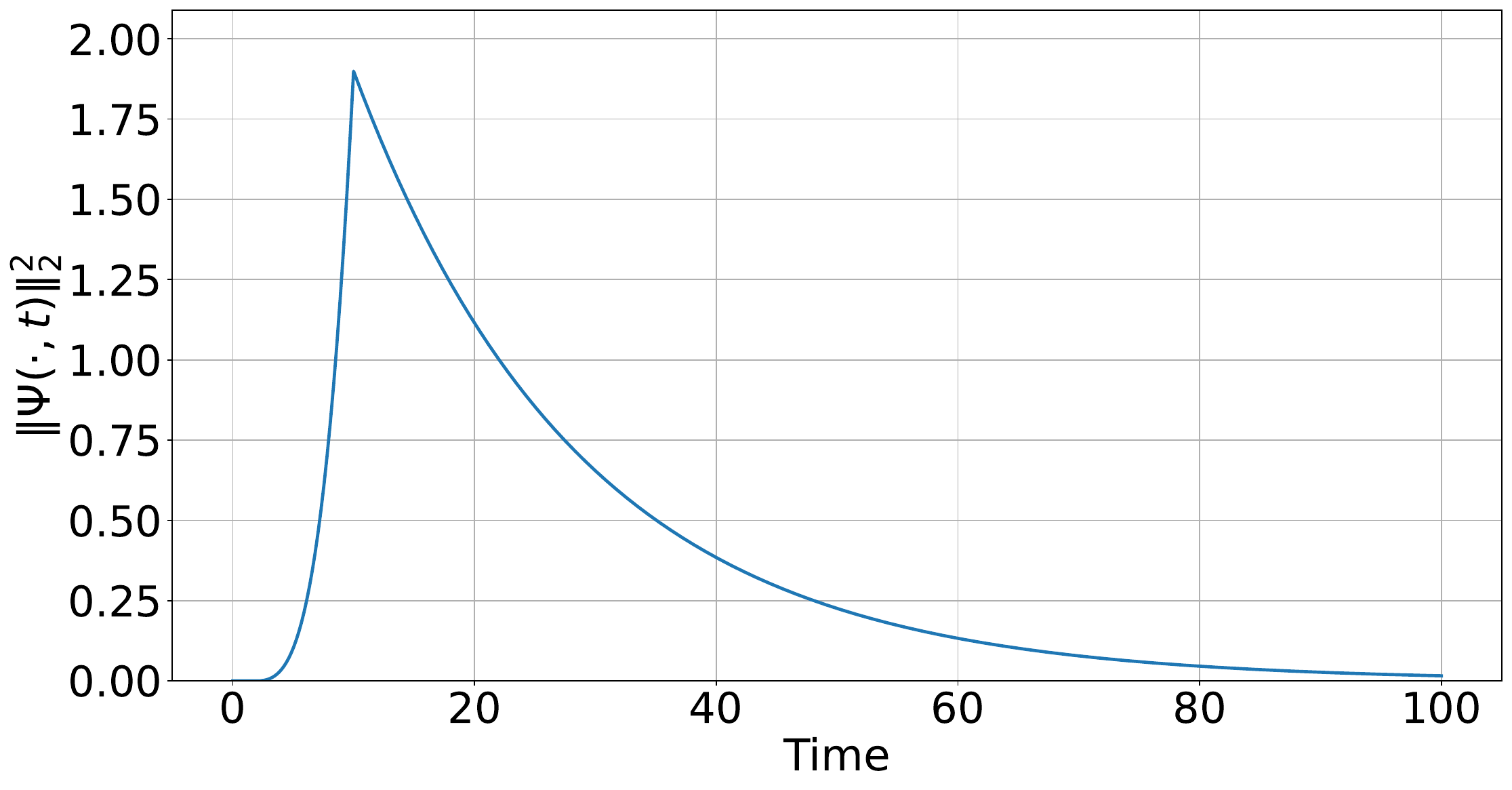}
        \caption{A numerical approximation for $\left\Vert\Psi \left( \cdot,t\right) \right\Vert _{2}^{2}$.}
        \label{fig:sim2_5_3_norm_trajectories}
    \end{subfigure}

    \caption{Results for Simulation 6.1}
    \label{Figure11}
\end{figure}

\begin{figure}[H]
    \centering

    \begin{subfigure}[t]{0.50\linewidth}
        \centering
        \includegraphics[width=\linewidth]{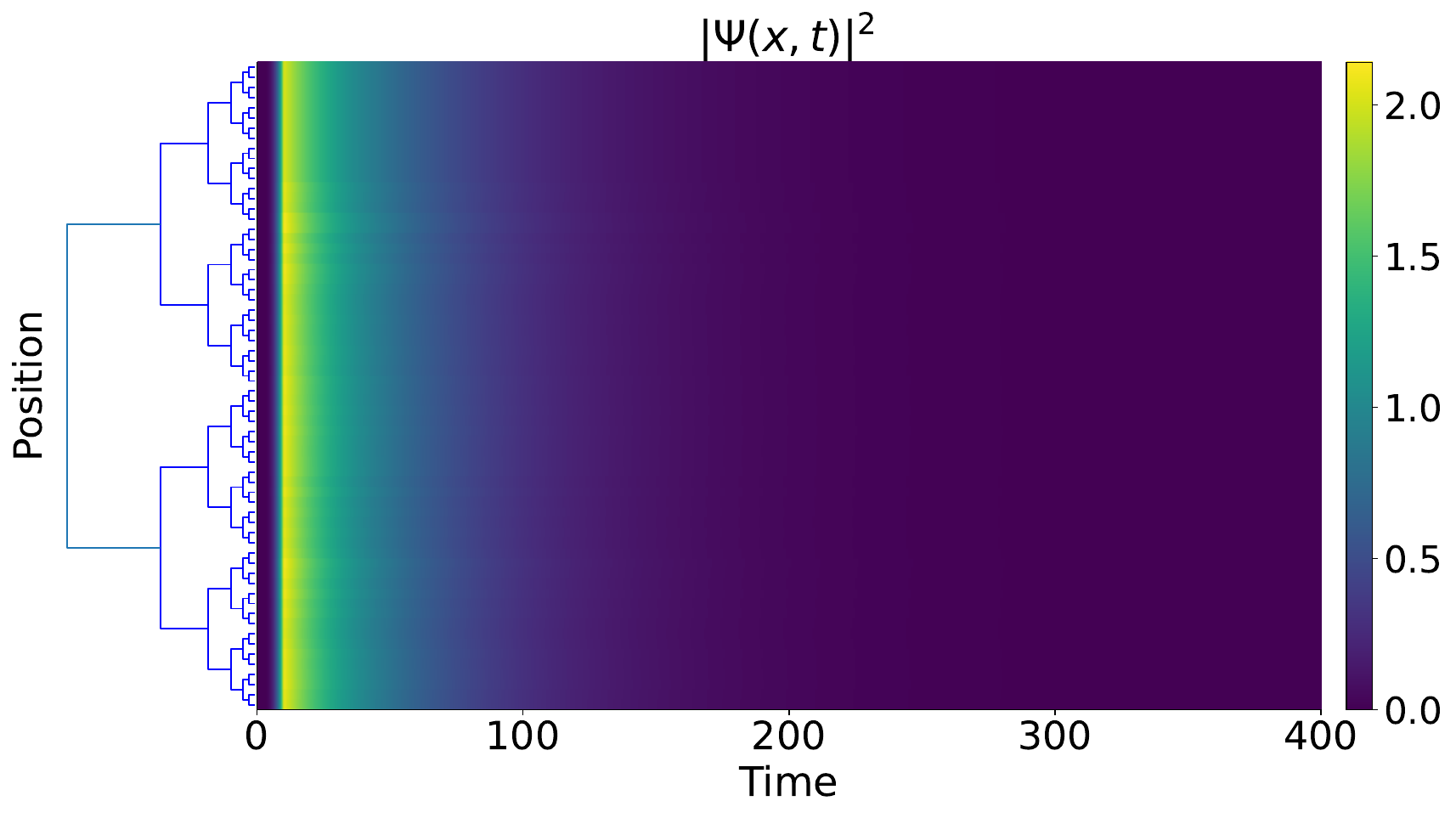}
        \caption{A numerical approximation for $\left\vert \Psi \left( x ,t\right) \right\vert ^{2}$.}
        \label{fig:catmatrix_sim1_3_l1_norm_sq}
    \end{subfigure}
    \hfill
    \begin{subfigure}[t]{0.48\linewidth}
        \centering
        \includegraphics[width=\linewidth]{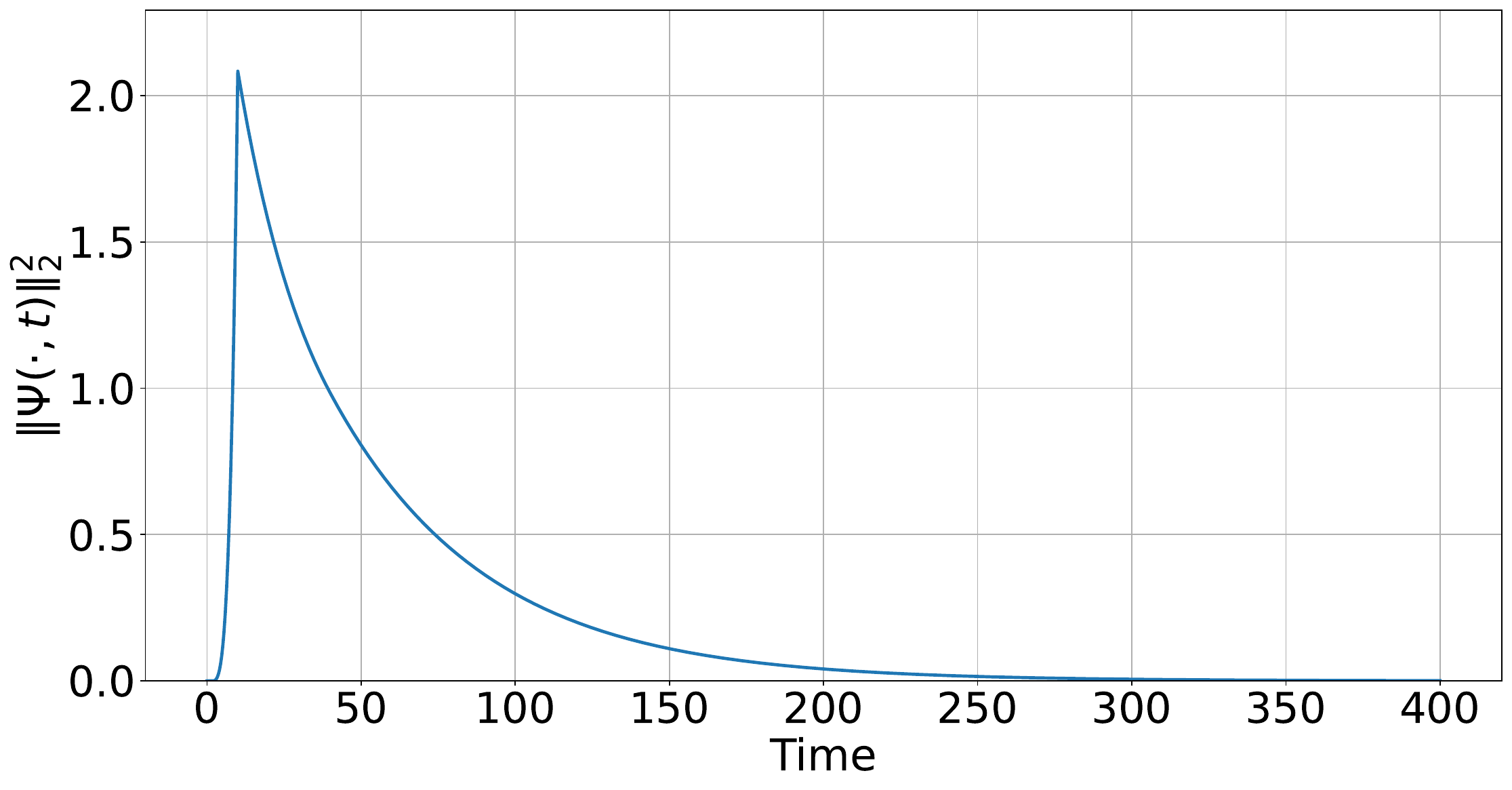}
        \caption{A numerical approximation for $\left\Vert\Psi \left( \cdot,t\right) \right\Vert _{2}^{2}$}
        \label{fig:catmatrix_sim1_3_norm_trajectories}
    \end{subfigure}

    \caption{Numerical simulation 6.2}
    \label{Figure12}
\end{figure}

\begin{figure}[H]
    \centering

    \begin{subfigure}[t]{0.50\linewidth}
        \centering
        \includegraphics[width=\linewidth]{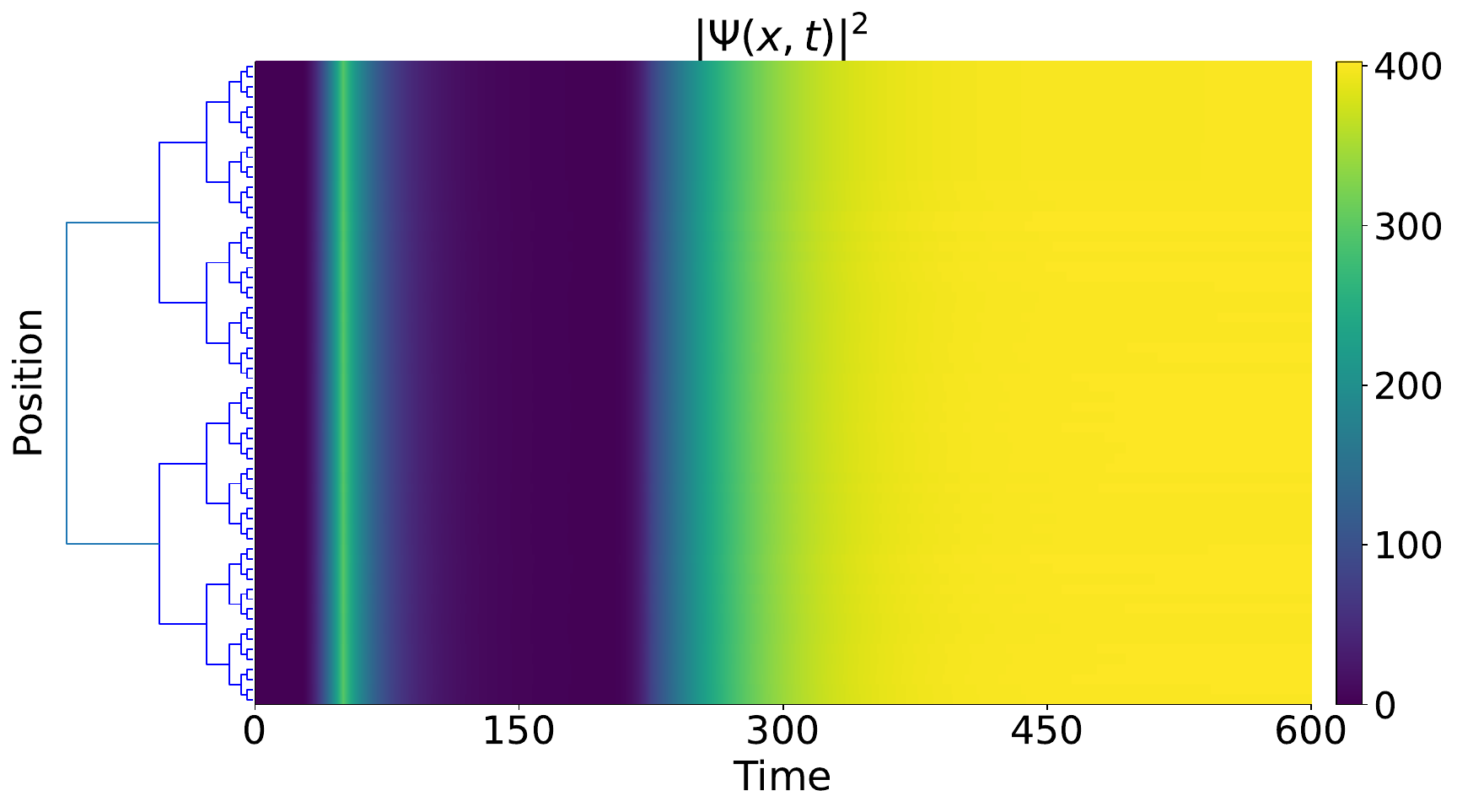}
        \caption{A numerical approximation for $\left\vert \Psi \left( x ,t\right) \right\vert ^{2}$.}
        \label{fig:catmatrix_sim1_1_l1_norm_sq}
    \end{subfigure}
    \hfill
    \begin{subfigure}[t]{0.48\linewidth}
        \centering
        \includegraphics[width=\linewidth]{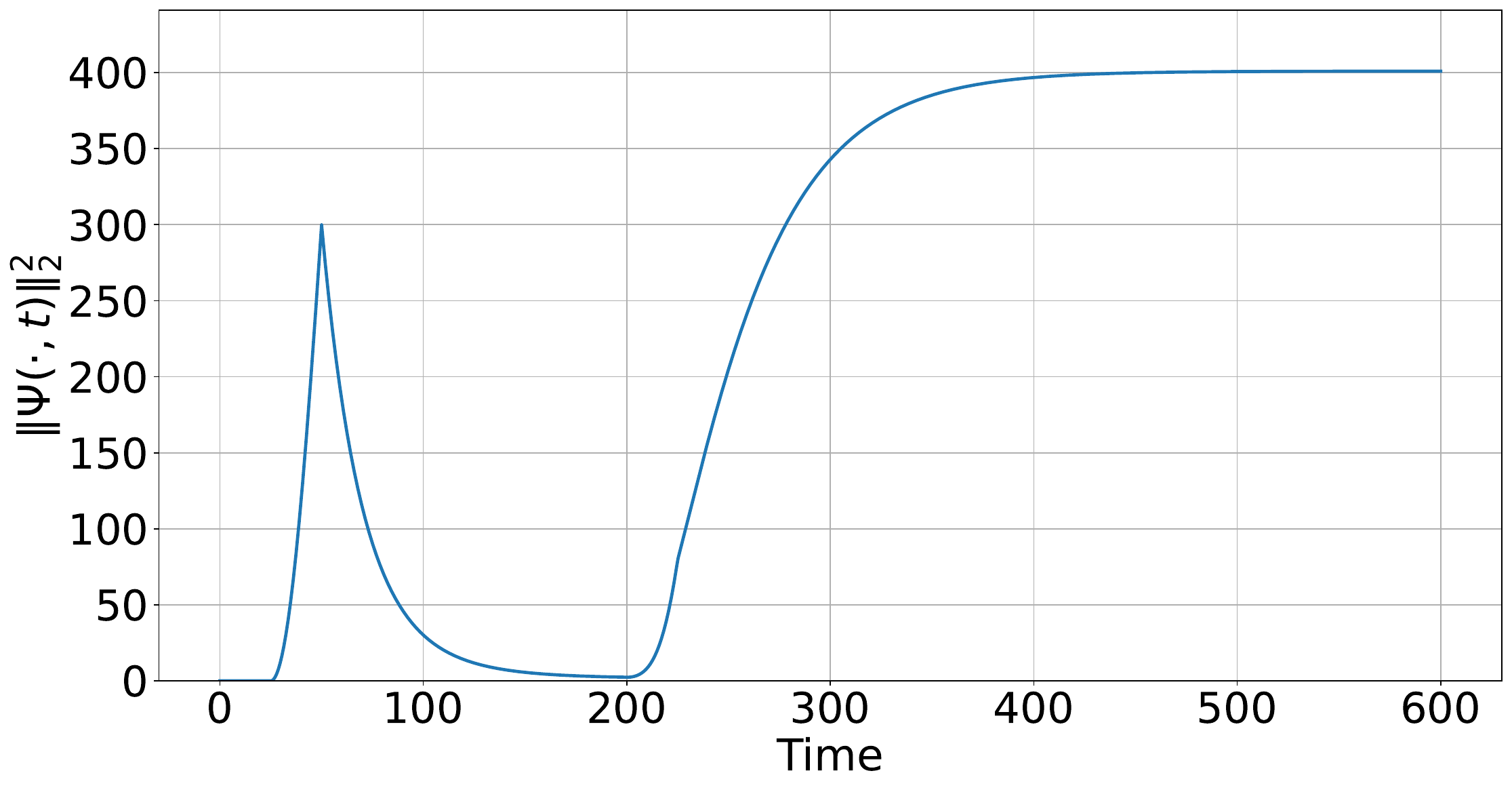}
        \caption{A numerical approximation for $\left\Vert\Psi \left( \cdot,t\right) \right\Vert _{2}^{2}$}
        \label{fig:catmatrix_sim1_1_norm_trajectories}
    \end{subfigure}

    \caption{Results for the simulation 6.3}
    \label{Figure13}
\end{figure}

\section{\label{Section_7A}Computational Complexity and Scalability}

We discuss the computational cost of simulating the $p$-adic QCNNs
classically, identify where the hierarchical structure of the $p$-adic
domain yields savings relative to a na\"{\i}ve implementation, and comment
on the prospects for quantum implementation and future training.

\textbf{State space dimension}.
After discretization at level $l$, the network state is a vector
\[
\left[  {\Psi_{I}(t)}\right] _{I\in G_{l}}\in\mathbb{C}{^{p^{l}}},%
\]
where $G_{l}=\mathbb{Z}_{p}/p^{l}\mathbb{Z}_{p}$ has cardinality $N=p^{l}$.
Each additional level of the $p$-adic tree multiplies the number
of neurons by the prime $p$, so $N$ grows exponentially in $l$ and
polynomially in $p$ (for $l$ fixed).
In the numerical simulations of Section \ref{Section_6},  we use $l = 6$ throughout,
with $p \in \{2,3\}$, giving $N = 64$ and $N = 729$ neurons,
respectively.
Note that for $p = 2$ the dimension equals $2^l$, so the network with
$l$ levels has the same Hilbert-space dimension as a system of $l$ qubits; this observation is relevant to the discussion of quantum implementation
below.

\textbf{Free evolution: exploiting convolution structure}.
The free Hamiltonian $\boldsymbol{H}_{l}=\left[  h_{I,K}\right]  _{I,K\in G_{l}}$ defined
in (\ref{Matrix_J_l})  has entries
\begin{equation}
h_{I,K}\;=\;p^{-l}\bigl(J(\left\vert I-K\right\vert _{p})-\delta
_{I,K}\bigr),\qquad I,K\in G_{l},\label{eq:hamiltonian_entries}%
\end{equation}
which depend only on the $p$-adic distance $\left\vert I-K\right\vert _{p}$
between the two nodes.
Consequently, ${\boldsymbol{H}}_{l}$ acts as a convolution operator on the finite
additive group $G_{l} \cong \mathbb{Z}/p^l\mathbb{Z}$.
The matrix-vector product $\boldsymbol{H}_{l}  \left[  {\Psi_{I}(t)}\right]$ can therefore be computed in
\begin{equation}
  O(N\log N) \;=\; O(p^l \cdot l\log p)
\end{equation}
operations using the fast Fourier transform (FFT) on $G_{l}$, rather than
the $O(N^2) = O(p^{2l})$ operations required by a generic
dense matrix-vector multiplication.
For the largest network used in our simulations
($p = 3$, $l = 6$, $N = 729$), the FFT-based approach reduces the
free-part cost per time step from $O(729^2) \approx 5.3\times 10^5$
to $O(729 \times 6) \approx 4.4\times 10^3$ arithmetic operations,
a reduction by a factor of roughly $120$.
This saving is a direct consequence of the ultrametric, tree-based
geometry of $\mathbb{Z}_{p}$, which endows $\boldsymbol{H}_{l}$ with an algebraically exact
hierarchical block structure that is absent in Euclidean-domain QNNs.

\textbf{Non-linear interaction term}.
The non-linear interaction term in the discretized equation (\ref{Network_Type_I}) reads
\begin{equation}\label{eq:interaction_term}
  \sum_{K \in G_{l}} W_{I,K}\,\phi(\Psi_K(t)),
  \qquad I \in G_{l},
\end{equation}
and requires $O(N^2)$ operations per time step for a general weight
matrix $W = [W_{I,K}]$.
In our simulations, three structural choices for $W$ are considered.

\begin{enumerate}
\item $W(x,y)=0$ (no interaction). The interaction term vanishes, and the cost
per step is $O(N\log N)$ from the free part alone.

\item $W(x,y)=W_{0}$ constant. The sum in \eqref{eq:interaction_term}
collapses to $W_{0}\sum_{K}\phi(\Psi_{K}(t))$, which costs only $O(N)$ per step.

\item $W(x,y)=c\,W_{cat}$ (cat-cortex approximation) of \cite{Zuniga-Entropy},
used with $p=2$, $l=6$. Here $W_{cat}$ is a general $64\times64$ matrix, so
the interaction term costs $O(N^{2})$ per step. In this regime, the interaction
term dominates, and the total cost per step is $O(N^{2})$.
\end{enumerate}

In cases where the weight kernel $W(x,y)$ inherits structure from the
$p$-adic geometry --- for instance, if $W$ is itself a convolution
operator on $G_{l}$ --- the interaction term is also reducible to
$O(N\log N)$ per step, matching the cost of the free part.
Whether biologically or task-motivated, the tendency of weight matrices to possess such structure is an interesting question for future investigation.

\textbf{Total simulation cost}.
All simulations in Section \ref{Section_6} use a fixed time step $\delta t = 0.001$.
Over a time horizon $[0,T]$ the number of steps is $T/\delta t$, and the
total classical simulation cost is
\begin{equation}\label{eq:total_cost}
  C_{\mathrm{sim}}
  \;\sim\;
  \frac{T}{\delta t}
  \times
  \begin{cases}
    O(N\log N)
      & \text{if } W = 0
        \text{ or } W = W_0
        \text{ (constant),}\\[4pt]
    O(N^2)
      & \text{if } \mathbf{W}
        \text{ is a general dense matrix.}
  \end{cases}
\end{equation}
For the longest simulation in Section \ref{Section_6}
($T = 1500$, $p = 2$, $l = 6$, $W = c\,W_{cat}$)
formula~\eqref{eq:total_cost} yields approximately
\[
  1.5 \times 10^6 \;\times\; 64^2
  \;=\; 1.5\times 10^6 \times 4{,}096
  \;\approx\; 6\times 10^9
  \text{ arithmetic operations,}
\]
a computation that is entirely feasible on modern workstation hardware.
Scaling to $l = 10$ with $p = 2$ (i.e., $N = 1{,}024$ neurons) would
increase the dense-$W$ cost by a factor of
$(1{,}024/64)^2 = 256$ relative to $l = 6$,
placing such simulations at the boundary of what is practical without
structured sparsity, low-rank approximations, or GPU-accelerated linear
algebra.
Scaling in $l$ beyond approximately $10$--$12$ will therefore require
either structured weight matrices, sparse approximations to $W$,
or a transition to quantum hardware.

\textbf{Scalability in $p$ and $l$}.
The parameters $p$ and $l$ play distinct structural roles.
Increasing $l$ refines the resolution of the $p$-adic tree by adding
one more hierarchical layer of neurons, multiplying $N$ by $p$ at each
step.
Increasing $p$ changes the \emph{branching factor} of the tree --- that
is, the number of children each node has at each level --- without
changing the depth $l$.
For fixed $l$, a larger $p$ produces a wider hierarchy with
more neurons per level, whereas a smaller $p$ produces a narrower,
deeper hierarchy.
The choice $p = 2$ is the natural analog of binary trees and aligns
with qubit-based quantum hardware (see below), while $p = 3$ provides
finer intra-level resolution at the same depth.
The computational cost scales as $O(p^{2l})$ in the dense-$W$
regime and as $O(p^l \cdot l)$ in the structured or free regime,
so moderate values of $p$ ($p = 2$ or $p = 3$) are most practical for
classical simulation at depths $l \leq 10$.

\textbf{Prospects for quantum implementation}.
The free Schr\"odinger evolution $e^{it\boldsymbol{H_{l}}}$ is a unitary operator
and is, in principle, implementable as a quantum circuit.
The convolution structure of $\boldsymbol{H}_{l}$ on
$G_{l} $ suggests that a quantum Fourier
transform (QFT) based decomposition could yield an efficient circuit;
for $p = 2$ the QFT on $G_{l}$ coincides with the Walsh-Hadamard
transform on $l$ qubits, implementable in $O(l^2)$ two-qubit gates.
This means that the free evolution of the $p$-adic QCNN could, in
principle, achieve an exponential speedup over its classical simulation
on $l$ qubits, analogously to the quantum advantage established for
CTQWs \cite{Farhi-Gutman}-\cite{Venegas-Andraca}.
The non-linear term $\phi(\Psi)$, however, does not correspond to a
unitary operation, and its direct implementation on standard gate-based
quantum hardware is non-trivial.
This is a fundamental challenge shared by all nonlinear QNN proposals
\cite{Nakajima et al}-\cite{Behera et al}, \cite{Gupta, Schuld et al}, not specific to the $p$-adic framework.
For the open-system dynamics ($W \neq 0$), a Lindblad-type master-equation
simulation on quantum hardware would be required, which is an active area of current research.
We therefore regard the quantum implementation of the full nonlinear
$p$-adic QCNN as an important open problem, and note that the free
($W = 0$, $Z = 0$) case --- which already recovers CTQWs on
graphs --- is the most immediately accessible to near-term quantum
hardware.

\textbf{Memory cost}. The complexity analysis above concerns arithmetic (time) cost; memory requirements follow a parallel pattern. Storing the state vector $\Psi(t)\in\mathbb{C}^N$ requires $O(N)$ memory. The free-evolution matrix $\boldsymbol{H}_l$ need not be stored densely: because it is a convolution operator on $G_l$ (paragraph 2 above), only its $O(N)$ generating values $J^{(l)}_{I,0}$ need to be kept in memory, with the matrix-vector product computed on the fly via the FFT; storing $\boldsymbol{H}_l$ as a dense $N\times N$ matrix would instead cost $O(N^2)$. The memory cost of the interaction weight matrix $W$ mirrors the three structural cases of paragraph 3: $O(1)$ when $W=0$, $O(1)$ when $W=W_0$ is constant (since only the scalar $W_0$ need be stored), and $O(N^2)$ for a general dense matrix such as the cat-cortex approximation $W_{cat}$ used in several of the simulations of Section \mbox{\ref{Section_6}}. For the largest network simulated here ($N=729$), a dense $N\times N$ complex matrix occupies a few megabytes in double precision, well within the memory of a standard workstation; dense storage of $W$ becomes the binding memory constraint, rather than $\boldsymbol{H}_l$ or $\Psi$, as $N$ grows.

\textbf{Training complexity}.
The present work does not address the training of network parameters;
the weight kernel $W$, the diffusion kernel $J$, and the bias $Z$ are
chosen manually for each numerical experiment.
Were a gradient-based training scheme to be applied --- for instance,
by differentiating the ODE solution with respect to $W$ using
adjoint (backpropagation-through-time) methods --- the additional cost
per gradient step would scale as $O(N^2)$ for a general weight
matrix, matching the forward simulation cost.
The number of trainable parameters is $O(N^2)$ for a general
$W$, $O(N)$ for a diagonal or radially symmetric $W$, and
$O(1)$ for the single scale parameter $\alpha$ of the kernel
$J_\alpha$.
Developing efficient quantum machine learning algorithms \cite{QML} for
optimizing the parameters of networks of type (\ref{EQ_0A}) remains a central
open problem that we identify as a primary direction for future work.

\section{\label{Section_7}Final discussion}
The construction of quantum analogs of biological neural networks has been
intensively studied over the last 30 years. The first ideas on quantum neural
computation were published independently in 1995 by S. Kak and R. Chrisley 
\cite{Kak, Chrisley}. Since then, multiple architectures for such networks have been
proposed. Here, we introduce a new class of QNNs that are quantum analogs of
$p$-adic CNNs. The mathematical formulation is rigorous and is framed in the
Dirac--von Neumann formalism. The new $p$-adic QCNNs are a generalization of the
QNNs introduced by the first author in \cite{Zuniga-QM-2}, which are stochastic automata.
Such networks can be used to construct continuous-time quantum walks on graphs,
which are widely used in quantum computing. Furthermore, the construction of the
new networks is bio-inspired by the Wilson--Cowan model. This work raises several
research questions across many areas.

\textbf{Key achievements.} Before turning to the broader discussion, we summarize explicitly what this manuscript establishes. Mathematically, we (i) introduce a new class of nonlinear $p$-adic Schr\"{o}dinger equations, obtained as the Wick rotation of the state equation of the $p$-adic CNNs of \mbox{\cite{Zambrano-Zuniga-1}}; (ii) prove global existence and uniqueness of solutions to the associated Cauchy problem under the hypothesis $\phi\in L^\infty(\mathbb{R})$ (Theorem \mbox{\ref{Theorem_main}}, Appendix); and (iii) give a rigorous discretization scheme that produces finite-dimensional QNNs on simple graphs (Sections \mbox{\ref{Section_4}}--\mbox{\ref{Section_5}}). Structurally, we (iv) provide a systematic, criterion-by-criterion comparison against six existing QNN families that clarifies the novelty of the present approach (Section \mbox{\ref{Section_6A}}). Computationally, we (v) characterize, through representative numerical experiments, how the interaction weight $W$ and bias $Z$ drive the transition from unitary to non-unitary evolution, including a habituation-like phenomenon (Section \mbox{\ref{Section_6}}); and (vi) analyze the time and memory complexity of classical simulation and the prospects for quantum hardware implementation (Section \mbox{\ref{Section_7A}}).

It is important to distinguish, among the above, results that are \emph{rigorously established} from those that are \emph{numerical observations}. Items (i)--(iv) are proved or constructed rigorously and hold unconditionally (within the stated hypotheses). Items (v)--(vi) are, respectively, an empirical/illustrative characterization drawn from a finite set of representative simulations (Section \mbox{\ref{Section_6}}) and an analytical-but-asymptotic complexity estimate; neither constitutes a theorem. In particular, the decoherence and habituation phenomena reported in Section \mbox{\ref{Section_6}} are \emph{interpretations of numerical output}, consistent with the mathematical formulation but not proved from it, and their generality across all parameter choices, graph structures, and activation functions has not been established -- see item (7) of the Limitations below. We maintain this distinction throughout the remainder of this section.

\subsection{Physical and Computational Advantages of the \texorpdfstring{$p$}{p}-Adic Design}

We identify four structural properties of the $p$-adic framework that jointly
distinguish the networks introduced here from existing QNN architectures and that
confer specific physical and computational advantages.

\smallskip
\noindent\textbf{Algebraically exact hierarchical topology.}
The $p$-adic integers $\mathbb{Z}_p$ carry a natural ultrametric topology induced by the $p$-adic absolute value $\left\vert \cdot\right\vert _{p}$. Two neurons $x, y \in \mathbb{Z}_p$ satisfy
$\left\vert x-y\right\vert _{p}\leq p^{-k}$ if and only if they belong to the same ball of radius
$p^{-k}$, i.e., to the same subtree at depth $k$ of the $p$-adic tree.
Hierarchical organization is therefore an intrinsic algebraic feature of the
domain ${\mathbb{Z}}_{p}$, not an externally imposed circuit layout or graph structure.
This contrasts with Euclidean-domain QNNs, where hierarchy must be artificially
introduced --- for instance, through pooling layers in quantum convolutional neural
networks \cite{QCNN} or through a prescribed graph in quantum graph neural networks
\cite{Verdon et al, Ceschini et al}.
In the $p$-adic framework, the range of interaction between neurons is determined
entirely by their $p$-adic distance, and the prime $p$ controls the branching
factor of the hierarchy, providing a mathematically clean one-parameter family
of hierarchical architectures.
This property is particularly well-suited to modeling biological neural systems,
in which neurons are organized hierarchically across spatial scales (cells, columns,
areas, lobes), a structure that is naturally captured by the successive quotients
$G_{l}=\mathbb{Z}_{p}/p^{l}\mathbb{Z}_{p}$ of the $p$-adic integers.

\smallskip
\noindent\textbf{Continuous interpolation between closed and open quantum dynamics.}
A single governing equation (\ref{EQ_0A}) interpolates continuously between two
physically distinct regimes controlled by the weight kernel $W$ and the bias $Z$.
When $W=Z = 0$, the network undergoes strictly unitary evolution, corresponding to a continuous-time quantum Markov chain \cite{Zuniga-QM-2}.
When $W \not= 0$ or $Z \not= 0$, the norm $\left\Vert \Psi\left(  \cdot,t\right)  \right\Vert _{2}^{2}$ is no longer conserved
(as confirmed in all numerical simulations of Section \ref{Section_6}), and the network
models an open quantum system subject to environmental interaction. 
We interpret this as a Lindblad-type dynamics (see Section \ref{Section_0}), in which the coupling strength $W$ and the bias $Z$ control the degree of decoherence.
This unified treatment --- in which closed and open dynamics are special cases
of the same equation --- is not available in VQC or QGNN architectures, where
decoherence must be modeled separately through an auxiliary formalism.
The biological interpretation is equally natural: the parameter $W$ encodes the
influence of one neuronal population on another, so setting $W = 0$ isolates
individual populations (no cross-talk) while $W \not= 0$ activates the
inter-population interaction that drives habituation and pattern formation,
as observed in Sections \ref{Num_Sim_3}-\ref{Num_Sim_5}. We stress that this description of habituation is an interpretation of the numerical output of Section \mbox{\ref{Section_6}}, not a proved property of equation \mbox{\eqref{EQ_0A}}; establishing it rigorously, and determining how general it is across parameter choices and graph structures, is left for future theoretical and computational work (see Limitations item (7) and Future Directions).

\smallskip
\noindent\textbf{Strict generalization of continuous-time quantum walks.}
Setting $W = 0$ and $Z = 0$ in equation (\ref{EQ_0A}) recovers the
continuous-time quantum Markov chain on $\mathbb{Z}_{p}$ introduced in \cite{Zuniga-QM-2}, which,
after discretization to $G_{l}$, reduces to a standard continuous-time quantum
walk (CTQW) on the associated graph \cite{Farhi-Gutman}-\cite{Venegas-Andraca}.
CTQWs have been used to design quantum algorithms with provable speedup over
their classical counterparts \cite{Farhi-Gutman}. The $p$-adic QCNNs therefore subsume
all CTQW-based quantum algorithms as a special case, and extend them in two
directions: the inclusion of a nonlinear activation function $\phi$, which
breaks the linearity of the Schr\"odinger equation and introduces
interaction-driven pattern formation; and the coupling $W \not= 0$, and the bias $Z \not= 0$, which transitions the system from a closed to an open quantum network.
Whether these two extensions can be exploited to obtain new quantum algorithmic
advantages --- beyond those already known for CTQWs --- is an important open problem that we leave for future investigation.

\smallskip
\noindent\textbf{Convolution structure and computational efficiency.}
The free Hamiltonian $\boldsymbol{H}_{l}$ defined in (\ref{Matrix_J_l}) is a convolution operator on the finite group $G_{l} \cong \mathbb{Z}/p^l\mathbb{Z}$, because its entries depend
only on the $p$-adic distance $\left\vert I-K\right\vert _{p}$ between nodes.
As a consequence, the matrix-vector product ${\boldsymbol{H}}_{l} \Psi$ is computable in
$O(p^l \cdot l\log p)$ operations via the fast Fourier transform on $G_{l}$,
compared with $O(p^{2l})$ for a generic dense matrix.
This saving is a direct consequence of the ultrametric geometry and has no
analog in Euclidean-domain QNNs, whose Hamiltonians do not generally admit such
factorization.
For the $p = 2$ case the FFT on $G_{l}$ coincides with the Walsh-Hadamard
transform on $l$ bits, which is implementable in $O(l^2)$ two-qubit gates
on a quantum processor, opening a path toward exponential quantum speedup in
the linear regime.
A detailed discussion of the full computational complexity of the $p$-adic
QCNNs --- including the interaction term and the training cost --- is given in
Section \ref{Section_7A}.

\subsection{Connection to Quantum Models of Brain Activity}

Currently, there is a strong interest in studying quantum networks as analogs of
biological neuronal networks, to explore whether brain activity is rooted in
quantum mechanics. We mention two approaches here. The Quantum-Like Modeling
(QLM) framework treats neuronal networks as complex systems that deviate from
classical logic \cite {Khrennikov-1, Khrennikov-2}; it uses quantum probability theory to describe cognitive effects such as decision interference and bistable perception.
The second approach is based on the Lipkin--Meshkov--Glick quantum
Hamiltonian \cite{LMG}. The $p$-adic QCNNs introduced here are bio-inspired by
the Wilson--Cowan model, which has been widely used to describe brain activity;
consequently, the present work is related to \cite{Khrennikov-1}-\cite{LMG}, but the specific connection constitutes an open problem that we plan to investigate in a future
contribution.

\subsection{Limitations}
\label{subsec:limitations}

We collect the limitations of the present work here in one place; readers most interested in the framework's current scope and practical implementability may wish to consult this subsection immediately after the Key Achievements paragraph above. We identify the following limitations of the present work, which also define
the boundaries within which the results should be interpreted.

\begin{enumerate}

\item \textbf{Purely theoretical framework.} The paper establishes the
mathematical foundations of the $p$-adic QCNNs and validates the framework
through numerical simulation on networks of size $N = p^{l} \leq 729$.  No
hardware implementation --- whether on gate-based, photonic, or analog
quantum processors --- is provided or claimed. The question of which quantum
platform could most naturally realize the $p$-adic Schr\"odinger dynamics
remains open; see item~(4) below.

\item \textbf{Well-posedness under restricted hypotheses.} Theorem \ref{Theorem_main}
establishes global existence ($T = \infty$) and uniqueness of solutions to
the Cauchy problem (\ref{EQ_IVP}) under the hypothesis $\phi\in L^{\infty
}(\mathbb{R})$. Global well-posedness for the general nonlinear
equation \eqref{EQ_0A} --- without the boundedness assumption on $\phi$, and for
more singular initial data or weight kernels $W$ --- has not been established
and remains a central mathematical open problem.

\item \textbf{Untrained parameters.} In all numerical experiments of
Section \ref{Section_6}, the parameters $J$, $W$, and $Z$ in (\ref{EQ_0A}) are chosen manually on
the basis of physical or biological considerations. A systematic training
procedure --- selecting these parameters via an optimization process adapted
to a specific task --- has not been developed for the $p$-adic QCNN
architecture. This is the most significant practical limitation of the
present work in relation to deployed machine learning systems. We emphasize that this scope is deliberate: the present manuscript focuses on the framework's mathematical foundations, not data-driven learning; see the ``Training and quantum machine learning'' paragraph in Future Directions below.

\item \textbf{Quantum hardware implementation.} The nonlinear activation term
$\phi(\Psi)$ in (\ref{EQ_0A}) does not correspond to a unitary gate operation,
making direct implementation on standard gate-based quantum hardware
non-trivial. This difficulty is not unique to the $p$-adic framework but is
shared by all proposals for nonlinear QNNs \cite{Nakajima et al}-\cite{Behera et al}, \cite{Gupta,Schuld et al}. For the
open-system regime ($W \not = 0$, $Z \not = 0$), implementation would require a Lindblad
master-equation simulator on quantum hardware, which is an active research
direction. The linear, unitary special case ($W = 0$, $Z= 0$) is the
most immediately accessible to near-term devices.

\item \textbf{Entanglement and quantum correlations.} The present analysis
does not characterize the entanglement structure of the quantum states
produced by (\ref{Model 1}) or (\ref{EQ_0A}). Understanding whether and how the $p$-adic
hierarchical architecture generates multipartite entanglement across tree
levels would be essential for establishing any quantum computational
advantage beyond the CTQW special case, and constitutes an important
theoretical gap.

\item \textbf{Classical simulation scalability.} Classical simulation of the
$p$-adic QCNNs with a general dense weight matrix $W$ costs $O(N^{2})$ per
time step, limiting direct simulation to networks of size $N\lesssim10^{3}$ on
current hardware without structured sparsity or GPU acceleration. Scaling to
$l\geq12$ (with $p=2$, $N=4096$) will require either low-rank or hierarchical
approximations to $W$, or quantum hardware. More broadly, a comprehensive quantitative evaluation of the framework -- covering robustness to perturbations of the initial data or parameters, and dependence on the graph structure/topology rather than only on the graph size $N$ -- has not yet been carried out and is left for future work.

\textbf{Generality of the numerical observations and absence of an experimental benchmark.} The simulations of Section \mbox{\ref{Section_6}} use representative, hand-chosen values of $\alpha$, of the weight kernel $W$ (zero, constant, or the specific cat-cortex approximation $W_{cat}$ of \mbox{\cite{Zuniga-Entropy}}), of the activation function $\phi$, and of a small number of graph discretizations ($p\in\{2,3\}$, $l=6$); establishing that the reported phenomena -- loss of unitarity, habituation, pattern formation -- persist across the full parameter space, across other activation functions, and across more general graph topologies requires the broader sensitivity analysis outlined in Future Directions below, and has not been carried out here. Relatedly, while Section \mbox{\ref{Section_6A}} situates the $p$-adic QCNNs structurally and qualitatively against six existing QNN families, it does not provide an experimental or quantitative benchmark against actual implementations of those architectures; such a comparison remains for future work.
\end{enumerate}

\subsection{Future Directions}
\label{subsec:future}

The limitations identified above suggest a concrete agenda for future research,
which we organize into four directions.

\smallskip
\noindent\textbf{Mathematical analysis.}
The most pressing theoretical problem is the global well-posedness of the
Cauchy problem (\ref{EQ_IVP}) without the boundedness assumption $\phi \in L^\infty(\mathbb{R})$.
A second open problem is the existence and stability of stationary patterns
and traveling wave solutions of the form $\psi(\left\vert x\right\vert _{p}+vt)$;
we plan to address this in a forthcoming paper, using techniques from
the $p$-adic theory of reaction-diffusion equations.
A third problem is the characterization of the long-time behavior of
solutions in the open-system regime ($W \not= 0$ or $Z \not= 0$), including the possible
convergence to equilibrium states and the onset of pattern formation
driven by the interaction kernel $W$.

\smallskip
\noindent\textbf{Training and quantum machine learning.}
For practical applications, the parameters $J$, $W$, and $Z$ in (\ref{EQ_0A}) should
be selected by an optimization process adapted to a specific task.
For instance, $J$ could be optimized over the one-parameter family
$J_\alpha(\left\vert x\right\vert _{p})$ (see Section \ref{Section_6}), and $W$ could be learned from data
using gradient-based methods with adjoint (backpropagation-through-time)
sensitivity analysis, at a cost of $O(N^2)$ per gradient step for a
general dense $W$.
The development of quantum ML algorithms \cite{QML} for networks of
type (\ref{EQ_0A}) --- potentially exploiting the convolution structure of ${\boldsymbol{H}}_{l}$ via
a quantum Fourier transform --- constitutes a central open problem.

\smallskip
\noindent\textbf{Quantum algorithmic applications.}
Since the $p$-adic QCNNs with $W = 0$ and $Z = 0$ reduce to CTQWs
on graphs, all existing CTQW-based quantum algorithms (search,
state transfer, graph isomorphism testing) are immediately available in this
framework. The more interesting question is whether the open-system extension ($W \not= 0$ or $Z \not= 0$) can lead to
new quantum speedups that are not attainable with linear, unitary CTQWs.
Identifying tasks for which the hierarchical $p$-adic structure provides
a provable computational advantage is an important direction for future work.

\smallskip
\noindent\textbf{Model generality and robustness.}
As noted in Limitations item (7), the numerical study of Section \mbox{\ref{Section_6}} uses a small number of representative choices for the activation function $\phi$, the weight kernel $W$, and the underlying graph. Extending the sensitivity analysis to (i) other nonlinear activation functions beyond the piecewise-linear $\phi$ used here, (ii) more general graph topologies (e.g., non-$p$-adic trees, or graphs not arising as quotients of $\mathbb{Z}_p$), and (iii) an experimental benchmark against implementations of the QNN families compared structurally in Section \mbox{\ref{Section_6A}}, is an important and tractable direction for future work, and would place the qualitative conclusions of Section \mbox{\ref{Section_6}} on firmer quantitative footing.

\bigskip

\textbf{Data and code availability.} The simulation code used to produce every figure in this paper is available in the following  GitHub repository: https://github.com/B-A-Zambrano-Luna/ PATTERN-FORMATION-IN-QUANTUM-HIERARCHICAL-CELLULAR-NEURAL-NETWORKS.git
\section{\label{Appendix}Appendix}
We set
\[
\boldsymbol{H}_{0}\psi\left(  x\right)  =i\left\{  J\left(  \left\vert
x\right\vert _{p}\right)  \ast\psi\left(  x\right)  -\psi\left(  x\right)
\right\}  \text{, for }\psi\left(  x\right)  \in L^{2}\left(  \mathbb{Z}%
_{p}\right)  .
\]
Notice that $\left\Vert \boldsymbol{H}_{0}\psi\right\Vert _{2}\leq\left(
\left\Vert J\right\Vert _{1}+1\right)  \left\Vert \psi\right\Vert _{2}$, and
consequently $\boldsymbol{H}_{0}:L^{2}\left(  \mathbb{Z}_{p}\right)
\rightarrow L^{2}\left(  \mathbb{Z}_{p}\right)  $ is a bounded linear
operator. Using the fact that $H_{0}$ is a pseudo-differential operator,%
\[
\boldsymbol{H}_{0}\psi=\mathcal{F}^{-1}\left(  i\left(  \widehat{J}\left(
\left\vert \xi\right\vert _{p}\right)  -1\right)  \widehat{\psi}\right)  ,
\]
and the fact that the Fourier transform preserves the inner product in
$L^{2}\left(  \mathbb{Q}_{p}\right)  $, we conclude that $\boldsymbol{H}%
_{0}$ is a self-adjoint operator. For further details, the reader may consult
\cite[Section 4.2]{Zuniga-QM-2}. By Stone's theorem on one-parameter unitary
groups, $e^{it\boldsymbol{H}_{0}}$ is a unitary operator on $L^{2}\left(
\mathbb{Z}_{p}\right)$, for $t\geq0$.

We take $J\left(  \left\vert x\right\vert _{p}\right)  \in L^{1}\left(
\mathbb{Z}_{p}\right)  $, $W\left(  x,y\right)  \in L^{2}\left(
\mathbb{Z}_{p}\times\mathbb{Z}_{p}\right)  $, $Z(x)\in L^{2}\left(
\mathbb{Z}_{p}\right)  $, $\phi:\mathbb{R}\rightarrow\mathbb{R}$ is a Lipschitz
function, i.e., $\left\vert \phi\left(  s\right)  -\phi\left(  t\right)
\right\vert \leq L_{\phi}\left\vert s-t\right\vert $, for any $s,t\in
\mathbb{R}$.

We now consider the following initial value problem:%
\begin{equation}
\left\{
\begin{array}
[c]{l}%
\Psi\left(  x,t\right)  \in C^{1}\left(  \left[  0,T\right)  ,L^{2}\left(
\mathbb{Z}_{p}\right)  \right)  \text{, \ \ }x\in\mathbb{Z}_{p}\text{, }%
t\in\left[  0,T\right)  ;\\
\\
\frac{\partial}{\partial t}\Psi\left(  x,t\right)  =\boldsymbol{H}_{0}%
\Psi\left(  x,t\right)  +F(\Psi\left(  x,t\right)  );\\
\\
\Psi\left(  x,0\right)  =\psi_{0}\left(  x\right)  \in L^{2}\left(
\mathbb{Z}_{p}\right)  ,
\end{array}
\right.  \label{EQ_IVP}%
\end{equation}
where%
\[
F(\Psi\left(  x,t\right)  )=-i%
{\displaystyle\int\limits_{\mathbb{Z}_{p}}}
W\left(  x,y\right)  \phi\left(  \Psi\left(  y,t\right)  \right)  dy-iZ(x).
\]
The function
\[%
\begin{array}
[c]{llll}%
F: & L^{2}\left(  \mathbb{Z}_{p}\right)   & \rightarrow & L^{2}\left(
\mathbb{Z}_{p}\right)  \\
&  &  & \\
& \psi\left(  x\right)   & \rightarrow & -i%
{\displaystyle\int\limits_{\mathbb{Z}_{p}}}
W\left(  x,y\right)  \phi\left(  \psi\left(  y\right)  \right)  dy-iZ(x)
\end{array}
\]
$F:L^{2}\left(  \mathbb{Z}_{p}\right)  \rightarrow L^{2}\left(  \mathbb{Z}%
_{p}\right)  $ is Lipschitz, i.e., there exists a positive constant $L_{F}$ such that
\[
\left\Vert F\left(  \varphi\right)  -F\left(  \psi\right)  \right\Vert
_{2}\leq L_{F}\left\Vert \varphi-\psi\right\Vert _{2}.
\]
Indeed,%
\begin{gather*}
\left\vert F\left(  \varphi\left(  x\right)  \right)  -F\left(  \psi\left(
x\right)  \right)  \right\vert =\left\vert \text{ }%
{\displaystyle\int\limits_{\mathbb{Z}_{p}}}
W\left(  x,y\right)  \left\{  \phi\left(  \varphi\left(  y\right)  \right)
-\phi\left(  \psi\left(  y\right)  \right)  \right\}  dy\right\vert \leq\\%
{\displaystyle\int\limits_{\mathbb{Z}_{p}}}
\left\vert W\left(  x,y\right)  \right\vert \text{ }\left\vert \phi\left(
\varphi\left(  y\right)  \right)  -\phi\left(  \psi\left(  y\right)  \right)
\right\vert dy\leq\\
L_{\phi}%
{\displaystyle\int\limits_{\mathbb{Z}_{p}}}
\left\vert W\left(  x,y\right)  \right\vert \text{ }\left\vert \varphi\left(
y\right)  -\psi\left(  y\right)  \right\vert dy,
\end{gather*}
where $L_{\phi}$ is the Lipschitz constant of $\phi$. By applying the
Cauchy-Schwarz inequality,%
\[
\left\vert F\left(  \varphi\left(  x\right)  \right)  -F\left(  \psi\left(
x\right)  \right)  \right\vert \leq L_{\phi}\left\Vert W\left(  x,\cdot
\right)  \right\Vert _{2}\left\Vert \varphi-\psi\right\Vert _{2}.
\]
Then
\[
\left\Vert F\left(  \varphi\right)  -F\left(  \psi\right)  \right\Vert
_{2}\leq L_{\phi}\left\Vert W\right\Vert _{2}\left\Vert \varphi-\psi
\right\Vert _{2}.
\]
Now, by using \cite[Lemma 4.1.1, Proposition 4.3.3]{Cazaneve-Haraux}, any
solution of (\ref{EQ_IVP}) satisfies
\begin{equation}
\Psi\left(  x,t\right)  =e^{it\boldsymbol{H}_{0}}\psi_{0}\left(  x\right)  +%
{\displaystyle\int\limits_{0}^{t}}
e^{i\left(  t-s\right)  \boldsymbol{H}_{0}}F(\Psi\left(  x,s\right)
)ds. \label{Duhamel-Form}%
\end{equation}



\begin{lemma}
\label{Lemma}If $\phi\in
L^{\infty}\left(  \mathbb{R}\right)  $, then \[\int_0^T \Vert F(\Psi(x,t)) \Vert_2^2 dt\leq TC(F),\] for all $\Psi(x,t)\in \mathcal{C}([0,T], L^2(\mathbb Z_p))$.
\end{lemma}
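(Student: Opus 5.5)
The plan is a pointwise bound on $F(\Psi)$ that uses only the boundedness of $\phi$ and no property of the trajectory $\Psi$. Recall that $\phi$ acts on complex arguments by $\phi(a+ib)=\phi(a)+i\phi(b)$. Set $M=\|\phi\|_{L^\infty(\mathbb{R})}$. Then for any $\psi\in L^2(\mathbb{Z}_p)$ and almost every $y$,
\[
|\phi(\psi(y))|\leq|\phi(\operatorname{Re}\psi(y))|+|\phi(\operatorname{Im}\psi(y))|\leq 2M .
\]
Since $\phi$ is Lipschitz, it is continuous. Hence $y\mapsto\phi(\psi(y))$ is measurable, and it lies in $L^\infty(\mathbb{Z}_p)\subset L^2(\mathbb{Z}_p)$. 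Because the Haar measure of $\mathbb{Z}_p$ is $1$, its $L^2$-norm is at most $2M$, independently of $\psi$.

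Next, apply Cauchy--Schwarz in $y$, exactly as in the proof above that $F$ is Lipschitz:
\[
\Bigl|\int_{\mathbb{Z}_p}W(x,y)\phi(\psi(y))\,dy\Bigr|\leq \|W(x,\cdot)\|_2\,\|\phi(\psi)\|_2\leq 2M\|W(x,\cdot)\|_2 .
\]
Square this and integrate in $x$ (Fubini, using $W\in L^2(\mathbb{Z}_p\times\mathbb{Z}_p)$). This gives $\bigl\|\int W(\cdot,y)\phi(\psi(y))\,dy\bigr\|_2\leq 2M\|W\|_2$. The Minkowski inequality then yields
\[
\|F(\psi)\|_2\leq 2M\|W\|_2+\|Z\|_2 \qquad\text{for every }\psi\in L^2(\mathbb{Z}_p).
\]

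Finally, take $\psi=\Psi(\cdot,t)$ for each $t\in[0,T]$. The map $t\mapsto F(\Psi(\cdot,t))$ is continuous into $L^2$, since $F$ is Lipschitz and $\Psi\in\mathcal{C}([0,T],L^2)$. So the integrand is measurable and bounded by the constant above. Integrating over $[0,T]$ gives the claim with
\[
C(F)=\bigl(2\|\phi\|_\infty\|W\|_2+\|Z\|_2\bigr)^2,
\]
which depends only on $W$, $Z$ and $\phi$, not on $\Psi$ or $T$.

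No step is genuinely hard. The only points needing care are the measurability of $\phi\circ\psi$ (which follows from the continuity of $\phi$) and the factor $2$ that comes from the componentwise definition of $\phi$ on $\mathbb{C}$. The essential input is that $\mathbb{Z}_p$ has finite measure, so a bounded function is in $L^2$.
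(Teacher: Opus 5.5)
Your proof is correct and follows the paper's argument: Cauchy--Schwarz in $y$, a $\psi$-independent bound on $\Vert\phi(\Psi(\cdot,t))\Vert_2$ coming from $\phi\in L^\infty$ and the unit Haar measure of $\mathbb{Z}_p$, Fubini in $x$, and then integration over $[0,T]$ of a $t$-independent constant. The differences are cosmetic. You collapse the constant via Minkowski to $(2\Vert\phi\Vert_\infty\Vert W\Vert_2+\Vert Z\Vert_2)^2$, where the paper leaves it as $\int_{\mathbb{Z}_p}\{\Vert W(x,\cdot)\Vert_2\Vert\phi\Vert_\infty+\vert Z(x)\vert\}^2dx$; you also handle measurability and the factor from the componentwise action of $\phi$ on $\mathbb{C}$, which the paper silently drops.
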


\begin{proof}
We first use the Cauchy-Schwarz inequality and the hypothesis $\phi\in
L^{\infty}\left(  \mathbb{R}\right)  $, to estimate%
\begin{gather*}
\left\Vert F\left(  \Psi\left(  x,t\right)  \right)  \right\Vert _{2}^{2}=%
{\displaystyle\int\limits_{\mathbb{Z}_{p}}}
\text{ }\left\vert F\left(  \Psi\left(  x,t\right)  \right)  \right\vert
^{2}dx=\\%
{\displaystyle\int\limits_{\mathbb{Z}_{p}}}
\text{ }\left\vert -i%
{\displaystyle\int\limits_{\mathbb{Z}_{p}}}
W\left(  x,y\right)  \phi\left(  \Psi\left(  y,t\right)  \right)
dy-iZ(x)\right\vert ^{2}dx\leq
\end{gather*}%
\begin{gather*}%
{\displaystyle\int\limits_{\mathbb{Z}_{p}}}
\text{ }\left\{  \left\vert \text{ }%
{\displaystyle\int\limits_{\mathbb{Z}_{p}}}
W\left(  x,y\right)  \text{ }\phi\left(  \Psi\left(  y,t\right)  \right)
dy\right\vert +\left\vert Z(x)\right\vert \right\}  ^{2}dx\\%
{\displaystyle\int\limits_{\mathbb{Z}_{p}}}
\text{ }\left\{  \left\Vert W\left(  x,\cdot\right)  \right\Vert
_{2}\left\Vert \phi\left(  \Psi\left(  \cdot,t\right)  \right)  \right\Vert
_{2}+\left\vert Z(x)\right\vert \right\}  ^{2}dx\leq\\%
{\displaystyle\int\limits_{\mathbb{Z}_{p}}}
\text{ }\left\{  \left\Vert W\left(  x,\cdot\right)  \right\Vert
_{2}\left\Vert \phi\right\Vert _{\infty}+\left\vert Z(x)\right\vert \right\}
^{2}dx.
\end{gather*}

By using the hypothesis $W\left(  x,y\right)  \in L^{2}\left(  \mathbb{Z}%
_{p}\times\mathbb{Z}_{p}\right)  $, and Fubini's theorem,
\[
\left\Vert \phi\right\Vert _{\infty}\left\vert W\left(  x,\cdot\right)
\right\vert +\left\vert Z(x)\right\vert \in L^{2}\left(  \mathbb{Z}%
_{p}\right)  \text{, for almost all }x\text{,}%
\]
and consequently
\[
\left\Vert F\left(  \Psi\left(  x,t\right)  \right)  \right\Vert _{2}^{2}\leq
{\displaystyle\int\limits_{\mathbb{Z}_{p}}}
\text{ }\left\{  \left\Vert W\left(  x,\cdot\right)  \right\Vert
_{2}\left\Vert \phi\right\Vert _{\infty}+\left\vert Z(x)\right\vert \right\}
^{2}dx=C(F)<\infty,
\]
where $C(F)$ is a positive constant independent of $t$. Finally,%
\[%
{\displaystyle\int\limits_{0}^{T}}
\left\Vert F\left(  \Psi\left(  x,t\right)  \right)  \right\Vert _{2}%
^{2}dt\leq TC(F)<\infty.
\]

\end{proof}

\begin{proposition}\label{Prop: Existence solution}
With the above hypotheses. For any $\psi_{0}\left(  x\right)  \in L^{2}\left(
\mathbb{Z}_{p}\right)  $, there exists $T=T\left(  \psi_{0}\right)  >0$, such
that the Cauchy problem (\ref{EQ_IVP}) has a unique solution. Furthermore,
$T=\infty$, or $\lim_{t\rightarrow T}\left\Vert \Psi\left(  \cdot,t\right)
\right\Vert _{2}=\infty$.
\end{proposition}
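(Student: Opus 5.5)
The plan is to treat (\ref{EQ_IVP}) as an abstract semilinear evolution equation $\Psi'=\boldsymbol{H}_0\Psi+F(\Psi)$ on the Hilbert space $X=L^2(\mathbb{Z}_p)$. Here $\boldsymbol{H}_0$ is a bounded operator, so it generates the uniformly continuous group $e^{t\boldsymbol{H}_0}$; by the discussion above this group is unitary. We also know that $F:X\to X$ is globally Lipschitz with constant $L_F=L_\phi\Vert W\Vert_2$. The argument runs through the integral (Duhamel) formulation (\ref{Duhamel-Form}). By \cite[Lemma 4.1.1, Proposition 4.3.3]{Cazaneve-Haraux}, a $C^1([0,T),X)$ solution of (\ref{EQ_IVP}) is the same thing as a $C([0,T),X)$ solution of (\ref{Duhamel-Form}). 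This equivalence holds because $\boldsymbol{H}_0$ is bounded, so every continuous mild solution is automatically $C^1$: the right-hand side $\boldsymbol{H}_0\Psi+F(\Psi)$ is continuous in $t$. Existence and uniqueness therefore reduce to the integral equation.

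\textbf{Step 1 (local existence by contraction).} Fix $\psi_0\in X$ and $T>0$. On $E_T=C([0,T],X)$ with the sup norm, define
\[
(\Gamma\Psi)(t)=e^{t\boldsymbol{H}_0}\psi_0+\int_0^t e^{(t-s)\boldsymbol{H}_0}F(\Psi(s))\,ds .
\]
The map $\Gamma$ sends $E_T$ into itself: the integrand is continuous because $F$ is Lipschitz and the group is norm-continuous. Since each $e^{\tau\boldsymbol{H}_0}$ is an isometry, we get
\[
\Vert\Gamma\Psi-\Gamma\Phi\Vert_{E_T}\le TL_F\Vert\Psi-\Phi\Vert_{E_T}.
\]
Choosing $T<1/L_F$ makes $\Gamma$ a contraction, and Banach's fixed point theorem gives a unique local solution. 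Note that $T$ does not depend on $\psi_0$ at all; this is exactly the feature exploited below.

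\textbf{Step 2 (uniqueness and maximal solution).} Let $\Psi,\Phi$ be two solutions on a common interval. Subtracting their Duhamel formulas gives
\[
\Vert\Psi(t)-\Phi(t)\Vert_2\le L_F\int_0^t\Vert\Psi(s)-\Phi(s)\Vert_2\,ds,
\]
and Gronwall's inequality forces $\Psi=\Phi$. Next, set $T^\ast=\sup\{T:\text{a solution exists on }[0,T)\}$. Uniqueness lets us glue the local solutions into a single maximal solution on $[0,T^\ast)$.

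\textbf{Step 3 (blow-up alternative, in fact $T^\ast=\infty$).} This is the step where care is needed. Suppose $T^\ast<\infty$. Restart Step 1 at time $t_0=T^\ast-\tfrac{1}{2L_F}$ with initial datum $\Psi(t_0)$. Because the existence time from Step 1 is uniform, $1/(2L_F)$, this extends the solution beyond $T^\ast$, a contradiction. Hence $T^\ast=\infty$, which trivially implies the stated dichotomy.

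Alternatively, one can prove the dichotomy in the form stated, which matches the Cazenave--Haraux framework, via an a priori bound. From (\ref{Duhamel-Form}),
\[
\Vert\Psi(t)\Vert_2\le\Vert\psi_0\Vert_2+t\Vert F(0)\Vert_2+L_F\int_0^t\Vert\Psi(s)\Vert_2\,ds,
\]
and Gronwall's inequality gives
\[
\Vert\Psi(t)\Vert_2\le\bigl(\Vert\psi_0\Vert_2+t\Vert F(0)\Vert_2\bigr)e^{L_Ft},
\]
so the norm cannot become infinite at a finite time. When $\phi\in L^\infty$, Lemma \ref{Lemma} gives an even simpler bound, since $\Vert F(\Psi)\Vert_2\le C(F)^{1/2}$ uniformly. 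The main point to verify carefully is the equivalence between mild and classical solutions, which rests on the boundedness of $\boldsymbol{H}_0$.
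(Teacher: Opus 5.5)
Your proof is correct, but it takes a more self-contained route than the paper and proves more. The paper's proof consists of two citations to Cazenave--Haraux:
\begin{itemize}
\item Proposition 4.1.6 together with Lemma \ref{Lemma}, to upgrade the mild solution (\ref{Duhamel-Form}) to a classical one;
\item the abstract local-existence/blow-up-alternative theorem (Theorem 4.3.4 there), to obtain the dichotomy.
\end{itemize}
Global existence is obtained only later, in Theorem \ref{Theorem_main}, and only under the extra hypothesis $\phi\in L^\infty(\mathbb{R})$.

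You instead run the contraction argument by hand. Because $F$ is \emph{globally} Lipschitz and the propagator is isometric, the existence time does not depend on the datum, so $T^\ast=\infty$ follows at once; your Gronwall bound gives the same conclusion. This buys three things.
\begin{itemize}
\item You get global well-posedness for merely Lipschitz $\phi$, so the hypothesis $\phi\in L^\infty(\mathbb{R})$ in Theorem \ref{Theorem_main} is unnecessary.
\item You avoid the paper's use of Lemma \ref{Lemma} in the mild-to-classical step. That lemma needs $\phi\in L^\infty$, which is not among ``the above hypotheses.'' As you note, continuity of $t\mapsto F(\Psi(t))$ and boundedness of $\boldsymbol{H}_0$ already suffice.
\item Your formulation with $e^{t\boldsymbol{H}_0}$ is the one consistent with $\partial_t\Psi=\boldsymbol{H}_0\Psi+F(\Psi)$; the paper's $e^{it\boldsymbol{H}_0}$ is not. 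It is unitary because $-i\boldsymbol{H}_0=J\ast\cdot-\mathrm{Id}$ is self-adjoint, $J$ being real and radial.
\end{itemize}
The citation route, in turn, buys brevity and robustness. It is the framework that still applies to merely locally Lipschitz $F$ or to unbounded generators, where the existence time depends on $\Vert\psi_0\Vert_2$ and only the dichotomy survives.

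Two minor remarks on your write-up. First, isometry is not even needed: $\Vert e^{\tau\boldsymbol{H}_0}\Vert\le e^{\tau\Vert\boldsymbol{H}_0\Vert}$ already gives a contraction constant $TL_Fe^{T\Vert\boldsymbol{H}_0\Vert}<1$ for a $T$ independent of $\psi_0$. Second, in Step 3, restarting at $T^\ast-\tfrac{1}{2L_F}$ with existence time $\tfrac{1}{2L_F}$ only reaches $T^\ast$ rather than going beyond it. Restart at $T^\ast-\tfrac{1}{4L_F}$ instead, or restart once more at $T^\ast$.
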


\begin{proof}
By \cite[Proposition 4.1.6]{Cazaneve-Haraux} and Lemma \ref{Lemma}, the
mild solution (\ref{Duhamel-Form}) is a classical solution, i.e., a solution
of (\ref{EQ_IVP}). The fact that $T=\infty$, or $\lim_{t\rightarrow
T}\left\Vert \Psi\left(  \cdot,t\right)  \right\Vert _{2}=\infty$ follows from
\cite[Theorem 4.3.4]{Cazaneve-Haraux}.
\end{proof}

\begin{theorem} \label{Theorem_main}
Assume that $\phi\in L^{\infty}\left(  \mathbb{R}\right)  $ is a Lipschitz
function. Then, the Cauchy problem (\ref{EQ_IVP}) has a unique solution with $T=\infty$
for any $\psi_{0}\in L^{2}(\mathbb{Z}_{p})$.
\end{theorem}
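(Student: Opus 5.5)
By Proposition \ref{Prop: Existence solution}, for each $\psi_{0}\in L^{2}(\mathbb{Z}_{p})$ the Cauchy problem (\ref{EQ_IVP}) has a unique maximal solution on $[0,T)$. Moreover, either $T=\infty$ or $\lim_{t\rightarrow T}\Vert\Psi(\cdot,t)\Vert_{2}=\infty$. The plan is to exclude the blow-up alternative. So we assume $T<\infty$ and derive an a priori bound on $\Vert\Psi(\cdot,t)\Vert_{2}$ that is uniform for $t\in[0,T)$. This contradicts the blow-up.

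The bound comes from the Duhamel formula (\ref{Duhamel-Form}), which every solution satisfies. Since $e^{it\boldsymbol{H}_{0}}$ is unitary on $L^{2}(\mathbb{Z}_{p})$ for every $t$, Minkowski's integral inequality gives
\[
\Vert\Psi(\cdot,t)\Vert_{2}\leq\Vert\psi_{0}\Vert_{2}+\int_{0}^{t}\Vert F(\Psi(\cdot,s))\Vert_{2}\,ds .
\]
The key input is the pointwise-in-time estimate from the proof of Lemma \ref{Lemma}:
\[
\Vert F(\psi)\Vert_{2}^{2}\leq C(F)=\int_{\mathbb{Z}_{p}}\left\{\Vert W(x,\cdot)\Vert_{2}\Vert\phi\Vert_{\infty}+|Z(x)|\right\}^{2}dx .
\]
This holds for every $\psi\in L^{2}(\mathbb{Z}_{p})$. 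Equivalently, Lemma \ref{Lemma} together with Cauchy--Schwarz in time gives $\int_{0}^{t}\Vert F\Vert_{2}\,ds\leq\sqrt{t}\,(tC(F))^{1/2}$. Either way,
\[
\Vert\Psi(\cdot,t)\Vert_{2}\leq\Vert\psi_{0}\Vert_{2}+t\sqrt{C(F)}\leq\Vert\psi_{0}\Vert_{2}+T\sqrt{C(F)}<\infty
\]
for all $t<T$. Hence the norm cannot tend to infinity at a finite $T$, and therefore $T=\infty$.

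Uniqueness on $[0,\infty)$ follows from the uniqueness in Proposition \ref{Prop: Existence solution} applied on every finite interval. Alternatively, it follows from a Gronwall argument on the difference of two Duhamel representations, using the Lipschitz bound $\Vert F(\varphi)-F(\psi)\Vert_{2}\leq L_{\phi}\Vert W\Vert_{2}\Vert\varphi-\psi\Vert_{2}$ established above.

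The only delicate point is applying Lemma \ref{Lemma} on $[0,t]$ for $t<T$, where the solution is continuous with values in $L^{2}$. The lemma is stated on a closed interval, so we apply it on $[0,t]$ for each $t<T$, where $\Psi\in C([0,t],L^{2}(\mathbb{Z}_{p}))$. We must also check that $\Vert\phi(\Psi(\cdot,t))\Vert_{2}\leq\Vert\phi\Vert_{\infty}$. This uses the fact that the Haar measure of $\mathbb{Z}_{p}$ is normalized to $1$. It also uses the convention $\phi(a+ib)=\phi(a)+i\phi(b)$, which gives $|\phi(\Psi)|\leq\sqrt{2}\Vert\phi\Vert_{\infty}$. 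The extra factor $\sqrt{2}$ changes only the constant.
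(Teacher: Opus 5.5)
Your proof is correct and follows the paper's argument. Both exclude the blow-up alternative of Proposition \ref{Prop: Existence solution} by inserting the time-uniform bound $\Vert F(\Psi(\cdot,s))\Vert_2\le\sqrt{C(F)}$ (from the proof of Lemma \ref{Lemma}) into the Duhamel formula (\ref{Duhamel-Form}). The differences are cosmetic: you use unitarity of the free propagator to get the linear bound $\Vert\psi_0\Vert_2+t\sqrt{C(F)}$, whereas the paper uses the cruder exponential bound $e^{t\Vert\boldsymbol{H}_0\Vert}$, which needs only boundedness of $\boldsymbol{H}_0$; and you correctly keep the factor $\sqrt{2}$ coming from $\phi(a+ib)=\phi(a)+i\phi(b)$, which the paper's estimate drops.
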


\begin{proof}
First, by Lemma \ref{Lemma} and \cite[Proposition 1.4.14]{Cazaneve-Haraux},
\[
\Vert\int_{0}^{t}e^{i(t-s)\boldsymbol{H}_{0}}F(\Psi(x,s))ds\Vert_{2}\leq
\int_{0}^{t}\left\Vert e^{i(t-s)\boldsymbol{H}_{0}}F(\Psi(x,s))\right\Vert
_{2}ds.
\]
Now, by Proposition \ref{Prop: Existence solution}, $T(\psi_{0})=\infty$ or
$T(\psi_{0})<\infty$ and $\lim_{t\rightarrow T}\left\Vert \Psi\left(
\cdot,t\right)  \right\Vert _{2}=\infty$. By Lemma \ref{Lemma} and the
fact that $\Vert e^{it\boldsymbol{H}_{0}}\Vert=e^{\Vert\boldsymbol{H}_{0}%
\Vert}$, and (\ref{Duhamel-Form}), we obtain that
\begin{align*}
\Vert\Psi(x,t)\Vert_{2} &  \leq\Vert e^{it\boldsymbol{H}_{0}}\psi_{0}\Vert
_{2}+\Vert\int_{0}^{t}e^{i(t-s)\boldsymbol{H}_{0}}F(\Psi(x,s))ds\Vert_{2}\\
&  \leq e^{t\Vert\boldsymbol{H}_{0}\Vert}\Vert\psi_{0}\Vert_{2}+\int_{0}%
^{t}e^{(t-s)\Vert\boldsymbol{H}_{0}\Vert}\Vert F(\Psi(x,s))\Vert_{2}\text{
}ds\\
&  \leq e^{t\Vert\boldsymbol{H}_{0}\Vert}\Vert\psi_{0}\Vert+\sqrt{C(F)}\int_{0}%
^{t}e^{(t-s)\Vert\boldsymbol{H}_{0}\Vert}ds\\
&  =e^{t\Vert\boldsymbol{H}_{0}\Vert}\Vert\psi_{0}\Vert+\frac{\sqrt{C(F)}}%
{\Vert\boldsymbol{H}_{0}\Vert}(1-e^{-t\Vert\boldsymbol{H}_{0}\Vert}).
\end{align*}
Notice that the last inequality is finite for all $T$. This
implies that $T(\psi_{0})=\infty$.
\end{proof}

\end{document}